\documentclass[12pt,letterpaper]{article}
\usepackage{fullpage,indentfirst, rotating}
\usepackage{float}
\usepackage{color}

\usepackage{epsfig}
\usepackage{rotating}
\usepackage{subfigure}
\usepackage{amsfonts}
\usepackage{latexsym}
\usepackage{amsmath}
\usepackage{amssymb}
\usepackage{amsthm}
\usepackage{amstext}
\usepackage[utf8]{inputenc}
\usepackage[english]{babel}
\usepackage{mathbbol}
\usepackage{bigints}
\usepackage{longtable,array}
\usepackage{longtable}
\usepackage{mathtools}
\usepackage{bbm}
\usepackage{bm}
\usepackage{booktabs}
\usepackage{tabularx}
\usepackage{setspace}
\usepackage{xcolor}
\usepackage{blindtext}
\usepackage{sectsty}
\usepackage{todonotes}
\usepackage{bigints}
\usepackage{multicol}
\usepackage{multirow}
\usepackage{comment}
\usepackage[flushleft]{threeparttable}
\usepackage{enumitem}
\usepackage{relsize}
\sectionfont{\centering}
\subsectionfont{\centering}

\usepackage[colorlinks=true, linkcolor=blue, citecolor=blue]{hyperref}
\usepackage{soul}
\usepackage[top=1 in, bottom=1 in, left=1 in , right=1 in]{geometry}
\usepackage[authoryear]{natbib}

\allowdisplaybreaks

\newcommand{\dt}{\delta}

\newcommand{\tht}{\theta}

\newcommand{\lt}{\left}
\newcommand{\rt}{\right}

\newcommand{\ben}{\begin{enumerate}}
\newcommand{\een}{\end{enumerate}}
\newcommand{\bit}{\begin{itemize}}
\newcommand{\eit}{\end{itemize}}

\newcommand{\indep}[0]{\ensuremath{\perp \! \! \! \perp}}

\newcommand{\eq}[1]{\begin{align}#1\end{align}}
\newcommand{\eqs}[1]{\begin{align*}#1\end{align*}}

\DeclareMathOperator*{\argmin}{arg min}

\providecommand{\keywords}[1]{\textit{Keywords:} #1}
\providecommand{\codes}[1]{\textit{JEL Classification:} #1}

\newtheorem{assum}{Assumption}
\newtheorem{cor}{Corollary}
\newtheorem{prop}{Proposition}
\newtheorem{theorem}{Theorem}
\newtheorem{lemma}{Lemma}
\theoremstyle{definition}
\newtheorem{remark}{Remark}

\newtheorem{exmp}{Example}

\title{Ranking Treatment Saturations under\\
    Clustered Network Interference\thanks{
Owusu and Shin gratefully acknowledge support from the Social Sciences and
Humanities Research Council of Canada (SSHRC 430-2025-01562 and
435-2021-0244).
}
}

\vspace{-1cm}
\author{
  Seungjin Han\thanks{Department of Economics, McMaster University, Email:
  \texttt{hansj@mcmaster.ca}}
\and 
  Julius Owusu\thanks{Department of Economics, Concordia University, Email:
  \texttt{julius.owusu@concordia.ca.}}
\and
 Youngki Shin\thanks{Department of Economics, McMaster University, Email:
  \texttt{shiny11@mcmaster.ca} }
}

\begin{document}

\onehalfspacing
\maketitle
{
\vspace{-1cm}
\begin{abstract}
\thispagestyle{empty}
In this paper, we study how to rank a finite set of treatment saturations for a target population with clustered network interference. We propose an empirical success (ES) ranking rule that, for each pair of saturations, selects the saturation level with the higher estimated welfare using data from a two-stage randomized saturation design.
We adopt the statistical decision theory framework with additively separable regret loss to assess the performance of the ES ranking rule.
We derive non-asymptotic upper bounds on the maximum regret of the ES ranking rule that depend on the within-cluster network only through a single combinatorial summary of its dependency structure. We exploit these bounds to characterize a quasi-optimal first-stage saturation distribution within the
two-stage randomized saturation design. We further show that the ES ranking rule is asymptotically optimal among threshold ranking rules in the sense of minimizing an upper bound on the worst-case regret.
\newline
\keywords{Statistical ranking rule, clustered network interference, finite action problems, minimax regret.}
 \newline
 \codes{C01, C44.}
\end{abstract}
}

\setlength{\parindent}{3ex}
\newpage

\setcounter{page}{1}

\section{Introduction}

We study how to rank a finite set of treatment saturations for a target population with clustered network interference, using data from a two-stage randomized saturation experiment. We propose an empirical success (ES) ranking rule that uses data from a two-stage randomized saturation experiment to rank saturation policies according to their estimated welfare,i.e.,  for each treatment saturation pair $(\pi_k,\pi_{k'})$, selects the saturation with the larger estimated welfare. We evaluate the performance of this rule within a statistical decision-theoretic framework based on additively separable regret loss. We derive finite-sample regret guarantees for the ES ranking rule, an upper-bound-minimizing two-stage experimental design, and local asymptotic admissibility and optimality results for the broader class of threshold ranking rules.

Saturation experiments are cluster-randomized designs that vary the fraction of treated individuals across clusters. They have become a canonical empirical tool for studying how intervention effects vary with treatment saturation in settings with clustered (network) interference. For example, \citet{crepon2013labor} randomized 235 French employment offices across saturation levels of 0, 25, 50, 75, and 100 percent to identify the displacement effects of job-search assistance on untreated workers. \citet{egger2022general} randomized 653 Kenyan villages across cash-transfer saturation levels to measure general-equilibrium effects on consumption and earnings, while \citet{banerjee2013diffusion} varied the intensity of information seeding across 43 Indian villages to study the diffusion of microfinance adoption. These experiments generate evidence on the welfare consequences of alternative saturation policies. The policy question that follows is straightforward but largely unresolved: given data from a saturation experiment, how should the competing saturation levels be ranked, and which saturation policy should be selected for deployment in a target population?

Selecting the best saturation level is the primary interest in most applications. However, the full ranking of the saturation menu is just as useful for policymakers in many practical settings. First, under a slowly relaxing budget, a phased rollout requires the next-best saturation when the optimum is infeasible, and the next-best after that. Second, ex-post feasibility shocks, such as a court ruling against universal coverage, a clinic capacity bottleneck, or a donor reallocating funds across regions, may rule out the realized best option and leave the planner needing the fallback ranking. Third, a defensible ordering across the menu is required to justify the deployed choice to parliament, donors, or regulators, and to pre-empt counterfactual challenges of the form ``why not the third option?''. Fourth, the gap between the top two ranked saturations is itself informative about whether the decision is sharp and whether further data collection is warranted. A ranking identifies the best saturation as a by-product, but selection alone does not produce a ranking.

When units interact within prespecified clusters, but not across clusters, clustered network interference---the phenomenon whereby a unit's outcome depends not only on its own treatment status but also on the treatment status of other units it interacts with in its cluster---is often a first-order feature of the environment. Vaccination programs generate herd-immunity benefits for untreated individuals that depend nonlinearly on the fraction vaccinated. Deworming children in one school reduces parasite transmission to nearby schools, so estimates that ignore spillovers understate the true policy benefit \citep{miguel2004worms}. Conditional cash-transfer programs can alter local prices, wages, and consumption patterns in ways that substantially exceed the direct effects on recipients \citep{egger2022general}. Likewise, information about new financial products diffuses through social networks at rates that depend on the intensity of treatment exposure \citep{banerjee2013diffusion}. As emphasized by \citet{deaton2018understanding} and \citet{duflo2004scaling}, evidence from small-scale randomized trials may therefore fail to predict the consequences of large-scale implementation.
The following features of clustered interference make the ranking problem difficult. First, each unit's outcome depends on the entire treatment vector inside its cluster, so observations cannot be treated as independent as in the standard \citet{manski2004statistical} framework. Second, the social network that carries the spillovers is rarely observed, especially as cluster size grows, which makes parametric modeling of the spillover mechanism challenging. Third, in some designs, the experimenter assigns saturations to clusters without replacement, which couples the outcomes of every pair of clusters in the sample. The joint sampling distribution of the welfare estimates is therefore intractable.

To address these challenges, this paper develops a tractable approach to ranking treatment saturation policies in populations with clustered network interference. We formulate the problem within a statistical decision-theoretic framework and derive finite-sample guarantees for the resulting ranking rule.  We make three contributions. First, we derive a finite-sample upper bound on the maximum regret of the ES ranking rule. The bound decays exponentially in the pairwise welfare gaps and depends on the within-cluster network only through a single combinatorial summary of the dependency structure. Second, we use this bound to characterize a quasi-optimal first-stage saturation distribution within the two-stage randomized saturation design of \citet{baird2018optimal}, replacing their power-maximizing criterion with a regret-minimizing one. Third, we show that the ES ranking rule is asymptotically optimal in a Gaussian shift limit experiment under a rank-one structural condition, in the sense that it minimizes an upper bound on the worst-case regret within the asymptotically admissible class of threshold ranking rules.

Our analysis combines two existing technical tools tailored to the structure of this ranking problem. First, to
handle the dependence among individual outcomes generated by an unobserved within-cluster network, we apply
\citet{janson2004large}'s inequality for graph-dependent random variables. This yields exponential concentration bounds
for each pairwise welfare contrast and a finite-sample
upper bound on the maximum regret of the ES ranking rule that depends on the within-cluster network only through a single combinatorial summary of the dependency structure. Minimizing the bound with respect to the first-stage saturation distribution yields the quasi-optimal design within the two-stage randomized saturation design framework of \citet{baird2018optimal}. Second, for the asymptotic analysis, we embed the ranking problem in a Gaussian shift limit experiment in the spirit of \citet{hirano2009asymptotics}. In this limit experiment, threshold ranking rules correspond exactly to single-step procedures in the one-sided multiple-comparison literature, which lets us apply Theorem 4.1 of \citet{cohen2005decision} directly to obtain admissibility of the threshold class of ranking rules.
Under an additional rank-one structural condition, the minimization of an upper bound on the worst-case regret decouples across saturations and is attained at the limit version of the ES ranking rule.

The remainder of this paper is organized as follows. We conclude this section with a review of the related literature. Section~\ref{sec:framework} introduces the framework and notation. Section~\ref{sec:rp} formalizes the ranking problem and assesses the finite-sample performance of the ES ranking rule. Section~\ref{Quasi-optimal Experiment Design} characterizes the quasi-optimal treatment-assignment mechanism. Section~\ref{sec:optimality} establishes the asymptotic admissibility and optimality of threshold ranking rules. Section~\ref{sec:simulations} reports Monte Carlo simulations that illustrate the finite-sample bounds and confirm the result of the quasi-optimal design. Section~\ref{sec:conclusion} concludes. All proofs and extensions are collected in the supplementary material.

\subsection{Related Literature}

This paper is closely related to the decision-theoretic literature on treatment choice. \citet{manski2004statistical} studies the finite-sample behavior of ES ranking rules under additively separable regret loss, while \citet{hirano2009asymptotics} establishes their asymptotic optimality using Le Cam's local-asymptotic-theory machinery. Furthermore, \citet{stoye2009minimax} characterizes the minimax regret choices under finite samples, \citet{tetenov2012statistical} analyzes an asymmetric minimax regret, and \citet{manski2016sufficient} apply the framework to determine sufficient trial size for clinical practice. Subsequent work, including \citet{kitagawa2018should}, \citet{athey2021policy}, and \citet{mbakop2021model}, develops empirical welfare maximization (EWM) procedures that select treatment rules to maximize estimated welfare. More recently, \citet{masten2023minimax} and \citet{chen2025note} develop minimax-regret frameworks for selecting among several treatment options. All of these papers assume individualistic responses, i.e., no interference. We differ by ranking rather than selecting and by allowing clustered interference, which precludes the direct extension of the matched-balanced and random-assignment experimental designs of \citet{chen2025note}.

A recent literature extends statistical decision theory to settings with interference. \citet{zhang2023individualized}, \citet{park2024minimum}, and \citet{viviano2025policy} develop network-based EWM rules that use covariates to guide individualized treatment assignment under various network structures. We address a complementary problem: choosing a single \emph{population-level} saturation from a discrete menu without network information, not an individualized assignment rule.

On the design side, we are closest to \citet{baird2018optimal} and \citet{viviano2022policy}. \citet{baird2018optimal} derive the two-stage randomized saturation design framework we adopt, but motivate it by maximizing statistical power for hypothesis tests; we replace their power criterion with a regret-minimizing one, holding fixed the two-stage cluster-randomization structure. \citet{viviano2022policy} proposes a single-wave design for hypothesis testing and a multi-wave adaptive design for learning optimal policies in a parametric class. Our one-shot regret-minimizing design is methodologically distinct from both.

Finally, a parallel literature in econometrics develops \emph{inference} for ranks and winners rather than decision rules that produce them. \citet{mogstad2024inference} and \citet{bazylik2025finite} construct confidence sets for the rank of a fixed population (e.g., neighborhoods by intergenerational mobility; parties by vote share). \citet{andrews2024inference} corrects the winner's curse in post-selection inference for the best-performing treatment in a multi-arm trial. \citet{gu2023invidious} develops an empirical-Bayes compound-decision approach to ranking populations by latent quality. These papers infer fixed rank objects under independent populations; we instead derive a \emph{decision rule} that outputs a ranking of treatment saturations under a designed clustered-interference experiment, with finite-sample regret guarantees and a corresponding quasi-optimal design.

\section{Framework}\label{sec:framework}
\subsection{Notation and Target Welfare}
We consider a setting in which observed units in the target population can be partitioned into clusters, such as classrooms, villages, or states. Let $n_i< \infty$ denote the number of units in cluster $i\in \mathcal{I}$, where $\mathcal{I}$ represents a super population of clusters.  For unit $j $ in cluster $i,$ let $Y_{ij} \in \mathcal{Y}\subset \mathbbm{R}$ represent the observed outcome and $Z_{ij} \in \{0,1\}$ denote the binary treatment assigned to this unit.
We use  $\mathbf{Y}_i= (Y_{i1}, \dots, Y_{in_i})^\top$ and $\mathbf{Z}_i= (Z_{i1}, \dots, Z_{in_i})^\top$ to denote the cluster-level vectors of outcomes and treatments, respectively.

To model the interference structure, we assume that interactions among units occur exclusively within clusters and are mediated by an unobserved within-cluster network. Specifically, each cluster $i$ has an undirected graph $G_i \in \mathcal{G}$ with no self-loops, where the vertices correspond to individuals and the edges capture channels of interaction. The full population network is the disjoint union $G:= \cup_{i \in \mathcal{I}} G_i$, which we interpret as a countably infinite union of internally connected but mutually disconnected subgraphs. This setting defines a clustered network interference structure, where spillovers are localized within clusters and shaped by the topology of $G_i$.

To formalize the causal framework in this clustered networked environment, we adopt the potential outcomes framework of \citet{neyman1923applications} and \citet{rubin1974estimating}. We allow for unrestricted interference within clusters by modeling each unit's potential outcome as a function of the treatment assignments of all individuals in its cluster. Specifically, for each unit $j$ in cluster $i$, the potential outcome is represented by the response function
\[
Y_{ij} : \{0,1\}^{n_i} \to \mathcal{Y} = [0,1],
\]
which maps the treatment vector $\mathbf{z}_i = (z_{i1}, \dots, z_{in_i})^\top
\in \{0,1\}^{n_i}$ in cluster $i$ to the potential outcome
$Y_{ij}(\mathbf{z}_i)$. The observed outcome for unit $j$ in cluster $i$ is then
given by $Y_{ij} = Y_{ij}(\mathbf{Z}_i)$.  Note that the assumption of bounded outcomes is essential for establishing the finite-sample results, but is not required for the asymptotic analysis.

This formulation is fully nonparametric: it permits each unit's potential outcome to depend arbitrarily on the treatment assignments of all other units in the same cluster, imposing no structural restrictions on the form of interference.

Following the population perspective of \citet{manski2004statistical}, we treat potential outcomes as fixed population quantities. Uncertainty enters the analysis solely through the treatment assignment mechanism and the random sampling of clusters from the super-population $\mathcal{I}$.

Let $\pi \in [0,1]$ denote the treatment saturation of a cluster.  Then, for unit $j$ in cluster $i$ assigned treatment saturation $\pi$, the \textit{individual average potential outcome} is
\begin{align}
\bar{Y}_{ij}(\pi)
:=
\sum_{\mathbf{z}_i \in \{0,1\}^{n_i}}
Y_{ij}(\mathbf{z}_i)\,\gamma(\pi,\mathbf{z}_i),
\label{eq:indiv-avg-po}
\end{align}
where $\gamma(\pi,\mathbf{z}_i)$ denotes the probability mass function (pmf) of the
complete-randomization assignment design with treatment saturation $\pi$, given by
$$
\gamma(\pi,\mathbf{z}_i)
= {n_i\choose n_i\pi}^{-1}\cdot\mathbbm{1}\left\{\sum_{j=1}^{n_i} z_{ij}=n_i \pi\right\}.$$
%whenever $n_i\pi$ is an integer. 
We adopt the following joint admissibility condition on the cluster sizes
$\mathcal{N}=\{n_i\}_{i\in\mathcal{I}}$ and the saturation set $\boldsymbol{\Pi}$.

\begin{assum}[Integrality of $n_i\pi$]\label{assump:integrality}
For every cluster $i\in\mathcal{I}$ and every $\pi\in\boldsymbol{\Pi}$, the product $n_i\pi$ is a non-negative integer.
\end{assum}
Assumption~\ref{assump:integrality} is imposed only for the exact-count complete-randomization design used in the main text.
It restricts the joint admissibility of cluster sizes and saturation levels so that
within-cluster complete randomization at every $\pi\in\boldsymbol{\Pi}$ is well-defined as an exact-count design. When $n_i\pi$ is not an integer, complete randomization at saturation $\pi$ may be defined via a convex combination of designs with $\lfloor n_i\pi\rfloor$ and $\lceil n_i\pi\rceil$ treated units; we omit these cases to maintain focus on the paper's primary objective of ranking treatment saturations. 

The pmf $\gamma(\pi,\mathbf z_i)$ specifies the distribution of treatment assignments under a \emph{hypothetical} complete-randomization policy that treats exactly a fraction $\pi$ of units in cluster $i$. This policy represents the treatment assignment mechanism that the planner intends to deploy in the target population. Consequently, $\gamma(\pi,\mathbf z_i)$ is used to average potential outcomes and define the welfare associated with saturation level $\pi$. Importantly, it need not coincide with the assignment mechanism that generated the observed experimental data. Thus,
$\bar Y_{ij}(\pi)$ represents the expected outcome of unit $j$ in cluster $i$ under a policy that assigns exactly a fraction $\pi$ of units in the cluster to treatment, regardless of how treatment was assigned in the experiment used to estimate this quantity \citep{hudgens2008toward}.

Aggregating the individual average potential outcome across units within a cluster, the cluster-level average potential outcome at saturation $\pi$ is given by
\[
\bar{Y}_i(\pi)
=
\frac{1}{n_i}\sum_{j=1}^{n_i}\bar{Y}_{ij}(\pi),
\qquad \text{for all } i.
\]
Finally, we define the population-level mean potential outcome at saturation $\pi$ as
\[
\bar{Y}(\pi)
\;:=\;
\mathbbm{E}_{\theta}\!\left[\bar{Y}_i(\pi)\right],
\]
where the expectation is taken with respect to the distribution of cluster-level
average potential outcomes, denoted by $F_\pi(\cdot,\theta)$, which belongs to a family of distributions indexed by a common parameter $\theta \in \Theta$. We assume that, for any $\pi \neq \pi' \in [0,1]$, the distributions $F_\pi(\cdot,\theta)$ and $F_{\pi'}(\cdot,\theta)$ are indexed by the same parameter space $\Theta$, so that differences in $\bar{Y}(\pi)$ across saturation levels reflect changes in the treatment policy rather than changes in the underlying population.

 We adopt a utilitarian welfare paradigm, under which social welfare is measured
 by the equally weighted expected outcome of the population, with higher values indicating greater welfare. Therefore, for a given treatment saturation level $\pi$ and parameter value $\theta\in\Theta$, potential welfare is defined as  $$U(\pi,\theta):=\bar{Y}(\pi).$$

The choice to weight clusters equally rather than units equally is a normative commitment. Our estimand $U(\pi,\theta) = \mathbbm{E}_\theta[\bar Y_i(\pi)]$ averages cluster-level mean potential outcomes uniformly across the super-population of clusters and corresponds to a one-cluster-one-vote planner. A unit-weighted alternative, $U^{\mathrm{unit}}(\pi, \theta) = \mathbbm{E}_\theta[n_i \bar Y_i(\pi)] / \mathbbm{E}_\theta[n_i]$, would be appropriate for a planner
who values each individual equally. We use the cluster-level target because the sampling and saturation-assignment units are clusters, matching the estimand to the design; this is also the convention in the saturation-experiment literature \citep{hudgens2008toward,liu2014large,baird2018optimal}. In simulations with equal cluster sizes, the cluster-weighted and unit-weighted welfare targets coincide.

\begin{remark}
We have defined the target welfare as the unconditional population-level mean potential outcome $U(\pi,\theta)=\mathbbm{E}_\theta[\bar Y_i(\pi)]$, averaging over
the super-population of clusters without conditioning on any cluster- or individual-level characteristics. This is without loss for the ranking problem when the planner's objective does not distinguish clusters by observed features.
When it does---for example, if welfare is to be evaluated separately by region or a baseline covariate $x$---the framework extends directly by replacing the scalar target with the conditional welfare $U(\pi,\theta; x)
=\mathbbm{E}_\theta[\bar Y_i(\pi)\mid X_i=x]$. The planner then ranks saturations within each covariate cell, so the welfare object becomes a profile $\{U(\pi,\theta\mid x)\}_{x\in \mathcal{X}}$ indexed by the covariate values, and the ES ranking rule, the finite-sample bounds, and the design results of Sections~\ref{sec:rp}--\ref{Quasi-optimal Experiment Design} apply cell by cell. 
%A full treatment of covariate-adaptive ranking and the associated optimal design is left for future work, as noted in Section~\ref{sec:conclusion}. 
\qedsymbol
\end{remark}

Next, we turn to the problem of learning about $U(\pi, \theta)$ from data. We assume the planner has access to experimental data from a random sample of $C$ clusters from the target population; the specific experimental design is described in the next subsection. In many settings, the set of allowable treatment saturations may be exogenously constrained by ethical, budgetary, equity-based, legislative, or political considerations. To reflect these constraints, we assume the following about the support of the treatment saturation.

\begin{assum}[Discrete Choice Set]\label{assump:discrete Pi}
The support of the treatment saturation $\pi$ is a discrete set denoted by $\boldsymbol{\Pi} = \{\pi_1, \dots, \pi_K\}$, where $K < \infty$.
\end{assum}

Assumption \ref{assump:discrete Pi} restricts the number of possible treatment saturations under evaluation. The observed experimental data consist of units nested within sampled clusters, where each cluster is assigned to one of $K$ treatment saturation levels.

\subsection{Randomized Saturation Experiments and Estimation}
We now describe the two-stage randomized saturation design commonly used in
settings where treatment may spill over within clusters \citep[see, e.g.,][]{hudgens2008toward, tchetgen2012causal, liu2014large, crepon2013labor, baird2018optimal}. Recall that $C$ clusters are randomly sampled from the infinite population of clusters.

In the first stage, each of the \(C\) clusters is randomly assigned to one of \(K\) treatment strategies, denoted by \(\psi_1,\ldots,\psi_K\), where each \(\psi_k\) specifies the individual-level treatment assignment mechanism within a cluster and corresponds to a distinct treatment saturation level. For example, \(\psi_1\) may correspond to assigning one-third of individuals in a cluster to treatment, while \(\psi_2\) may correspond to assigning two-thirds.

Let \(\mathbf{S} = (S_1,\ldots, S_C)\) denote the resulting vector of first-stage cluster-level assignments, where \(S_i = \pi_k\) indicates that cluster \(i\) is assigned to treatment strategy \(\psi_k\). The randomization distribution of \(\mathbf{S}\) is governed by a parameter \(\nu\), which indexes the first-stage assignment mechanism. When \(\nu\) corresponds to a \textit{Bernoulli design}, clusters are independently assigned to strategies according to prespecified probabilities. When \(\nu\) corresponds to \textit{complete randomization} (CR, hereafter), a fixed number of clusters are assigned to each strategy, and each assignment vector \(\mathbf{S}\) satisfying these counts is equally likely.

In the second stage, treatment is assigned to individuals within each cluster according to the first-stage assignment \(S_i\). Conditional on \(S_i = \pi_k\), cluster \(i\) follows the treatment strategy \(\psi_k\), which specifies a target treatment saturation level \(\pi_k\). Under \(\psi_k\), individual treatment may be assigned via a \textit{Bernoulli design}, in which each unit is independently treated with probability \(\pi_k\), or via \textit{complete randomization}, in which exactly \(\pi_k\cdot n_i\) of the \(n_i\) individuals in cluster \(i\) are assigned to treatment, with all such assignments equally likely.

Following \cite{hudgens2008toward}, we focus on the \textit{two-stage complete randomization} assignment mechanism summarized in the following assumption. In Section \ref{ext: bernoulli} of the appendix, we derive the finite-sample results that follow under the alternative \textit{two-stage Bernoulli design}.

\begin{assum}[Two-stage complete randomization design]\label{structure of design}
The first-stage cluster assignment mechanism indexed by $\nu$ is a complete randomization design. Conditional on the first-stage assignment, each second-stage treatment strategy $\psi_k$, for $k=1,\ldots,K$, is implemented via complete randomization within clusters.
\end{assum}
Assumption~\ref{structure of design} allows the cluster and treatment assignments to be dependent across clusters and across units. This negative dependence comes from the assignment design itself, rather than from the network structure $G$. Under the two-stage complete randomization design in Assumption~\ref{structure of design}, each unit's realized outcome is correlated with that of all other
units in the sample, because complete randomization constrains the number of treated units to meet the saturation level $\pi_k$ similar to sampling without replacement. From a graph-theoretic perspective, the two-stage complete randomization induces a \emph{complete dependency graph} over the realized outcomes, distinct from the underlying network structure $G$. Example~\ref{ex:complete dependency} illustrates the outcome dependencies induced by the experimental design.

\begin{exmp} \label{ex:complete dependency}
Consider cluster $i$, which includes only two individuals $j=1,2$. To isolate the source of dependence, suppose there is
no interference, so each unit's potential outcome depends only on its own treatment. When $\pi_k = 1/2$, complete
randomization permits only two assignments, $(Z_{i1},Z_{i2})=(1,0)$ or $(0,1)$, i.e.\ $Z_{i2}=1-Z_{i1}$. Therefore,
$$
Y_{i2} = Y_{i2}(1) Z_{i2} + Y_{i2}(0) (1-Z_{i2}) = Y_{i2}(1) (1-Z_{i1}) + Y_{i2}(0) Z_{i1}.
$$
Although the potential-outcome schedule is a fixed population quantity, the realized outcome $Y_{i2}$ depends on the realized
assignment $Z_{i1}$ through the design constraint $Z_{i2}=1-Z_{i1}$. Similarly, dependencies among clusters arise from the
first-stage complete randomization design, which assigns clusters to treatment saturations without replacement. $\Box$
\end{exmp}

Let the total sample size be $n = \sum_{i=1}^C n_i$. Then, based on Assumptions~\ref{assump:discrete Pi}  and \ref{structure of design}, we define the sample space as
\[
\Omega :=
\left(
\{0,1\} \times [0,1] \times \{\pi_1,\pi_2,\dots,\pi_K\}\times \mathbbm{Z}_{>0}
\right)^n,
\]
where each observation records the treatment indicator $Z_{ij} \in \{0,1\}$, outcome $Y_{ij} \in [0,1]$, saturation level $S_i \in \{\pi_1,\pi_2,\dots,\pi_K\}$, and cluster size $n_i\in \mathbbm{Z}_{>0}$. Recall that the within-cluster network structure $G_i$ is unobserved.

Following \citet{hudgens2008toward}, under the two-stage complete randomization design of Assumption~\ref{structure of design},
an unbiased estimator of the cluster-level average potential outcome $\bar{Y}_i(\pi_k)$, conditional on $S_i = \pi_k$, for
all \( k = 1, \dots, K \), is the simple cluster mean
$$\widehat{\bar{Y}}_i(\pi_k) = \frac{1}{n_i}\sum_{j=1}^{n_i}Y_{ij}.$$
Unbiasedness follows from the same randomization argument as in \citet{hudgens2008toward}: conditional on $S_i = \pi_k$, the
within-cluster assignment law is exactly the complete-randomization pmf $\gamma(\pi_k, \cdot)$ used to define
$\bar Y_{ij}(\pi_k)$ in~\eqref{eq:indiv-avg-po}, so the simple cluster mean targets the CR-mixture welfare
$\bar Y_i(\pi_k)$. Under the alternative two-stage Bernoulli design, the conditional law is the product Bernoulli pmf
$\beta(\pi_k, \cdot)$ rather than $\gamma(\pi_k, \cdot)$, so an inverse-probability-weighting adjustment is needed to recover
the same CR-mixture target; see Section~\ref{ext: bernoulli} of the supplementary material.

The corresponding estimator of the population-level mean potential outcome at saturation $\pi_k$ is
\begin{align}\label{eq:estimated welfare}
    \widehat{U}(\pi_k, \theta)= \widehat{\bar{Y}}(\pi_k) = \frac{\sum_{i=1}^C \widehat{\bar{Y}}_i(\pi_k) \mathbbm{1}(S_i = \pi_k)}{\sum_{i=1}^C \mathbbm{1}(S_i = \pi_k)}=\frac{1}{C_k}\sum_{i=1}^C \widehat{\bar{Y}}_i(\pi_k) \mathbbm{1}(S_i = \pi_k),
\end{align}
where $C_k$ denotes the number of clusters assigned to saturation $\pi_k$.

\section{The Ranking Problem}\label{sec:rp}
In this section, we describe the problem of ranking treatment saturations. If Assumption \ref{assump:discrete Pi} holds, then the policymaker has a prespecified discrete set of treatment saturations to rank. Thus, we study the \textit{statistical ranking rules} designed to compare the welfare at a finite number of treatment saturation levels. Formally, a ranking rule $\delta$ denotes the vector collecting pairwise rules $\delta^{k,k'}$ with $k<k'=1,\dots, K$, i.e.,
\[
\delta = \big( \delta^{k,k'} \big)_{k,k' \in \{1,\dots,K\}}, \quad \text{where} \quad \delta^{k,k'} : \Omega \rightarrow \{0,1\}, \quad \text{for all } k,k' \in \{1,\dots,K\},\, k<k'.
\]
Each component $\delta^{k,k'}$ encodes the rule's decision as to whether treatment saturation level $\pi_k$ is preferred to $\pi_{k'}$ based on the observed sample. Hence, the vector rule $\delta$ maps the observed data to a binary vector of length $T = K(K - 1)/2$, indicating the direction of all possible pairwise comparisons between treatment saturations. For example, consider the case where $K = 3$, so that $T = 3$. For $\omega\in\Omega$, a vector rule of $\delta(\omega) = \left(\delta^{1,2}, \delta^{1,3}, \delta^{2,3}\right) = (1, 1, 1)$ implies that the rule ranks saturation $\pi_1$ higher than both $\pi_2$ and $\pi_3$, and ranks $\pi_2$ higher than $\pi_3$.

Note that for each pair $(k,k')$ with $k,k' \in \{1,\dots,K\}$ and $k < k'$, $\delta^{k,k'}$ is a nonrandomized rule, thus the set $\{\delta^{k,k'}(\cdot):k,k' \in \{1,\dots,K\}, k<k'\}$ uniquely determines $\delta(\cdot)$ which is also a nonrandomized rule on the space $\{0,1\}^T$ \citep[p.131]{cohen2005decision}.

The infeasible optimal ranking rule for a given  \( \theta \in \Theta \), referred to as the \textit{oracle rule}, ranks all treatment saturation levels in \( \boldsymbol{\Pi} \) according to their true welfare values. Formally, the oracle rule is a nonrandomized rule defined as the vector of pairwise comparisons:
\begin{align}
\delta_{\mathrm{oracle}}(\omega) := \left( \delta^{k,k'}_{\mathrm{oracle}}(\omega)\right)_{k,k' \in \{1,\dots,K\}, k<k'}, \quad \text{with} \quad
\delta^{k,k'}_{\mathrm{oracle}}(\omega) := \mathbbm{1}\left( U(\pi_k, \theta) \geq U(\pi_{k'}, \theta) \right). \label{eq:oracle}
\end{align}
Thus, for each pair \( (\pi_k, \pi_{k'}) \in \boldsymbol{\Pi} \), the oracle rule prefers \( \pi_k \) over \( \pi_{k'} \) whenever the associated welfare under \( \theta \) is weakly greater. This rule, however, is infeasible in practice because it depends on the unknown welfare function \(U(\cdot, \theta) \).

To obtain a feasible alternative to the oracle rule, we replace the unknown population welfare parameters in the oracle ranking rule with their unbiased estimators \(\widehat{U}(\pi, \theta)\) defined in \eqref{eq:estimated welfare}. The resulting rule, the \emph{empirical success (ES) ranking rule}, is defined as the vector of pairwise comparisons:
\begin{align}\label{estimator::rule}
\delta_{\mathrm{ES}}(\omega) := \left( \delta^{k,k'}_{\mathrm{ES}}(\omega)\right)_{k,k' \in \{1,\dots,K\},k<k'}, \quad  \quad
\delta^{k,k'}_{\mathrm{ES}}(\omega) := \mathbbm{1}\left( \widehat{U}(\pi_k, \theta) \geq \widehat{U}(\pi_{k'}, \theta) \right).
\end{align}

Therefore, for any pair \( (\pi_k, \pi_{k'}) \), the ES ranking rule ranks \( \pi_k \) weakly above \( \pi_{k'} \) whenever its estimated welfare is greater than or equal to that of \( \pi_{k'} \). This plug-in approach yields a practical rule that mimics the oracle ranking using observable data.  Note that we adopt a tie-breaking convention under which the pairwise ES ranking rules are nonrandomized.\footnote{Using an alternative tie-breaking convention that randomly allocates some individuals to one treatment saturation and the rest to the other would not affect the subsequent analysis.}

\subsection{Finite-sample Performance of the ES Ranking Rule} \label{sec:multinomial}
In this subsection, we evaluate the finite-sample performance of the proposed ES ranking rule in the target population. Our analysis follows the statistical decision theoretic framework developed by \citet{wald1950statistical}.

We model loss as additively separable across the \( T \) pairwise comparisons. Specifically, we adopt an \textit{additively separable regret loss} function defined as
\begin{align}\label{loss function}
  L(\delta_{\mathrm{ES}}, \theta) := \sum_{k=1}^{K} \sum_{k' > k} \left\{
  \max\left[ U(\pi_k, \theta), U(\pi_{k'}, \theta) \right] - U\left(\pi(\delta_{\mathrm{ES}}^{k,k'}), \theta \right) \right\},
\end{align}
where \(\pi(\delta_{\mathrm{ES}}^{k,k'}):=\pi_k  \cdot \delta_{\mathrm{ES}}^{k,k'} +
\pi_{k'} \cdot (1-\delta_{\mathrm{ES}}^{k,k'}) \).  Note that \( U(\pi(\delta_{\mathrm{ES}}^{k,k'}), \theta) \) denotes the welfare associated with the treatment saturation selected by the pairwise ES ranking rule \( \delta_{\mathrm{ES}}^{k,k'} \) for the \( (k,k') \) comparison. Specifically, \(U(\pi(\delta_{\mathrm{ES}}^{k,k'}), \theta)= U(\pi_k, \theta) \) if \( \delta_{\mathrm{ES}}^{k,k'} = 1 \), and \( U(\pi_{k'}, \theta) \) otherwise. This loss function naturally reflects the structure of the treatment ranking problem, where decisions are made through repeated binary comparisons. Unlike the hypothesis testing loss function in \cite{lehmann1957theory} and \cite{cohen2005decision}, which penalizes differently for two types of errors, the loss function here directly penalizes the welfare losses arising from incorrect pairwise rankings.

Based on the loss function in \eqref{loss function}, for any $\theta \in \Theta$, the risk of the empirical success rule $\delta_{\mathrm{ES}}$ is given by
\small
\begin{align*}
R(\delta_{\mathrm{ES}}, \theta)
={}& \mathbbm{E}[L(\delta_{\mathrm{ES}}, \theta) ] \\
={}& \sum_{k=1}^{K} \sum_{k' > k} \Bigl\{ \max\bigl[ U(\pi_k, \theta), U(\pi_{k'}, \theta) \bigr] - \mathbbm{E}\bigl[U\bigl(\pi(\delta_{\mathrm{ES}}^{k,k'}), \theta \bigr)\bigr] \Bigr\}\\
={}& \sum_{k=1}^{K} \sum_{k' > k} \Bigl[ U(\pi_k, \theta) \cdot \mathbbm{1}\bigl\{ U(\pi_k, \theta) \geq U(\pi_{k'}, \theta) \bigr\} + U(\pi_{k'}, \theta) \cdot \mathbbm{1}\bigl\{ U(\pi_k, \theta) < U(\pi_{k'}, \theta) \bigr\} \Bigr] \\
& - \sum_{k=1}^{K} \sum_{k' > k} \Bigl[ U(\pi_k, \theta) \cdot \Pr\bigl( \delta_{\mathrm{ES}}^{k,k'} = 1 \bigr) + U(\pi_{k'}, \theta) \cdot \Pr\bigl( \delta_{\mathrm{ES}}^{k,k'} = 0 \bigr) \Bigr] \\
={}& \sum_{k=1}^{K} \sum_{k' > k} \Bigl[ U(\pi_k, \theta) \cdot \mathbbm{1}\bigl\{ U(\pi_k, \theta) \geq U(\pi_{k'}, \theta) \bigr\} + U(\pi_{k'}, \theta) \cdot \mathbbm{1}\bigl\{ U(\pi_k, \theta) < U(\pi_{k'}, \theta) \bigr\} \Bigr] \\
& - \sum_{k=1}^{K} \sum_{k' > k} \Bigl[ U(\pi_k, \theta) \cdot \Pr\bigl( \widehat{U}(\pi_k, \theta) \geq \widehat{U}(\pi_{k'}, \theta) \bigr) + U(\pi_{k'}, \theta) \cdot \Pr\bigl( \widehat{U}(\pi_k, \theta) < \widehat{U}(\pi_{k'}, \theta) \bigr) \Bigr],
\end{align*}
\normalsize
The risk function measures the expected welfare loss from using the ES ranking rule instead of the oracle ranking rule. For each pair of treatment saturations $(\pi_k,\pi_{k'})$, the risk reflects the discrepancy between the welfare under the optimal (oracle) decision and that under the ES ranking rule, averaged over the sampling distribution of the data. Specifically, the first term in the last equality represents the welfare attained by always choosing the better of any two saturations based on true welfare, while the second term represents the expected welfare attained by choosing based on estimated welfare. The difference between these two quantities aggregates the pairwise regret across all treatment comparisons and provides a finite-sample measure of the ES ranking rule's performance.

However, the risk function \( R(\delta_{\mathrm{ES}}, \theta) \) is analytically intractable, as its evaluation requires full knowledge of the joint sampling distribution of the estimated welfare values \( \widehat{U}(\pi_k, \theta) \) and \( \widehat{U}(\pi_{k'}, \theta) \) across all \( (k, k') \) pairs.\footnote{Even under standard outcome models such as the normal or Bernoulli distribution, the derivation of finite-sample minimax-regret rules, as pursued in \citet{tetenov2012statistical}, remains analytically intractable in the presence of dependencies. The key challenge is that, due to outcome dependencies, the estimated welfare fails to serve as a sufficient statistic for the underlying welfare parameter.} To overcome this challenge, we follow the approach of \citet{manski2004statistical} and derive a finite-sample upper bound on the risk. Our analysis uses the concentration inequality of \cite{janson2004large} for dependent random variables described by dependency graphs.

To state the bounds, we introduce notation for the dependency structure. For each pair $(k, k')$ with $k < k'$, let $G_{kk'}$ denote the unit-level dependency graph: its vertices are the units in sampled clusters assigned to $\pi_k$ or $\pi_{k'}$, and two units are joined by an edge when their realized outcomes are
stochastically dependent. Similarly, let $G^{\text{cls}}_{kk'}$ denote the cluster-level dependency graph, whose vertices are the sampled clusters assigned to $\pi_k$ or $\pi_{k'}$ and whose edges join clusters with dependent estimated cluster-level average potential outcomes. For any graph $G$, let $\chi_f(G)$ denote its fractional chromatic number. Intuitively, $\chi_f(G)$ quantifies the complexity of the dependency structure: it equals $1$ when all variables are independent (the graph has no edges), and grows to the number of vertices when all variables are mutually dependent (the graph is complete). In the concentration inequality underlying the bounds below, $\chi_f(G)$ enters the exponent as a multiplicative penalty on the scale term, much as the design effect inflates the variance of a cluster-sample estimator relative to simple random sampling. A larger $\chi_f$ thus reflects a more entangled dependence pattern that weakens concentration and inflates the risk bound. %Finally, define the unit-level scale factor $A_{k}\coloneqq \sum_{r=1}^{C_k} C_k^{-2} n_{\ell_r}^{-1}$, where $n_{\ell_r}$ is the size of the $r$-th cluster assigned to $\pi_k$.

We collect the finite-sample risk bounds in a single theorem with three parts. Part~(i) is the unit-level bound under
bounded cluster sizes, part~(ii) is the unit-level bound under the stronger equal-size restriction, and part~(iii) is the
cluster-level bound, which holds with no restriction on cluster sizes. We first state the two cluster-size restrictions used
by the unit-level parts.

\begin{assum}[Bounded cluster sizes]\label{assump:bounded-n}
There exist constants $0 < \underline{n} \le \overline{n} < \infty$ such that $\underline{n} \le n_i \le \overline{n}$ for
every $i \in \mathcal{I}$.
\end{assum}

\begin{assum}[Equal cluster sizes]\label{assump:equal-n}
Every cluster has the same size, $n_i = n_0$ for all $i\in\mathcal{I}$, where $n_0\in\mathbbm{Z}_{>0}$ is fixed.
\end{assum}

Assumption~\ref{assump:equal-n} is the special case of Assumption~\ref{assump:bounded-n} with
$\underline{n}=\overline{n}=n_0$; under it, the cluster sizes are non-stochastic, so the unit-level scale factor becomes exact
rather than enveloped.

\begin{theorem}[Risk Bounds for the ES Rule]\label{thm:bounded bounds}
Suppose Assumptions~\ref{assump:integrality}--\ref{structure of design} hold, and let 
$\Delta_{kk'} \coloneqq | U(\pi_k, \theta) - U(\pi_{k'}, \theta)|$ for all $k, k' \in \{1,\dots,K\}$ with $k<k'$.
\begin{enumerate}
\item[(i)] \emph{(Unit-level bound, bounded cluster sizes.)} If, in addition, Assumption~\ref{assump:bounded-n} holds, then the risk of the ES ranking rule satisfies
\begin{align}
0 \leq R(\delta_{\mathrm{ES}}, \theta)
\leq \sum_{k=1}^{K} \sum_{k' > k} \exp \left( \frac{-2\,\underline{n}\,\Delta_{kk'}^2}{\chi_f(G_{kk'})\,(C_k^{-1} + C_{k'}^{-1})} \right) \Delta_{kk'}.
\label{eq:bound-unit-bounded}
\end{align}
\item[(ii)] \emph{(Unit-level bound, equal cluster sizes.)} If, in addition, Assumption~\ref{assump:equal-n} holds, then the risk of the ES ranking rule satisfies
\begin{align}
0 \leq R(\delta_{\mathrm{ES}}, \theta)
\leq \sum_{k=1}^{K} \sum_{k' > k} \exp \left( \frac{-2 n_0\Delta_{kk'}^2}{\chi_f(G_{kk'}) \cdot (C_k^{-1} + C_{k'}^{-1})} \right) \Delta_{kk'}.
\label{eq:bound under 2srsd}
\end{align}
\item[(iii)] \emph{(Cluster-level bound.)} Without any restriction on cluster sizes, the risk of the ES ranking rule
satisfies
\begin{align}
0 \leq R(\delta_{\mathrm{ES}}, \theta)
\leq \sum_{k=1}^{K} \sum_{k' > k} \exp\!\left(\frac{-2\Delta_{kk'}^2}{\chi_f(G^{\mathrm{cls}}_{kk'})\cdot (C_k^{-1}+C_{k'}^{-1})} \right) \Delta_{kk'}.
\label{eq:bound-cond-cluster}
\end{align}
\end{enumerate}
\end{theorem}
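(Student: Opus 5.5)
Our plan is to reduce the risk to a sum of pairwise misranking probabilities and bound each one with Janson's inequality. For each pair $(k,k')$ with $k<k'$, the loss contribution is zero when $\delta^{k,k'}_{\mathrm{ES}}$ agrees with the oracle, and equals $\Delta_{kk'}$ otherwise. The risk expression displayed above therefore collapses to
\begin{align*}
R(\delta_{\mathrm{ES}},\theta)=\sum_{k=1}^K\sum_{k'>k}\Delta_{kk'}\,\Pr\bigl(\text{misrank of }(k,k')\bigr),
\end{align*}
which is nonnegative, giving the lower bound $0$. Without loss of generality take $U(\pi_k,\theta)\ge U(\pi_{k'},\theta)$. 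Writing $D_{kk'}:=\widehat U(\pi_k,\theta)-\widehat U(\pi_{k'},\theta)$, the misranking event is $\{D_{kk'}<0\}=\{D_{kk'}-\mathbbm{E}D_{kk'}<-\Delta_{kk'}\}$. Here $\mathbbm{E}D_{kk'}=\Delta_{kk'}$ by the unbiasedness of the cluster means discussed after \eqref{eq:estimated welfare}. The case $\Delta_{kk'}=0$ is trivial. It then suffices to show $\Pr(D_{kk'}-\mathbbm{E}D_{kk'}\le-\Delta_{kk'})\le\exp(\cdot)$ with the exponents stated in (i)--(iii).

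For the cluster-level part (iii), condition on the first-stage counts $C_k,C_{k'}$, which are fixed under complete randomization. We write $D_{kk'}=\sum_{i}X_i$ with summands $X_i=\widehat{\bar Y}_i(\pi_k)/C_k$ for clusters assigned to $\pi_k$ and $X_i=-\widehat{\bar Y}_i(\pi_{k'})/C_{k'}$ for clusters assigned to $\pi_{k'}$. Each $X_i$ lies in an interval of length $c_i=1/C_k$ or $1/C_{k'}$, since outcomes are in $[0,1]$. Janson's (2004) Hoeffding-type inequality for sums with dependency graph $G$ gives
\begin{align*}
\Pr(S-\mathbbm{E}S\le -t)\le\exp\bigl(-2t^2/(\chi_f(G)\textstyle\sum_i c_i^2)\bigr).
\end{align*}
With $\sum_i c_i^2=C_k\cdot C_k^{-2}+C_{k'}\cdot C_{k'}^{-2}=C_k^{-1}+C_{k'}^{-1}$, applying it with $G=G^{\mathrm{cls}}_{kk'}$ and $t=\Delta_{kk'}$ yields \eqref{eq:bound-cond-cluster}.

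For the unit-level parts, we decompose instead into unit summands $X_{ij}=Y_{ij}/(n_iC_k)$, or $-Y_{ij}/(n_iC_{k'})$, with range $c_{ij}=1/(n_iC_k)$. Then
\begin{align*}
\sum c_{ij}^2=\sum_{i\in k} n_i\,(n_iC_k)^{-2}+\cdots=\sum_{i\in k}\frac{1}{n_iC_k^2}+\cdots\le \frac{1}{\underline n}\,(C_k^{-1}+C_{k'}^{-1})
\end{align*}
under Assumption~\ref{assump:bounded-n}, and this holds with equality with $n_0$ under Assumption~\ref{assump:equal-n}. Janson's inequality on $G_{kk'}$ then gives \eqref{eq:bound-unit-bounded} and \eqref{eq:bound under 2srsd}. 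Finally, the conditional bounds do not depend on the realized assignment, so they pass to the unconditional probabilities. We then multiply by $\Delta_{kk'}$ and sum over pairs.

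The main obstacle is justifying the use of Janson's inequality here. That inequality is stated for a graph on which non-adjacent sets of variables are mutually independent, not merely pairwise independent. The difficulty is that the complete-randomization design couples all units; by Example~\ref{ex:complete dependency}, the dependency graph may be complete, in which case $\chi_f$ equals the number of vertices and the bound is still valid but weak. Our approach is to define $G_{kk'}$ and $G^{\mathrm{cls}}_{kk'}$ as proper dependency graphs in Janson's sense, which is always possible by taking the complete graph. The statement is then a direct application, valid for whatever dependency graph the design and the network $G$ induce. A second point needing care is centering. We use the conditional mean given $S$, which equals $U(\pi_k,\theta)$ only after averaging over cluster sampling. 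So we apply the inequality conditionally on the counts $(C_k,C_{k'})$, where the cluster means remain unbiased for $U(\pi_k,\theta)$ because sampling is random and assignment is exchangeable.
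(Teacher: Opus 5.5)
Your proposal is correct at the paper's level of rigor and follows its proof essentially step for step. You use the same reduction $R(\delta_{\mathrm{ES}},\theta)=\sum_{k<k'}\Delta_{kk'}\,p_{kk'}$ to misranking probabilities, the same unit- and cluster-level decompositions of $D_{kk'}$ with squared-range sums $\underline{n}^{-1}(C_k^{-1}+C_{k'}^{-1})$ (exact with $n_0$) and $C_k^{-1}+C_{k'}^{-1}$, and a direct application of Janson's inequality.

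Your insistence that $G_{kk'}$ and $G^{\mathrm{cls}}_{kk'}$ be dependency graphs in Janson's sense (non-adjacent sets mutually independent, not merely pairwise) sharpens the paper's pairwise-dependence definition at no cost, since the paper takes both graphs to be complete. The one loose step, shared with the paper, is part (i): the ranges $1/(n_iC_k)$ and the vertex set of $G_{kk'}$ depend on the random sizes of the sampled clusters, whereas Janson's inequality needs deterministic ranges, and conditioning on those sizes would move the centering away from $U(\pi_k,\theta)-U(\pi_{k'},\theta)$.
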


Under two-stage complete randomization, the two stages assign saturations and treatments without replacement, so within the two arms being compared, every pair of realized outcomes is stochastically dependent. The unit-level dependency graph $G_{kk'}$ is therefore
complete on the units belonging to clusters assigned to $\pi_k$ or $\pi_{k'}$ (with $(C_k+C_{k'})n_0$ vertices under equal
sizes), and the cluster-level graph $G^{\mathrm{cls}}_{kk'}$ is complete on the $C_k+C_{k'}$ such clusters. Their fractional
chromatic numbers can thus be evaluated explicitly, and each bound admits a transparent closed form.

\begin{cor}[Closed-Form Bounds]\label{cor:unit-level-bounds}
Suppose Assumptions~\ref{assump:integrality}--\ref{structure of design} hold. For any
$\theta\in\Theta$, define $\Delta_{kk'} \coloneqq \big|U(\pi_k,\theta)-U(\pi_{k'},\theta)\big|$ for all
$k,k'\in\{1,\dots,K\}$ with $k<k'$. The unit-level dependency graph of part~(i) under Assumption \ref{assump:bounded-n} satisfies $\chi_f(G_{kk'}) \leq (C_k+C_{k'})\overline{n}$, the unit-level dependency graph of part~(ii) under Assumption \ref{assump:equal-n} satisfies
$\chi_f(G_{kk'}) = (C_k+C_{k'})n_0$, and the cluster-level dependency graph of part~(iii) satisfies
$\chi_f(G^{\mathrm{cls}}_{kk'})=C_k+C_{k'}$. Substituting these values into \eqref{eq:bound-unit-bounded},
\eqref{eq:bound under 2srsd}, and \eqref{eq:bound-cond-cluster}, and using
$(C_k+C_{k'})(C_k^{-1}+C_{k'}^{-1})=(C_k+C_{k'})^2/(C_kC_{k'})$, yields
\begin{align}
0 \leq R(\delta_{\mathrm{ES}}, \theta)
&\leq \sum_{k=1}^{K} \sum_{k' > k} \exp \left( -2\,\frac{\underline{n}}{\overline{n}}\,\Delta_{kk'}^2\,  \frac{C_k C_{k'}}{(C_k+C_{k'})^2} \right) \Delta_{kk'},
\label{eq:bound-unit-bounded 2}
\\
0 \le R(\delta_{\mathrm{ES}},\theta)
&\le
\sum_{k=1}^{K}\sum_{k' > k}
\exp\!\left( -2\Delta_{kk'}^2 \, \frac{C_k C_{k'}}{(C_k+C_{k'})^2} \right) \Delta_{kk'},
\label{eq:bound-cond-cluster-equal}
\\
0 \leq R(\delta_{\mathrm{ES}}, \theta)
&\leq \sum_{k=1}^{K} \sum_{k' > k} \exp\!\left(-2\Delta_{kk'}^2\,\frac{C_k C_{k'}}{(C_k+C_{k'})^2} \right) \Delta_{kk'},
\label{eq:bound-cond-cluster 2}
\end{align}
respectively. In particular, the equal-size unit-level bound \eqref{eq:bound-cond-cluster-equal} and the cluster-level bound
\eqref{eq:bound-cond-cluster 2} coincide, while the bounded-size bound \eqref{eq:bound-unit-bounded 2} is looser by the
factor $\underline{n}/\overline{n}\le 1$ in the exponent.
\end{cor}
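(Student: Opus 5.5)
The corollary is obtained by plugging the value or upper bound of $\chi_f$ for each graph into Theorem~\ref{thm:bounded bounds} and simplifying. Three facts do the work.

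\emph{Step 1: the dependency graphs.} Under Assumption~\ref{structure of design}, the first stage assigns clusters to saturations without replacement. For any two sampled clusters assigned to $\pi_k$ or $\pi_{k'}$, the event that one is in a given arm changes the conditional law of the other's assignment. Hence their estimated cluster means $\widehat{\bar Y}_i(\pi_k)$ are stochastically dependent, and $G^{\mathrm{cls}}_{kk'}$ is the complete graph on $C_k+C_{k'}$ vertices. Within a cluster, second-stage complete randomization couples all treatment indicators, as in Example~\ref{ex:complete dependency}; across clusters, the first stage couples them. So every pair of units in the two arms is joined, and $G_{kk'}$ is complete on $N_{kk'}:=\sum_{i:S_i\in\{\pi_k,\pi_{k'}\}} n_i$ vertices.

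\emph{Step 2: fractional chromatic number of a complete graph.} I will use the standard fact that $\chi_f(K_m)=m$. For the upper bound, color each vertex with its own color. For the lower bound, every independent set is a singleton, so any fractional coloring must assign total weight at least $1$ to each vertex, and the total weight is therefore at least $m$. This immediately gives $\chi_f(G^{\mathrm{cls}}_{kk'})=C_k+C_{k'}$. Under Assumption~\ref{assump:equal-n}, $N_{kk'}=(C_k+C_{k'})n_0$, so $\chi_f(G_{kk'})=(C_k+C_{k'})n_0$. Under Assumption~\ref{assump:bounded-n}, $N_{kk'}\le (C_k+C_{k'})\overline n$, which gives the stated inequality.

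\emph{Step 3: substitution.} The right-hand sides of \eqref{eq:bound-unit-bounded}, \eqref{eq:bound under 2srsd}, and \eqref{eq:bound-cond-cluster} are increasing in $\chi_f$, since $\chi_f$ sits in the denominator of a negative exponent. So replacing $\chi_f$ by an upper bound keeps each inequality valid. Then use
\[
(C_k+C_{k'})(C_k^{-1}+C_{k'}^{-1})=\frac{(C_k+C_{k'})^2}{C_kC_{k'}}.
\]
In part (i) this produces the factor $\underline n/\overline n$ in the exponent. In part (ii) the $n_0$ cancels, and the resulting expression is identical to the one from part (iii). That establishes \eqref{eq:bound-unit-bounded 2}--\eqref{eq:bound-cond-cluster 2} and the closing comparison. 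The lower bound $R\ge 0$ holds because the loss is nonnegative term by term.

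The only nonmechanical step is Step 1, specifically justifying that every pair is dependent rather than merely that the graph is a valid dependency graph. For the bounds, though, only the upper bound on $\chi_f$ matters, because the complete graph is always a valid dependency graph. If exact completeness fails in degenerate cases (for example $\pi\in\{0,1\}$, where within-cluster assignment is deterministic), I would state the results as inequalities $\chi_f\le$ the given values. That suffices for \eqref{eq:bound-unit-bounded 2}--\eqref{eq:bound-cond-cluster 2}. For the claimed equalities I would exhibit dependence via the without-replacement coupling of the first-stage assignments.
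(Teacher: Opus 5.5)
This is essentially the paper's proof. Both arguments treat $G_{kk'}$ and $G^{\mathrm{cls}}_{kk'}$ as complete, use $\chi_f(K_m)=m$ (the paper cites it; you verify it directly from the independent-set definition), substitute into Theorem~\ref{thm:bounded bounds}, and simplify with $(C_k+C_{k'})(C_k^{-1}+C_{k'}^{-1})=(C_k+C_{k'})^2/(C_kC_{k'})$.

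What actually secures the displayed inequalities is your monotonicity remark: the complete graph is always a valid dependency graph, and each right-hand side increases in $\chi_f$. This matters because the completeness in your Step~1 is, as in the paper, asserted rather than derived. The first-stage-coupling route you suggest for the exact equalities would also not deliver them. Index clusters by their position within an arm, as in $D_{kk'}$. Then i.i.d.\ cluster sampling and independent within-cluster randomization make the arm-indexed cluster means independent given the assignment, with a conditional law that does not depend on it, hence independent unconditionally. So the without-replacement coupling creates no dependence among these summands.
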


The three bounds trace out a tradeoff between the granularity of the information they exploit and the restrictions they require. A careful comparison clarifies when each bound is the relevant object. 

We first compare parts~(i) and~(iii), which differ in how they handle cluster-size heterogeneity. The unit-level bound exploits individual-level variation within clusters. Under two-stage complete randomization, the unit-level dependency graph is complete on the units belonging to the two treatment arms. Hence, under bounded cluster sizes,
$\chi_f(G_{kk'})
\leq
(C_k+C_{k'})\overline n.$
Combined with the unit-level scale factor, this yields the closed form in \eqref{eq:bound-unit-bounded 2}. By contrast, the cluster-level bound first aggregates outcomes within clusters. Each cluster mean lies in \([0,1]\), irrespective of cluster size, and the relevant dependency graph is the complete cluster graph, for which
$\chi_f(G^{\mathrm{cls}}_{kk'})
=
C_k+C_{k'} .$
This gives \eqref{eq:bound-cond-cluster 2}. The two closed forms differ only through the heterogeneity ratio \(\overline n/\underline n\geq 1\) in the exponent. Therefore, the cluster-level bound is never looser than the unit-level bound, and the gap between the two is governed entirely by the dispersion of cluster sizes. This gap vanishes as \(\overline n/\underline n\to 1\). The unit-level form, however, comes at the cost of Assumption~\ref{assump:bounded-n}, whereas the cluster-level form imposes no restriction on cluster sizes.

Next, compare parts~(ii) and~(iii). When cluster sizes are equal, say \(n_i=n_0\) for all \(i\), the complete unit-level graph has
$\chi_f(G_{kk'})
=
(C_k+C_{k'})n_0 .$
The factor \(n_0\) in this dependence penalty is exactly offset by the factor \(n_0\) in the unit-level numerator. Consequently, the unit-level bound in \eqref{eq:bound under 2srsd} coincides with the cluster-level bound in \eqref{eq:bound-cond-cluster}, as stated in Corollary~\ref{cor:unit-level-bounds}. Equivalently, equal cluster sizes imply \(\overline n/\underline n=1\), which is precisely the point at which the unit-level and cluster-level exponents meet. Thus, under equal cluster sizes, nothing is lost by passing to the cluster level: the two granularities deliver identical guarantees.

A direct implication is that, under two-stage complete randomization with equal cluster sizes, increasing the within-cluster sample size \(n_0\) does not tighten the risk bound. The closed-form exponent
\[
-2\Delta_{kk'}^2
\frac{C_kC_{k'}}{(C_k+C_{k'})^2}
\]
does not depend on \(n_0\). This is a structural feature of the design, not an artifact of the bounding argument. Complete randomization makes the \(n_0\) units in a cluster fully dependent, so additional units within a cluster do not contribute independent information about the cluster-level welfare contrast. The within-cluster sample size, therefore, cancels between the dependence penalty and the scale factor.

A more striking feature is that the bounds do not contract with the number of clusters. 
To demonstrate this, note that under Assumption~\ref{structure of design}, we can write \(C_k=\alpha_k C\), where $\alpha_k$ denotes the share of sampled clusters assigned to $\pi_k.$ Thus, the common exponent coefficient satisfies
\[
\frac{C_kC_{k'}}{(C_k+C_{k'})^2}
=
\frac{\alpha_k\alpha_{k'}}{(\alpha_k+\alpha_{k'})^2},
\]
which depends on the design only through the allocation shares and not on \(C\). Under balanced allocation, \(\alpha_k=\alpha_{k'}\), this coefficient equals \(1/4\), so the per-pair exponent is $-1/2\cdot\Delta_{kk'}^2$ for every value of \(C\). This is the price of the complete dependency structure induced by two-stage complete randomization. Because both stages assign saturations and treatments without replacement, every pair of outcomes in the two arms is dependent. Janson's inequality, when applied to a complete graph, cannot exploit the negative association generated by sampling without replacement, as with Hoeffding's inequality \citep{hoeffding1963probability}, which is inapplicable because of the underlying within-cluster interference.\footnote{The risk bounds based on the Hoeffding inequality, as shown in our simulation results, could be valid when the underlying within-cluster interferences are weak. We defer formal arguments on this front to future research, as this would involve nontrivial covariance decomposition between the outcomes of any pair of units.} The bound, therefore, is a finite worst-case regret governed by the allocation shares rather than by the total number of clusters.\footnote{In contrast, in Section~\ref{ext: bernoulli} of the supplementary material, we show that when the design is Bernoulli, the resulting bounds contract in $C$ since the design does not induce coupling in saturation and treatment assignment.} The true risk may still contract with \(C\), as the simulations in Section~\ref{sec:simulations} show, but the complete-graph structure does not capture that contraction.

Among the three results, the cluster-level bound in \eqref{eq:bound-cond-cluster 2} is both the tightest and the least restrictive. It is never looser than the unit-level bound in part~(i), it coincides with the unit-level bound in part~(ii) under equal cluster sizes, and it requires no cluster-size assumption. We therefore adopt it as the basis for the design analysis in Section~\ref{Quasi-optimal Experiment Design}, where we show that it is minimized over allocations by the balanced design.

Finally, these bounds can be contrasted with the no-interference benchmark of \citet{manski2004statistical}. In the independent-sampling setting, the dependency graph is empty and \(\chi_f=1\). The exponent then carries the full sample count, so the corresponding risk bound contracts at the parametric rate. Two-stage complete randomization departs from this benchmark in two ways. First, within-cluster interference couples units within the same cluster. Second, assignment without replacement at both stages couples units and clusters across the sample. These features force the relevant dependency graphs to be complete. Hence, applying the interference-free, independence-based bound by setting \(\chi_f=1\) would understate the true upper bound on the risk of the empirical-success ranking rule. The simulations in Section~\ref{sec:sim2} confirm this point: when within-cluster dependence is strong, the Manski bound can fall below the actual risk, whereas the complete-graph bound remains valid.

\section{Quasi-Optimal Experiment Design}\label{Quasi-optimal Experiment
Design}
In this section, we study the optimal allocation of treatment saturations across clusters in the first stage of the experimental design under clustered interference. Our objective is to design a two-stage complete randomization experiment that minimizes the finite-sample risk of the ES ranking rule. We apply the cluster-level risk bound \eqref{eq:bound-cond-cluster} established in Theorem~\ref{thm:bounded bounds}(iii)
to characterize optimal cluster allocations. Since the cluster-level bound requires no restriction on cluster sizes, the
design results below hold for arbitrary cluster sizes. Recall that under Assumption~\ref{structure of design}, the cluster counts $C_k=\alpha_k C$ are fixed by the design, so the right-hand side of \eqref{eq:bound-cond-cluster} is deterministic in the
design choice $\boldsymbol{\alpha}:= (\alpha_1, \dots, \alpha_K)$.

Formally, under Assumption~\ref{structure of design}, the two-stage experimental design randomly assigns sampled clusters to treatment saturations in \( \boldsymbol{\Pi} \) via complete randomization, followed by complete randomization of treatment assignments within each cluster in the second
stage. Let  \( \boldsymbol{\alpha}= (\alpha_1, \dots, \alpha_K) \in \Delta_K\) denote the probability distribution over the support \( \boldsymbol{\Pi} = \{ \pi_1, \dots, \pi_K \} \) used to assign treatment saturations to the sampled clusters, where \( \Delta_K \) is the probability simplex over \( K
\) elements. Given the total number of clusters $C$, the experimental design is fully specified by the pair \( (\boldsymbol{\Pi}, \boldsymbol{\alpha}) \).

Our goal is to determine the allocation of treatment saturations,
$\boldsymbol{\alpha}$, that minimizes the supremum of risk of the ES ranking rule for any saturation set $\boldsymbol{\Pi}$. Since the supremum of the risk function \( \sup_{\theta \in \Theta}R(\delta_{\mathrm{ES}}, \theta) \) is analytically intractable, we instead consider a minimax optimization over
a feasible upper bound.

Maximizing the cluster-level bound \eqref{eq:bound-cond-cluster}---equivalently its closed
form~\eqref{eq:bound-cond-cluster 2}---over $\Delta_{kk'} \geq 0$, each summand
$\Delta_{kk'}\exp\!\big(-2\Delta_{kk'}^2\,C_kC_{k'}/(C_k+C_{k'})^2\big)$ attains its maximum at
\[
\Delta^{*}_{kk'} = \frac{C_k+C_{k'}}{2\sqrt{C_kC_{k'}}}.
\]
Substituting $\Delta^{*}_{kk'}$ yields a uniform bound on the risk over
the entire parameter space \( \Theta \):
\begin{align}
0 \leq \sup_{\theta \in \Theta} R(\delta_{\mathrm{ES}}, \theta)
&\leq\frac{1}{2} \exp(-1/2)\cdot \sum_{k} \sum_{k' > k}\frac{C_k+C_{k'}}{\sqrt{C_kC_{k'}}}\nonumber\\
&=\frac{1}{2} \exp(-1/2)\cdot \sum_{k} \sum_{k' > k}\frac{\alpha_k+\alpha_{k'}}{\sqrt{\alpha_k\alpha_{k'}}},
\label{eq:ub-of-penalty}
\end{align}
where the last equality holds since $C_k=\alpha_kC$ and the factor $C$ cancels. Unlike in the independence-based analysis of
\citet{manski2004statistical}, the right-hand side carries no $C^{-1/2}$ factor: under the complete dependency structure
induced by two-stage complete randomization, the worst-case bound depends on the design only through the allocation shares
$\boldsymbol{\alpha}$ and does not contract in the total number of clusters.

Formally, we aim to solve the optimization problem $
\argmin_{\boldsymbol{\alpha} \in \Delta_K} \sup_{\theta \in \Theta}
R(\delta_{\mathrm{ES}}, \theta)$. Since the positive constant
$\tfrac{1}{2}\exp(-1/2)$ does not affect the minimizer, the quasi-optimal
design problem reduces to
\begin{align*}
 \argmin_{\boldsymbol{\alpha} \in
 \Delta_K} \left\{ \sum_{k} \sum_{k' > k}\frac{\alpha_k+\alpha_{k'}}{\sqrt{\alpha_k\alpha_{k'}}}
 \right\}.
\end{align*}
The following theorem establishes that the solution to this optimization
problem is the balanced allocation \( \alpha_k^* = 1/K \) for all \( k \in
\{1, \dots, K\} \).

\begin{theorem}\label{thm:quasi-optimal design}
For any fixed $C$, consider the two-stage  complete randomization
design characterized by the pair \( (\boldsymbol{\Pi}, \boldsymbol{\alpha})
\), where \( \boldsymbol{\Pi} = \{\pi_1, \dots, \pi_K\} \) and \(
\boldsymbol{\alpha} = (\alpha_1, \dots, \alpha_K) \) with \( \sum_{k=1}^K
\alpha_k = 1 \) and \( \alpha_k > 0 \). If Assumptions~\ref{assump:discrete
Pi} and~\ref{structure of design} hold, then the balanced design \( \alpha_k^*
= 1/K \) is quasi-optimal.
\end{theorem}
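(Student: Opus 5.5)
The plan is to show directly that the balanced allocation minimizes the objective
\[
F(\boldsymbol{\alpha}) := \sum_{k}\sum_{k'>k}\frac{\alpha_k+\alpha_{k'}}{\sqrt{\alpha_k\alpha_{k'}}}
\]
over the open simplex. Quasi-optimality here means, by the preceding discussion, minimizing the uniform upper bound \eqref{eq:ub-of-penalty} derived from Theorem~\ref{thm:bounded bounds}(iii) and Corollary~\ref{cor:unit-level-bounds}. The key observation is that each summand depends only on the ratio of the two shares. Writing $r_{kk'} := \sqrt{\alpha_k/\alpha_{k'}}$, we have
\[
\frac{\alpha_k+\alpha_{k'}}{\sqrt{\alpha_k\alpha_{k'}}} = r_{kk'} + r_{kk'}^{-1} \;\geq\; 2 .
\]
By AM--GM, equality holds if and only if $\alpha_k=\alpha_{k'}$.

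\textbf{Step 1: termwise lower bound.} Summing the inequality above over the $T=K(K-1)/2$ pairs gives $F(\boldsymbol{\alpha}) \ge K(K-1)$ for every $\boldsymbol{\alpha}$ in the interior of $\Delta_K$. The bound uses only the positivity of the shares, so the simplex constraint plays no role yet.

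\textbf{Step 2: attainment.} At $\alpha_k^*=1/K$, every pair satisfies $\alpha_k=\alpha_{k'}$, so $F(\boldsymbol{\alpha}^*)=K(K-1)$. The lower bound is therefore achieved, and $\boldsymbol{\alpha}^*$ is a minimizer.

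\textbf{Step 3: uniqueness.} Any minimizer must attain equality in every pairwise AM--GM inequality, which forces all $\alpha_k$ to be equal. The constraint $\sum_k\alpha_k=1$ then pins down $\alpha_k=1/K$.

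It remains to tie this back to the statement. For fixed $C$, the cluster counts $C_k=\alpha_kC$ enter the bound only through the ratios, as shown in \eqref{eq:ub-of-penalty}. Minimizing $F$ is therefore equivalent to minimizing the worst-case bound on $\sup_{\theta}R(\delta_{\mathrm{ES}},\theta)$, which is exactly what quasi-optimality means. The resulting minimal bound is $\tfrac12 e^{-1/2}K(K-1)$.

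The only potential obstacle is integrality: $\alpha_kC$ must be an integer, so the exact balanced design requires $K$ to divide $C$. I would handle this by defining quasi-optimality over the relaxed simplex, as the problem is posed in the text. One could also remark that when $K$ does not divide $C$, the near-balanced integer allocation is optimal by a discrete exchange argument, since each term depends on a ratio that is minimized when counts differ by at most one. This remark is not needed for the stated theorem.

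Convexity arguments, such as Lagrange conditions, are unnecessary given the AM--GM route. The AM--GM route is preferable because it yields a global minimum, and therefore uniqueness, immediately, without verifying second-order conditions.
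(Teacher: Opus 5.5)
Your argument is correct and is essentially the paper's proof: termwise AM--GM gives $(\alpha_k+\alpha_{k'})/\sqrt{\alpha_k\alpha_{k'}}\ge 2$ with equality iff $\alpha_k=\alpha_{k'}$; summing over the $\binom{K}{2}$ pairs gives the lower bound $K(K-1)$; the simplex constraint then forces the unique minimizer $\alpha_k=1/K$, with envelope value $\tfrac12 e^{-1/2}K(K-1)$. One caution on your integer-allocation aside, which the theorem does not need: a one-unit exchange argument does not work as sketched, because the objective is not Schur-convex in the counts (with $K=3$ and $C=200$, the counts $(1,100,99)$ give about $22.1504$, more than the $22.1501$ given by $(1,101,98)$), so that remark would need a different justification.
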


Theorem~\ref{thm:quasi-optimal design} establishes that, under a two-stage completely randomized treatment design, balanced
allocation of clusters minimizes the upper bound on the risk. In practical terms, the result prescribes assigning an
equal number of clusters to each treatment saturation level in the first stage of the experiment. At the balanced allocation,
the right-hand side of \eqref{eq:ub-of-penalty} simplifies to
$\tfrac12 e^{-1/2}\,K(K-1)=\binom{K}{2}e^{-1/2}$, a constant that does not depend on $C$; consistent with the discussion
following Corollary~\ref{cor:unit-level-bounds}, the worst-case bound is governed by the allocation shares rather than the
number of clusters. The proof, which follows from the arithmetic--geometric mean inequality $\tfrac{\alpha_k+\alpha_{k'}}{\sqrt{\alpha_k\alpha_{k'}}}\ge 2$ with equality if and only if $\alpha_k=\alpha_{k'}$, is provided in the supplementary material.

Analogous to the discussion on optimal stratified designs in
\citet{manski2004statistical}, the resulting allocation is regarded as
\emph{quasi-optimal} for two reasons. First, the analysis is restricted to ES ranking rules; thus, the solution need not be optimal for other decision rules. Second, the objective function minimizes an upper bound on the supremum of the risk function, rather than the supremum of the risk function itself. Consequently, the allocation derived from this approximation may not attain true minimax optimality, but nevertheless offers a principled and implementable approach to the first-stage design.

While the analysis in this section echoes classical design principles, our objective departs significantly from prior work. For instance, \citet{baird2018optimal} also studies optimal treatment saturation designs, but does so through the lens of minimizing standard errors (i.e.,
maximizing the probability of detecting a statistically significant average treatment effect). In contrast, we adopt a decision-theoretic perspective, aiming to minimize the supremum of risk associated with the ES ranking
rule. This yields finite-sample welfare guarantees that match the planner's goal of identifying the treatment saturation that maximizes expected welfare. This focus on worst-case performance builds on research advocating the selection of sample sizes in clinical trials using
finite-sample welfare criteria; for example, see
\citeauthor{manski2016sufficient}
(\citeyear{manski2016sufficient}, \citeyear{manski2019trial}).

A limitation of our approach, which also highlights its generality, is that the resulting quasi-optimal design is uniform: it does not depend on the specific configuration of treatment saturation levels or within-cluster
sample sizes. This arises from our use of an upper bound on maximum regret, which ensures robustness but abstracts from the additional structure that knowledge of the saturation set could provide. While more refined designs may
be attainable when such information is available, the ability to characterize a saturation-independent design with finite-sample guarantees offers a practical approach to experimental design in clustered settings.

Having established the finite-sample properties of the ES ranking rule and the quasi-optimal first-stage allocation, we turn in the next section to the asymptotic behavior of the ranking problem.

\section{Local Asymptotic Theory}\label{sec:optimality}
In this section, we establish the asymptotic admissibility and optimality of the ES ranking rule using the limit-of-experiment framework. Specifically, we characterize conditions under which threshold ranking rules\footnote{An ES ranking rule is a threshold rule with threshold values equal to zero.} are admissible (Proposition~\ref{prop::admissibility}), Bayes optimal (Proposition~\ref{prop::Bayes}), and minimax optimal (Theorem~\ref{thm:slicing}) in the limiting Gaussian model. Moreover, we show that the finite-sample ES ranking rule converges to the asymptotically optimal rule (Theorem~\ref{thm:asymp_opt_parametric}).

Following \cite{hirano2009asymptotics}, we investigate the asymptotic optimality of the ES ranking rule around a reference local parameter.
We first restrict our attention to parametric regular models; the extension to the semiparametric class is discussed at the end of this section.

Recall that the planner observes the data $\omega\in \Omega$ which is informative about the parameter $\theta.$ We let $P_{\theta,n}$ denote the distribution of $\omega$ on the space $\Omega$, i.e., $\omega\sim P_{\theta,n}.$
In what follows, we consider a sequence of experiments $\mathcal{E}_n:=\{ P_{{\theta,n}}:{\theta} \in \Theta \subset \mathbbm{R}^d \}$ that may grow as the sample size grows, where $\Theta$ is an open subset of $\mathbbm{R}^d$.

\begin{exmp}[Linear-in-means models]\label{examp::lim}
\cite{manski1993identification} proposed the now widely studied linear-in-means (LIM) model. With no covariates, the model's observed outcome for unit $j$ in cluster $i$ is generated as:
  \begin{align*}
        Y_{ij}=&\eta_1\frac{\sum_{j':j'\in i}Y_{ij'}}{n_i}+ \eta_2\frac{\sum_{j':j'\in i}Z_{ij'}}{n_i}+\eta_3Z_{ij} +\epsilon_{ij}
      =\eta_1\bar{Y_i}+ \eta_2\Pi_i+\eta_3Z_{ij} +\epsilon_{ij},
    \end{align*}
    where  observed treatment follows $Z_{ij}\overset{i.i.d.}{\sim} \text{Bernoulli}(\zeta)$ with $\zeta \in [0,1]$,  $\epsilon_{ij}\overset{i.i.d.}{\sim} P_{\phi}$ with $\phi\in\mathbbm{R}^{d-4}$ and  $\epsilon_{ij}\indep (Z_{ij},\Pi_i )$.
 Thus, in our notation, $\theta=(\eta_1, \eta_2, \eta_3,\zeta, \phi).$
For each value of
\(\theta\in\Theta\),
the treatment assignment mechanism
\(Z_{ij}\overset{\mathrm{i.i.d.}}{\sim}\mathrm{Bernoulli}(\zeta)\),
the disturbance distribution \(P_{\phi}\), and the linear-in-means outcome equation jointly determine a unique observable-data distribution \(P_{\theta,n}\) on the sample space
\(
\Omega
=
\{(Z_{ij},Y_{ij})
:
i=1,\ldots,C,\;
j=1,\ldots,n_i
\}.
\)
  Therefore, for each sample size $n$, $\mathcal{E}_n = \{ P_{\theta,n} : \theta \in \Theta \}$,
where $P_{\theta,n}$ denotes the joint distribution of all observed outcomes and treatments implied by the linear-in-means model under parameter value $\theta$.  $\qedsymbol$
\end{exmp}

We define a vector of welfare contrasts $g({\theta}) := \left(g_1({\theta}), \ldots, g_T({\theta}) \right)^{\top}$, where $g_t(\theta):=U(\pi_k,{\theta}) - U(\pi_{k'},{\theta})$ is the welfare contrast between some $\pi_k$ and $\pi_{k'}.$ Note that we use $t$ for a generic combination $(k,k')$, hence $t\in \{1,\ldots, T\}.$ For each $t$, we assume that $g_t(\theta)$ is continuously differentiable in $\theta$.

We consider a sequence of local parameters around $\theta_0\in \Theta$, where $g({\theta}_0) = 0$.
The ranking problem at the local reference parameter $\theta_0$ is the most difficult case in the parameter space. If $g_t({\theta})\neq 0$ for a given $\theta$, one action is strictly dominated and the decision between $(k,k')$ becomes trivial asymptotically.

The form of admissible and optimal rules in the limit experiment depends on the loss function. Hereafter, we consider two component loss functions: the \textit{hypothesis testing loss function} defined as  $$L^H_t(\delta^{t},\theta):= \mathbbm{1}(g_t(\theta)>0)(q - \delta^{t}(1+q)) +\delta^{t}$$ for $q>0$, and the \textit{regret loss function} (introduced in Section \ref{sec:multinomial}) redefined as $$L^R_t(\delta^{t},\theta):= g_t(\theta)\left[ \mathbbm{1}(g_t(\theta)>0) - \delta^{t} \right].$$
Summing each component loss function across $t$ gives us the corresponding additively separable loss functions denoted as $L^H(\delta,\theta)$ and $L^R(\delta,\theta)$, respectively. Hence, the risk function can be written as:
\eq{
  R^m(\delta,\theta)\label{eq:risk_function}
  & := \int_{\Omega} L^m(\delta(\omega),\theta) dP_{\theta, n}  =    \sum_{t=1}^T  \int_{\Omega}L^m_t(\delta^{t}(\omega),\theta) dP_{\theta, n}  \equiv \sum_{t=1}^T    R^m_t(\delta^{t},\theta),
}
where $R^m_t(\delta^{t},\theta)$ denotes the component risk for a pair of saturations and $m\in \{H, R\}.$

To obtain an asymptotic approximation of the original experiment, we assume that the sequence of experiments $\mathcal{E}_n:=\{ P_{{\theta,n}}:{\theta} \in \Theta\}$ is differentiable in quadratic mean (DQM) at $\theta_0$, i.e., we assume that for each $n$, the probability measures $\{P_{\theta,n}: \theta \in \Theta\}$ are dominated by a common $\sigma$-finite measure $\mu$, with corresponding densities $p_{\theta,n} = dP_{\theta,n}/d\mu$. Then, there exists a measurable function $s:\Omega\mapsto \mathbbm{R}^d$ (called the \textit{score function} associated with model $\mathcal{E}_1$) such that, as  $h \to 0$,
\eqs{
  \int \left[ \sqrt{p_{\theta_0+h,n}(\omega)} - \sqrt{p_{\theta_0,n}(\omega)} - \frac{1}{2} h^{\top}s(\omega) \sqrt{p_{\theta_0,n}(\omega)}  \right]^2 d\mu(\omega) = o(\Vert h \Vert^2),
}
where $s(\omega) = \frac{\partial \log p_{_{\theta,n}}(\omega)}{\partial \theta}\vert_{\theta=\theta_0}$. We assume that the Fisher information matrix $I_0=E_{\theta_0}[s(\omega)s(\omega)^\top]$ is nonsingular.
Applying Le Cam's local asymptotic normality (LAN) arguments, the Gaussian shift experiment $\{ N(h, I_0^{-1}): h\in \mathbbm{R}^d\}$ serves as the ``limit experiment'' of the original experiment, with observation $\Delta \sim  N(h, I_0^{-1})$ for some $h\in \mathbbm{R}^d$. 

Hence, any converging sequence of ranking rules in the original experiment is matched by some ranking rule in the limit experiment (see Proposition 3.1 in \citet{hirano2009asymptotics}). This reduction from the original complex experiment to a Gaussian shift experiment is the foundation for the optimality results that follow.

We next derive the limiting versions of the loss and risk functions.  Define $\nabla_\theta g:=[\partial g_t(\theta_0)/\partial \theta_v]_{\substack{{\nu=1,\dots, d}\\{t=1,\dots,T}}}$ as a $T\times d$ Jacobian matrix of $g(\theta)$ evaluated at $\theta_0$. Then, since $g_t(\cdot)$ is continuously differentiable in $\theta$, we can show that $\sqrt{n}g_t(\theta_0+h/\sqrt{n}) \to (\nabla_{\theta} g_t)h$, the true welfare contrast at $\theta_0$ in the limit experiment. Consequently, we have
\eqs{
  \sqrt{n}L^H_t\lt(\dt^t, \tht_0 + \frac{h}{\sqrt{n}}\rt)
  &\to \mathbbm{1}((\nabla_{\theta} g_t)h>0)(q - \delta^{t}(1+q)) +\delta^{t}  =: L^H_{t,\infty} (\dt^t, h),
}
and
\eqs{
  \sqrt{n}L^R_t\lt(\dt^t, \tht_0 + \frac{h}{\sqrt{n}}\rt)
  &\to (\nabla_{\theta} g_t)h\lt[  \mathbbm{1}\lt( (\nabla_{\theta} g_t)h >0 \rt) - \dt^t \rt]  =: L^R_{t,\infty} (\dt^t, h),
}
where $\nabla_{\theta} g_t$ is the $t$-th row of $\nabla_{\theta} g$. The corresponding component limiting risk function is
\eqs{
  R^m_{t,\infty}(\dt^t,h) & := \lim_{n\to\infty} \sqrt{n} R^m_t\lt(\dt^t,\tht_0+\frac{h}{\sqrt{n}}\rt)= \int L^m_{t,\infty}(\dt^t(\Delta),h)dN(\Delta|h,I_0^{-1}),}
where $m\in\{H, R\}$.
Note that we abuse notation slightly: we use the same $\dt^t$ for both $L^m_{t}(\dt^t,\theta)$ and  $L^m_{t,\infty}(\dt^t,h)$. However, the rule in $L^m_t(\cdot, \theta)$ is defined on the sample $\omega \in \Omega$, while the rule in $L^m_{t,\infty}(\cdot, h)$ is defined on the simpler asymptotic experiment space $\Delta \in \mathbbm{R}^d$. This extends to the rules in $R^m_t(\cdot, \theta)$  and $R^m_{t,\infty}(\cdot, h)$ as well.

Aggregating the component loss functions, we obtain the asymptotic additively separable loss function
\eqs{
  L^m_{\infty}(\dt,h) := \sum_{t=1}^T L^m_{t,\infty} (\dt^t, h),
}
and the corresponding asymptotic additively separable risk function and its supremum,
  \eqs{
  R^m_{\infty}(\dt, h)  := \sum_{t=1}^T R^m_{t,\infty}(\dt^t,h)\,\,\,\, \text{and}
  \,\,\,\, R^m_{\infty}(\dt) := \sup_{h \in \mathbbm{R}^d} R^m_{\infty}(\dt,h),
}
respectively.

Following \cite{cohen2005decision}, we note that the ranking problem in the limit experiment can be written as a one-sided multiple endpoints problem of the form:
\begin{align}
    H_{0t}: (\nabla_{\theta} g_t)h\leq 0\,\,\, \text{vs}\,\,\, H_{1t}:(\nabla_{\theta} g_t)h>0,\qquad t=1,\dots,T.
\end{align}

Thus, the ranking problem can be viewed as a $ 2^T$-action problem in which one selects an action to accept or reject $H_{0t}$ for $t=1,\dots, T$. Thus, for any $h\in\mathbbm{R}^d$, we define the set of true $H_{0t}'\,s$ as $\mathcal{T}(h):=\{t: (\nabla_{\theta} g_t)h\leq 0\}.$

In the limit experiment, the ``best'' estimator of the vector of welfare contrasts $(\nabla_{\theta} g)h$ is the Gaussian estimator $(\nabla_{\theta} g) \Delta$ \citep{van2000asymptotic}.
It has a mean $(\nabla_{\theta} g)h$ and covariance matrix $(\nabla_{\theta} g)\, I_{0}^{-1}\, (\nabla_{\theta} g)^{\top}$.
Using the hypothesis loss function and under appropriate restrictions on the covariance of the Gaussian estimator, we find ranking rules that are admissible in the limiting Gaussian model.

\begin{prop} [Local Asymptotic Admissibility]\label{prop::admissibility}
Suppose the covariance matrix  
$$(\nabla_{\theta} g)\, I_{0}^{-1}\, (\nabla_{\theta} g)^{\top}
= [\sigma_{tt'}:=(\nabla_{\theta} g_{_t})^{\top}\, I_{0}^{-1}\, (\nabla_{\theta} g_{_{t'}}):t,t' = 1,\dots,T]
$$  is intra-class, i.e.,
$\sigma_{tt} =\sigma^2 \text{ for all } t = 1,\dots,T,$
and
$ \sigma_{tt'} = \rho\cdot \sigma^2 \text{ for all } t \neq t', = 1,\dots, T,$ where $-1/(T-1)\leq \rho \leq 1.$ Suppose the loss function is the additively separable hypothesis loss function and consider the threshold ranking rule of the form
\begin{align}\label{limit rule}
   &\delta_{\kappa}(\Delta)= \left(\mathbbm{1}\left\{\frac{(\nabla_{\theta} g_1)\Delta}{\sigma}>\kappa_1\right\},  \dots, \mathbbm{1}\left\{\frac{(\nabla_{\theta} g_T)\Delta}{\sigma}>\kappa_T\right\}\right), \,\,\text{for}\,\, (\kappa_1,\dots, \kappa_T)=:\boldsymbol{\kappa} \text{ such that,}\nonumber\\
    & \text{for any,  } \alpha\in (0,1) \quad  \sup_{h \in \mathcal{H}_0}
\Pr\!\left(
\sum_{t \in \mathcal{T}(h)}
\mathbbm{1}\left\{
\frac{(\nabla_{\theta} g_t)\Delta}{\sigma} > \kappa_t
\right\}
\ge 1
\right)
\leq \alpha\nonumber\\
&\qquad \qquad \qquad \qquad \qquad  \iff 
\inf_{h \in \mathcal{H}_0}
\Phi_{|\mathcal{T}(h)|}
\!\left(
\boldsymbol{\kappa}_{\mathcal{T}(h)} - \mu_{\mathcal{T}(h)}(h);
\Sigma_{\mathcal{T}(h),\,\mathcal{T}(h)}
\right)
\ge 1 - \alpha,
\end{align}
where $\mathcal{H}_0 := \bigcup_{t=1}^{T} \left\{ h : (\nabla_{\theta} g_t)h \le 0 \right\}$, is the collection of  $h$'s where at least one $H_{0t}$ is true. Hence, $\sup_{h \in \mathcal{H}_0} \Pr(
\sum_{t \in \mathcal{T}(h)}
\mathbbm{1}\{
\sigma^{-1}(\nabla_{\theta} g_t)\Delta > \kappa_t
\}
\ge 1)$ is the strong family-wise error rate (FWER). Moreover, $\boldsymbol{\kappa}_{\mathcal{T}(h)}:=(\kappa_t: t\in \mathcal{T}(h))$, $\mu_{\mathcal{T}(h)}(h):=(\sigma^{-1}(\nabla_{\theta} g_t)h: t\in \mathcal{T}(h) )$, $\Sigma_{\mathcal{T}(h),\mathcal{T}(h)}$ the corresponding sub-matrix of $(\nabla_{\theta} g)\, I_{0}^{-1}\, (\nabla_{\theta} g)^{\top}$,  $\Phi_m(\mu;\Sigma)$ denotes the $m$-variate normal cumulative distribution function with mean vector $\mu$ and covariance matrix $\Sigma$.

If $\rho \geq -1/q$, then $\delta_{\kappa}(\Delta)$ is an admissible rule in the limiting Gaussian model.
\end{prop}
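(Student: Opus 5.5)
The plan is to reduce the limiting ranking problem to the one-sided multiple-endpoints testing problem of \citet{cohen2005decision} and then apply their Theorem 4.1, which gives admissibility of single-step (threshold) procedures under additive hypothesis-testing loss and an intra-class covariance with $\rho\ge -1/q$.

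\emph{Step 1: canonical form.} In the limit experiment we observe $\Delta\sim N(h,I_0^{-1})$, and every component loss $L^H_{t,\infty}(\delta^t,h)$ depends on $h$ only through $\mu_t(h):=\sigma^{-1}(\nabla_\theta g_t)h$. Set $X:=\sigma^{-1}(\nabla_\theta g)\Delta\sim N(\mu(h),R)$, where $R$ is the intra-class correlation matrix with off-diagonal entry $\rho$. We must check that restricting attention to rules depending on $\Delta$ only through $X$ is without loss. Complete $(\nabla_\theta g)$ to a sufficient linear reparametrization of $\Delta$ orthogonal in the $I_0$ metric. The remaining coordinates are then independent of $X$ and carry information only about nuisance directions of $h$ that do not enter the loss. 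By a Rao--Blackwell argument for the convex-in-$\delta$ (indeed linear) loss, any rule is weakly dominated by its conditional expectation given $X$. Hence a rule admissible among functions of $X$ remains admissible among all rules of $\Delta$. If $\nabla_\theta g$ has rank less than $T$, $X$ lives on a subspace. I will note that the intra-class assumption with $\rho$ in the stated range, together with $\rho>-1/(T-1)$, gives full rank, and treat the boundary $\rho=-1/(T-1)$ separately or exclude it as in Cohen--Sackrowitz. The parameter space for $\mu$ is the image of $h\mapsto\mu(h)$, which is all of $\mathbbm{R}^T$ in the full-rank case.

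\emph{Step 2: matching loss and rules.} $L^H_{t,\infty}$ equals $1$ for a false rejection ($\delta^t=1$, $\mu_t\le 0$), $q$ for a false acceptance ($\delta^t=0$, $\mu_t>0$), and $0$ otherwise. Summed over $t$, this is exactly the additive loss in \citet{cohen2005decision}. The rule $\delta_\kappa$ rejects $H_{0t}$ iff $X_t>\kappa_t$, which is a single-step procedure. The FWER display in \eqref{limit rule} only selects which vectors $\boldsymbol\kappa$ are considered, via the equivalence between the strong FWER and the multivariate normal CDF over the true-null coordinates. That equivalence follows directly from $\Pr(\text{no false rejection})=\Pr(X_t\le\kappa_t,\ t\in\mathcal T(h))$. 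It does not affect admissibility, since admissibility holds for every fixed $\boldsymbol\kappa$.

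\emph{Step 3: invoke Theorem 4.1.} With $X\sim N(\mu,R)$, $R$ intra-class and $\rho\ge-1/q$, Theorem 4.1 of \citet{cohen2005decision} gives admissibility of every single-step rule $\mathbbm{1}\{X_t>\kappa_t\}$. Transferring back through Step 1 yields admissibility of $\delta_\kappa(\Delta)$ in the limiting Gaussian model.

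The main obstacle is Step 1: ensuring that admissibility in the reduced $X$-problem on $\mu\in\mathbbm{R}^T$ coincides with admissibility in the original $h\in\mathbbm{R}^d$ problem. This requires handling possible rank deficiency and the fact that the map $h\mapsto\mu$ may be many-to-one. I would argue that risks depend on $h$ only through $\mu(h)$ for rules measurable in $X$, so a dominating rule in the $h$-problem projects to a dominating rule in the $\mu$-problem. The projection is its conditional expectation given $X$, randomized if necessary, and Cohen--Sackrowitz admissibility is over randomized rules as well.
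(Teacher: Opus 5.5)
Your route is the paper's: reduce to the one-sided multiple-endpoints model of Cohen and Sackrowitz, identify the loss with theirs for $b=q$, recognize $\delta_\kappa$ as a single-step procedure, and invoke their Theorem~4.1. In the interior case $-1/(T-1)<\rho<1$, $\Sigma$ is positive definite and $h\mapsto\mu(h)$ is onto $\mathbbm{R}^T$, and there the reduction is sound once Step~1 is repaired (see below). The genuine gap is your claim that the stated range of $\rho$, together with $\rho>-1/(T-1)$, gives full rank. The statement explicitly allows $\rho=1$, and the paper discusses it as the rank-one case. There $\Sigma=\sigma^2\mathbf{1}\mathbf{1}^\top$, all rows of $\nabla_\theta g$ coincide, and $X_t=W\sim N(m,1)$ for every $t$, with $m=\sigma^{-1}(\nabla_\theta g_1)h\in\mathbbm{R}$. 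Your reduction does not reach this case, and the conclusion is actually false there when the $\kappa_t$ are not all equal. The risk of any rule depends on it only through $\Psi=\sum_t\delta^t$: it equals $E_m\Psi$ for $m\le 0$ and $q(T-E_m\Psi)$ for $m>0$. Choose $k$ with $T(1-\Phi(k))=\sum_t(1-\Phi(\kappa_t))$. The Karlin--Rubin/Neyman--Pearson uniqueness argument applied to the test function $\Psi/T$ shows that using $\mathbbm{1}\{W>k\}$ in every coordinate gives strictly smaller $E_m\Psi$ for every $m<0$ and strictly larger $E_m\Psi$ for every $m>0$. So that rule strictly dominates $\delta_\kappa$, and the proposition needs $\rho<1$ (or a common threshold when $\rho=1$). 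The other degenerate endpoint, $\rho=-1/(T-1)$, where $\sum_t X_t=\sum_t\mu_t=0$ identically, is left open by your ``treat separately or exclude.''

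Two further repairs are needed. First, the Rao--Blackwell step is misstated, because $X$ is not sufficient for $h$. Let $Y=B\Delta$ collect complementary coordinates with $BI_0^{-1}(\nabla_\theta g)^\top=0$, so that $Y\sim N(\eta,BI_0^{-1}B^\top)$ with $\eta=Bh$ is independent of $X$. Then $E[\delta(X,Y)\mid X]$ integrates against the law of $Y$ and depends on the unknown $\eta$, so it is not a decision rule and need not dominate $\delta$ across $\eta$. The valid transfer uses the fact that $(\mu,\eta)$ ranges over a product set in the full-rank case. If $\delta'$ dominates $\delta_\kappa$ with strict inequality at $(\mu^*,\eta^*)$, the randomized rule $x\mapsto\int\delta'(x,y)\,dN(y\mid\eta^*,BI_0^{-1}B^\top)$ has risk $R(\delta',(\mu,\eta^*))$ at every $\mu$, so it dominates $\delta_\kappa$ in the $X$-problem. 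Second, your closing paragraph addresses the many-to-one map $h\mapsto\mu$, which is harmless. The real obstacle is that when $\{\mu(h)\}$ is a proper subspace of $\mathbbm{R}^T$, admissibility over $\mathbbm{R}^T$ does not transfer, since restricting the parameter space only makes domination easier. The paper's own proof handles rank deficiency with exactly the assertion that admissibility over $\mathbbm{R}^T$ implies admissibility over any subspace. That implication runs the wrong way, and the $\rho=1$ example refutes it, so it cannot close your gap. Finally, for the paper's pairwise contrasts the identity $g_{kk''}=g_{kk'}+g_{k'k''}$ forces $\operatorname{rank}(\nabla_\theta g)\le K-1<T$ once $K\ge 3$. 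So the full-rank case your argument does cover never arises in the ranking application itself.
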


Proposition \ref{prop::admissibility} states that if the covariance matrix of the estimator of welfare contrasts in the limit experiment is \textit{intra-class}, i.e., $(\nabla_{\theta} g)\, I_{0}^{-1}\, (\nabla_{\theta} g)^{\top}=\sigma^2[(1-\rho)I_T +\rho \mathbf{1}\mathbf{1}^\top]$, where $I_T$ is the identity matrix of dimension $T$, with correlation between individual estimators of welfare contrasts at least $-1/q$, then the ranking rule $\delta_{\kappa}(\Delta)$ is admissible under the hypothesis loss function. The two restrictions on $\rho$ in Proposition \ref{prop::admissibility} imply that $\max\{-1/(T-1), -1/q\}\leq \rho\leq 1$. We consider two special cases.

First, consider $\rho=0$, so that $(\nabla_{\theta} g)\, I_{0}^{-1}\, (\nabla_{\theta} g)^{\top}=\sigma^2I_T$. This asserts asymptotic independence of the estimator of welfare contrasts, which is possible even though the estimators are dependent by construction in finite samples. Indeed, asymptotic independence concerns the leading $1/\sqrt{n}$ terms: the shared estimated welfare parameters may be negligible at that scale or may cancel in covariance, and this is equivalent to orthogonality of the corresponding gradient directions.

Second, consider $\rho=1$, so that $(\nabla_{\theta} g)\, I_{0}^{-1}\, (\nabla_{\theta} g)^{\top}=\sigma^2\mathbf{1}\mathbf{1}^\top$. This implies that the gradients of all welfare contrasts are identical, so that $rank(\nabla_{\theta} g)=1$.
Consequently, the asymptotic additively separable hypothesis loss reduces to
$L^H_{\infty}(\dt,h)=\sum_{t=1}^T L^H_{t,\infty} (\dt^t, h)=T\cdot(\mathbbm{1}((\nabla_{\theta} g_1)h>0)(q - \delta^{1}(1+q)) +\delta^{1})$.
Thus, in the limit experiment, the ranking problem becomes effectively one-dimensional and is straightforward to solve. In contrast, the corresponding problem in the original experiment need not be one-dimensional, even though its limit representation is. Therefore, when $\rho=1$, the ranking problem need not be trivial in the original experiment, underscoring the usefulness of the sufficient conditions of Proposition~\ref{prop::admissibility}.

Moreover, Proposition~\ref{prop::admissibility} shows that multiple admissible ranking rules arise in the limit experiment, with the specific rule depending on the underlying welfare contrasts and covariance structure. Since the finite-sample ES ranking rule corresponds to the limiting rule where $\boldsymbol{\kappa}=\mathbf{0}$, it converges to an admissible rule when
$\inf_{h \in \mathcal{H}_0}
\Phi_{|\mathcal{T}(h)|}
\!\left(
 - \mu_{\mathcal{T}(h)}(h);
\Sigma_{\mathcal{T}(h),\,\mathcal{T}(h)}
\right)
\ge 1 - \alpha.$
Thus, the ES ranking rule is asymptotically admissible when the limiting welfare contrasts are sufficiently negative relative to their covariance.

Next, we show that, under additional restrictions beyond those in Proposition \ref{prop::admissibility}, the ranking rule in Proposition \ref{prop::admissibility} is proper \textit{Bayes} (Bayes optimal) in the limiting Gaussian model.
\begin{prop} [Local Asymptotic Bayes optimality]\label{prop::Bayes}
Suppose the covariance matrix
$$(\nabla_{\theta} g)\, I_{0}^{-1}\, (\nabla_{\theta} g)^{\top}
= [\sigma_{tt'}:=(\nabla_{\theta} g_{_t})^{\top}\, I_{0}^{-1}\, (\nabla_{\theta} g_{_{t'}}):t,t' = 1,\dots,T]=I_T,$$
i.e., $\sigma_{tt} = \sigma_{t't'}=1,$ and $ \sigma_{tt'}=0 \quad \text{for all } t,t' = 1,\dots,T$. Suppose the loss function is the additively separable hypothesis loss function. 

Then, the ranking rule in \eqref{limit rule}, $\delta_{\kappa}(\cdot)$, is proper Bayes in the limiting Gaussian model, i.e., there exists a prior distribution $\xi$ on $h$ under which $\delta_{\kappa}$ minimizes the integrated risk.
\end{prop}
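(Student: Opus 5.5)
The strategy is to reduce the limit problem to $T$ independent univariate Gaussian testing problems and then construct a product prior under which each component threshold rule is Bayes. When the covariance matrix of $(\nabla_{\theta} g)\Delta$ is $I_T$, the statistics $X_t:=(\nabla_{\theta} g_t)\Delta$, $t=1,\dots,T$, are mutually independent with $X_t\sim N(\mu_t,1)$, where $\mu_t:=(\nabla_{\theta} g_t)h$. Two facts follow from the identity covariance.

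\begin{itemize}
\item The $T$ rows of $\nabla_{\theta} g$ are linearly independent, because $\nabla_{\theta} g\, I_0^{-1}\nabla_{\theta} g^\top$ is nonsingular.
\item The map $h\mapsto \mu=(\nabla_{\theta} g)h$ is therefore onto $\mathbb{R}^T$.
\end{itemize}

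So any prior on $\mu\in\mathbb{R}^T$ can be realized as the image of a prior $\xi$ on $h\in\mathbb{R}^d$. The simplest choice is to put $\xi$ on the $T$-dimensional affine subspace $h=I_0^{-1}\nabla_{\theta} g^\top \mu$. On that subspace $\Delta$ enters the rule only through $X=(\nabla_{\theta} g)\Delta$, and $X\mid h\sim N(\mu,I_T)$. Since both the limiting hypothesis loss $L^H_\infty$ and the rule $\delta_\kappa$ depend on $(h,\Delta)$ only through $(\mu,X)$ (with $\sigma=1$), the Bayes problem reduces to one in $(\mu,X)$. I will also note that the Bayes rule among all rules based on $\Delta$ coincides with the one based on $X$. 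Conditional on $\mu$, the component of $\Delta$ orthogonal to the span of $\nabla_{\theta} g^\top$ in the $I_0$ geometry is ancillary for $\mu$, so it carries no extra information under a prior supported on the subspace.

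Next, additive separability of the loss decouples the problem. The integrated risk is $\sum_t \int R^H_{t,\infty}(\delta^t,h)\,d\xi(h)$. Take $\xi$ to be a product prior $\mu_t\overset{iid}{\sim}\xi_t$. The posterior of $\mu_t$ given $X$ then depends only on $X_t$, and the posterior expected loss separates across components. Minimizing the posterior expected value of $\mathbbm{1}(\mu_t>0)(q-\delta^t(1+q))+\delta^t$ gives the rule: set $\delta^t=1$ if and only if
\[
(1+q)\Pr(\mu_t>0\mid X_t)-1>0,
\]
that is, if and only if $\Pr(\mu_t>0\mid X_t)>1/(1+q)$.

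The remaining work, which I expect to be the main obstacle, is to choose each $\xi_t$ so that this posterior-threshold rule coincides exactly with $\mathbbm{1}\{X_t>\kappa_t\}$ for the prescribed $\kappa_t$.
\begin{itemize}
\item I will use a two-point prior $\xi_t$ placing mass $p_t$ at $-a$ and $1-p_t$ at $+a$ for some fixed $a>0$.
\item Under this prior the posterior log-odds equal $2aX_t+\log((1-p_t)/p_t)$, which is strictly increasing in $X_t$ by the monotone likelihood ratio property of the normal location family. The Bayes rule is therefore of threshold form.
\item The cutoff solves $2aX_t+\log((1-p_t)/p_t)=-\log q$. Choosing $p_t\in(0,1)$ appropriately makes this cutoff equal to $\kappa_t$, which is possible for every real $\kappa_t$ since $p_t$ ranges over $(0,1)$.
\end{itemize}
Ties at $X_t=\kappa_t$ occur with probability zero, so the strict-inequality convention in \eqref{limit rule} does not matter.

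Finally, I will check propriety and finiteness. The prior $\xi$ is a proper probability measure, and the Bayes risk is finite because the loss is bounded by $\max\{q,1\}T$. Hence $\delta_\kappa$ minimizes the integrated risk and is proper Bayes. The FWER side condition in \eqref{limit rule} only restricts which $\kappa$ are considered, and the construction above works for any $\kappa$, so that condition plays no role in this step.
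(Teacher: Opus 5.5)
Your argument is correct, but it takes a different route from the paper. The paper's proof is essentially a citation. It sets $Z=(\nabla_{\theta} g)\Delta\sim N(\mu,I_T)$, notes that $\nabla_{\theta} g$ has full row rank so that $\mu$ ranges over $\mathbbm{R}^T$, and matches the hypothesis loss to the Cohen--Sackrowitz loss with $b=q$. It then invokes Theorem~4.3 of \citet{cohen2005decision} for the case $\Sigma=I_T$ to conclude that the single-step procedure is proper Bayes, without ever exhibiting the prior.

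You instead construct the prior explicitly, in three steps. First, you embed $\mu$ into $h$ via $h=I_0^{-1}(\nabla_{\theta} g)^{\top}\mu$. On this submodel $X=(\nabla_{\theta} g)\Delta$ is sufficient, because the complementary component of $\Delta$ is independent of $X$ and has a $\mu$-free law. Second, a product prior decouples the problem across $t$. Third, in each coordinate you tune a two-point prior at $\pm a$ so that the posterior log-odds cutoff $2aX_t+\log((1-p_t)/p_t)>-\log q$ falls exactly at $\kappa_t$.

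Your route buys several things.
\begin{itemize}
\item It is self-contained.
\item It treats the composite one-sided null $(\nabla_{\theta} g_t)h\le 0$ over all of $\mathbbm{R}^T$ directly. The paper's mapping passes through the point-null formulation $\mu_t=0$ vs.\ $\mu_t>0$ and tacitly relies on a prior for that restricted problem remaining Bayes on the larger parameter space.
\item It makes explicit the lift from a prior on $\mu$ to a prior on $h$, which is what the statement actually asserts. This lift is not automatic when $d>T$: a careless lift that couples $\mu$ with the null-space component of $h$ would make the rest of $\Delta$ informative. Your sufficiency step is exactly what rules this out.
\item It shows transparently that every threshold vector $\kappa\in\mathbbm{R}^T$ is proper Bayes, so the FWER side condition in the definition of $\delta_\kappa$ plays no role. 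This also shows how weak the Bayes property is here.
\end{itemize}
The paper's route buys brevity and a uniform link to the Cohen--Sackrowitz framework it also uses for Proposition~\ref{prop::admissibility}, at the cost of leaving the prior, the lift to $h$, and the composite-null transfer implicit.
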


The sufficient condition for the result in Proposition \ref{prop::Bayes} holds if  $(\nabla_{\theta} g_{_t})^{\top}\, I_{0}^{-1}\, (\nabla_{\theta} g_{_{t}})=1$ for all $t=1,\dots, T$ and $(\nabla_{\theta} g_{_t})^{\top}\, I_{0}^{-1}\, (\nabla_{\theta} g_{_{t'}})=0$ for $t\neq t'$. It implies a special intra-class structure on the covariance matrix $(\nabla_{\theta} g)\, I_{0}^{-1}\, (\nabla_{\theta} g)^{\top}$, where $\sigma^2=1$ and  $\rho=0$. Hence, similar to the case when $\rho=0$ in Proposition~\ref{prop::admissibility}, this is a plausible condition that depends primarily on the Fisher information and the gradients of the welfare contrasts evaluated at $\theta_0.$ The sufficient condition in Proposition \ref{prop::Bayes} does not reduce the dimension of the ranking problem in either the original experiment or the corresponding limit experiment, underscoring the relevance of the Proposition.

Moreover,  Proposition~\ref{prop::Bayes} implies that the finite sample ES ranking rule is asymptotically proper Bayes (converges to the limit rule where $\boldsymbol{\kappa}=\mathbf{0}$) if $\inf_{h\in\mathcal{H_0}}\prod_{t\in\mathcal{T}(h) } \Phi\!\left(-(\nabla_{\theta} g_t)h\right)\ge 1-\alpha.$
A sufficient condition is that there exists an $h^*\in\mathcal{H}_0$ such that 
$\Phi\!\left(-(\nabla_{\theta} g_t)h^*\right)\ge (1-\alpha)^{1/T}\iff(\nabla_{\theta} g_t)h^\ast \leq\Phi^{-1}((1-\alpha)^{1/|T|} )
 \text{  for all } t\in \mathcal{T}(h^*).$

Together, Propositions~\ref{prop::admissibility} and \ref{prop::Bayes} demonstrate that, in the presence of additional structure in the limit experiment, characterizing admissible and optimal decision rules using the additively separable \textit{hypothesis} loss function in the Gaussian model is generally a feasible task. In particular, such characterizations require restrictions on the joint distribution of estimators of welfare contrasts in the limit.

Motivated by Propositions~\ref{prop::admissibility} and \ref{prop::Bayes}, we impose a restriction that permits the characterization of admissible and optimal decision rules in the limiting Gaussian model using the same additively separable \textit{regret} loss function employed in the finite-sample analysis. Specifically, we assume that the Gaussian estimator of welfare contrasts in the limit Gaussian model, $(\nabla_{\theta} g) \Delta$, has \textit{one-dimensional stochastic variation}, i.e., $rank(Var((\nabla_{\theta} g) \Delta))=1,$ while allowing estimators of individual welfare contrasts to be either perfectly positively or perfectly negatively correlated. This configuration does not trivialize the original ranking problem. Even the ranking problem in the limit is not trivial under the rank 1 condition, as treatment orderings need not be transitive. Hence, although the source of randomness is one-dimensional, the decision problem remains intrinsically multidimensional. We formalize this setting through the following assumption.

\begin{assum}[One-dimensional stochastic variation]\label{assum: rank 1}
Let vectors of the Jacobian $\nabla_\theta g(\theta)$ be such that, for $t = kk'$,
$\nabla_\theta g_t(\theta) = \varphi(\pi_k, \pi_{k'}) \cdot \nabla_\theta \nu(\theta)$ for all $k < k' = 1, \dots, K$, where
$\varphi : \boldsymbol{\Pi}^2 \to \mathbbm{R} \setminus \{0\}$, $\nu : \mathbbm{R}^d \to \mathbbm{R}$, $\nabla_\theta \nu(\theta)$
denotes the Jacobian of $\nu(\theta)$, and the values $\varphi(\pi_k, \pi_{k'})$ are not all equal across $k < k' = 1, \dots, K$.
\end{assum}
Under Assumption~\ref{assum: rank 1}, for $t \geq 2$, we write $\nabla_{\theta} g_t = \lambda_t \nabla_{\theta} g_1$, where we
set $\lambda_1 = 1$, $\nabla_{\theta} g_1$ is the first row of $\nabla_{\theta} g$, and $\lambda_t \in \mathbbm{R} \setminus
\{0\}$ for every $t$ (a nonzero loading is required for the slice reparametrization in Theorem~\ref{thm:slicing}, which
divides by $\lambda_t$).
Thus, the estimator of welfare contrasts in the limit experiment is one-dimensional in uncertainty since $$rank(Var((\nabla_{\theta} g) \Delta))=rank((\nabla_{\theta} g)\, I_{0}^{-1}\, (\nabla_{\theta} g)^{\top})=rank(((\nabla_{\theta} g_1)^{\top}\, I_{0}^{-1}\, (\nabla_{\theta} g_1))\cdot\boldsymbol{\lambda}\boldsymbol{\lambda}^{\top})=1,$$
where $\boldsymbol{\lambda}=(1, \lambda_2, \dots, \lambda_T)$.
However, the ranking problem in the limit experiment is not one-dimensional in choice since $\lambda_t\in \mathbbm{R}$: $(\nabla_{\theta} g_t) \Delta\geq 0$ does not inform one about the sign of $(\nabla_{\theta} g_{t'}) \Delta$ for $t'\neq t$.
In the following example, we demonstrate that for the LIM models in Example~\ref{examp::lim}, Assumption~\ref{assum: rank 1} holds.

\begin{exmp}[Linear-in-means models continued]\label{examp::lim 2}
Given the LIM model in Example \ref{examp::lim} and under the standard restriction that  $|\eta_1|\leq 1$ (see, for example, \cite{bramoulle2009identification}), cluster $i$'s average outcome is
    \begin{align*}
        \bar{Y_i}=\eta_1\bar{Y_i}+ \eta_2\Pi_i+\eta_3\Pi_i+\bar{\epsilon}_{i}
        =\frac{\eta_2+\eta_3}{1-\eta_1}\Pi_i+ \frac{\bar{\epsilon}_{i}}{1-\eta_1}=\frac{\eta_2+\eta_3}{1-\eta_1}\Pi_i+ \varepsilon_i,\,\, \text{where}\,\,\varepsilon_i=\frac{\bar{\epsilon}_{i}}{1-\eta_1}.
    \end{align*}
Hence, the mean potential outcome of the population at each $\pi \in \boldsymbol{\Pi}$ is
$$\bar{Y}(\pi)=U(\pi,\theta)=\frac{\eta_2+\eta_3}{1-\eta_1}\cdot \pi +\mathbbm{E}_{P_\phi}[{\varepsilon}_{i}].$$ The exact vector of  welfare contrasts is of the form
$$(U(\pi_k,\theta)-U(\pi_{k'},\theta))_{_{k'>k=1,\dots, K}}=\left(\frac{\eta_2+\eta_3}{1-\eta_1}\cdot(\pi_k-\pi_{k'})\right)_{k'>k=1,\dots,K}.$$ Thus, $\theta_0$ (the $\theta$ value at which all the welfare contrasts equal zero) is such that $\eta_2=-\eta_3$ and $|\eta_1|< 1.$ Consequently, the Jacobian matrix of the welfare contrast evaluated at $\theta_0$ is $$\nabla_\theta g=\frac{1}{(1-\eta_1)}\cdot (0,1,1,0,\dots 0)^\top\left((\pi_k-\pi_{k'})\right)_{k'>k=1,\dots, K}.$$
Relative to the form of the Jacobian in Assumption \ref{assum: rank 1},
$\varphi(\pi_k,\pi_{k'})=\pi_k-\pi_{k'}$ and $\nabla_\theta \nu(\theta)=(1-\eta_1)^{-1}\cdot (0,1,1,0,\dots 0)$.
 $\qedsymbol$
\end{exmp}
% Although Example \ref{examp::lim 2} shows that, locally around $\theta_0$, the vector of welfare contrasts depends on the parameter only through a single scalar index, this rank-one structure is a property of the limit experiment rather than of the observed finite-sample problem. %In finite samples, the welfare contrasts are unknown and must be inferred from noisy data, so the decision maker does not directly observe the one-dimensional index governing local departures from the null. 
% Consequently, the ranking problem using observed data remains nontrivial to the decision maker. %despite the low-dimensional local structure of the experiment.

We remark on the scope of Assumption~\ref{assum: rank 1}.
The rank-one restriction in Assumption~\ref{assum: rank 1} characterizes a class of structural models in which welfare
contrasts collapse in the limit to a scalar index. The Linear-in-Means specification in Example~\ref{examp::lim 2} satisfies this restriction because the social multiplier $1/(1-\eta_1)$ enters every contrast multiplicatively. The assumption rules out three families of natural extensions: (i) models with heterogeneous treatment effects, in which case the rank rises with the number of heterogeneity dimensions $\dim(\eta_2,\eta_3,\dots)$; (ii) quadratic peer effects of the form $\eta_4\Pi_i^2$, which
generate a curvature term that is not collinear with the linear social-multiplier term; and (iii) multi-channel spillovers in which separate spillover mechanisms (e.g., information versus reciprocity) contribute additive but non-proportional contrasts. Our admissibility and optimality results in what follows are therefore confined to ranking problems in which the limiting Gaussian experiment exhibits one-dimensional stochastic variation; extending these results to higher-rank settings is left for future work.

In the next theorem, we show that if Assumption \ref{assum: rank 1} holds, threshold rules are admissible and \textit{minimax} optimal in the limiting Gaussian model.
To do so, we introduce additional notation. Let $h_0$ be a vector such that $(\nabla_{\theta} g) h_0 = 0$, so that $h_0$ is orthogonal to the space spanned by the row vectors of $\nabla_{\theta} g$. For each $t,$ define a \textit{slice} using the reparametrization $h_t(b_t,h_0)$ as $S_t(h_0):=\{h_t(b_t,h_0): b_t \in \mathbbm{R}\}$ where
$$h_t(b_t,h_0) = h_0 + \frac{b_t}{\lambda_t(\nabla_{\theta} g_1) I_0^{-1} (\nabla_{\theta} g_1)^{\top}} I_0^{-1} (\nabla_{\theta} g_1)^{\top},$$
with $b_t=\nabla_{\theta} g_t h_t(b_t,h_0)$. As a result, any pair of slices $S_t(h_0)$ and $S_{t'}(h_0)$, $t,t'=1,\dots, T$, are equal, i.e.,
$\{h_1(b,h_0): b \in \mathbbm{R}\}=\{h_1(b_1,h_0): b_1 \in \mathbbm{R}\}=\{h_2(b_2,h_0): b_2 \in \mathbbm{R}\}=\dots=\{h_{T}(b_T,h_0): b_T \in \mathbbm{R}\}$ under Assumption \ref{assum: rank 1}, with $$h_1(b,h_0) = h_0 + \frac{b}{(\nabla_{\theta} g_1) I_0^{-1} (\nabla_{\theta} g_1)^{\top}} I_0^{-1} (\nabla_{\theta} g_1)^{\top}.$$

\begin{theorem}\label{thm:slicing}
Suppose Assumption \ref{assum: rank 1} holds. Let $\Delta\sim N(h,I_0^{-1})$ for $h\in \mathbbm{R}^d$ and $\mathbf{L}_{\infty}(h):=\operatorname{diag}(L_{t,\infty}(1,h)-L_{t,\infty}(0,h))$ be the $(T\times T)$ diagonal matrix whose $(t,t)$ element is $L_{t,\infty}(1,h)-L_{t,\infty}(0,h)$. Consider a simple finite action problem $a=(a_1,\ldots,a_T)$ with $a_t \in\{0,1\}$. For all $h$ with $(\nabla_{\theta} g)h \neq 0$, loss functions $\{L_{t,\infty}(a,h)\}$ satisfy
\eq{
  \mathbf{L}_{\infty}(h) ((\nabla_{\theta} g) h) < 0, \label{eq:right_loss}
}
where the vector-valued inequality holds element-by-element.
\begin{enumerate}
  \item[(i)] Let $\tilde{\delta}(\Delta)=({\delta}^1(\Delta), \delta^2(\Delta), \dots, \delta^{T}(\Delta))^\top$ be a $(T \times 1)$ vector of any randomized decision rules. Let $h_0 \in\mathbbm{R}^d$ be given. Suppose that risk function $R_{\infty}(\delta,h)$ is additively separable, i.e.,\ $R_{\infty}(\delta,h)=\sum_{t=1}^T R_{t,\infty}(\delta^{t},h)$. Then, there exists a rule $ \delta_c :=\left(\delta_{c_1}(\Delta), \ldots,\delta_{c_T}(\Delta)  \right)^{\top}$  with $c=(c_1,\ldots,c_T)^{\top}$ and $\delta_{c_t}= \mathbbm{1}\left(  (\nabla_{\theta} g_t)  \Delta > c_t\right)$ for $t=1,\ldots,T$ such that
  \eqs{
  R_{\infty}(\delta_c(\Delta), h)  \le R_{\infty}(\tilde{\delta}(\Delta), h),
 \quad \text{on the subspace $\{h_1(b,h_0):b \in \mathbbm{R}\}$.}
 }
    \item[(ii)] Suppose that $L_{t,\infty}(a_t,h)$ depends on $h$ only through $ (\nabla_{\theta} g_t) h$ for all $t$. If there exists a minimax rule, then $\delta_{c^*}(\Delta)$ is minimax for some $(T\times 1)$ vector $c^*=(c_1^*,\ldots,c_T^*)^\top$, which can be obtained by solving $\inf_{(c_1,\ldots,c_T)} \sup_{b}\sum_{t=1}^T R_{t,\infty}(\delta_{c_t}, h_1(b,0))$.

      \item[(iii)] Suppose that $L_{t,\infty}(a_t,h)$ depends on $h$ only through $ (\nabla_{\theta} g_t) h$ for all $t$. If there exists a rule that minimizes the upper bound of the supremum of risk, then $\delta_{c^*}(\Delta)$ is such a rule for some $(T\times 1)$ vector $c^*=(c_1^*,\ldots,c_T^*)^\top$, which can be obtained by solving $\sum_{t=1}^T\inf_{c_t}\sup_{b} R_{t,\infty}(\delta_{c_t}, h_1(b,0))$.
\end{enumerate}
\end{theorem}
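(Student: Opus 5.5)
The plan is to follow the slicing argument of \citet{hirano2009asymptotics} (their Lemma 3 and Theorem 3 for binary actions), carried out component by component. Additive separability lets us handle each pair $t$ as its own binary problem. Assumption~\ref{assum: rank 1} then guarantees that all $T$ component problems share a single one-dimensional sufficient direction.

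\textbf{Step 1: reduction on a slice.} Fix $h_0$ with $(\nabla_\theta g)h_0=0$ and restrict to $h=h_1(b,h_0)$, $b\in\mathbbm{R}$. On this slice, $\Delta\sim N(h,I_0^{-1})$ is a Gaussian family indexed by the scalar $b$. The statistic $(\nabla_\theta g_1)\Delta\sim N\big(b,(\nabla_\theta g_1)I_0^{-1}(\nabla_\theta g_1)^\top\big)$ is sufficient for $b$, by the usual Gaussian projection argument: the part of $\Delta$ orthogonal, in the $I_0$ metric, to the direction $I_0^{-1}(\nabla_\theta g_1)^\top$ has a distribution free of $b$. Moreover, since $\nabla_\theta g_t=\lambda_t\nabla_\theta g_1$, we have $(\nabla_\theta g_t)\Delta=\lambda_t(\nabla_\theta g_1)\Delta$ and $(\nabla_\theta g_t)h=\lambda_t b$. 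So every component problem is a binary decision problem about the sign of $\lambda_t b$, and the same scalar Gaussian statistic is sufficient for all of them.

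\textbf{Step 2: each component via the monotone likelihood ratio (Karlin--Rubin).} Condition \eqref{eq:right_loss} says that $L_{t,\infty}(1,h)-L_{t,\infty}(0,h)$ has sign opposite to $(\nabla_\theta g_t)h=\lambda_t b$. Hence action $1$ is strictly preferred exactly when $\lambda_t b>0$, so each component is a monotone two-action problem in the parameter $\lambda_t b$. The statistic $(\nabla_\theta g_t)\Delta\sim N(\lambda_t b,\lambda_t^2\sigma_1^2)$ has a monotone likelihood ratio in $\lambda_t b$. Part (i) then follows from the Karlin--Rubin essentially-complete-class theorem for monotone two-action problems (as used in Hirano--Porter's Lemma 3). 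For any randomized $\delta^t$, we first replace it by its conditional expectation given the sufficient statistic, which leaves the risk unchanged. We then find a cutoff $c_t$ such that $\delta_{c_t}=\mathbbm{1}((\nabla_\theta g_t)\Delta>c_t)$ matches $E_{b^*}\delta^t$ at the crossing point $b=0$. The Neyman--Pearson argument then gives $R_{t,\infty}(\delta_{c_t},h_1(b,h_0))\le R_{t,\infty}(\delta^t,h_1(b,h_0))$ for all $b$. One technicality: when $\lambda_t<0$, the orientation of the test in terms of $(\nabla_\theta g_1)\Delta$ flips, but it stays a threshold rule in $(\nabla_\theta g_t)\Delta$. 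Summing over $t$ gives (i).

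\textbf{Step 3: parts (ii) and (iii).} If each $L_{t,\infty}$ depends on $h$ only through $(\nabla_\theta g_t)h$, then the risk of $\delta_c$ at $h$ depends only on $b=(\nabla_\theta g_1)h$. This is because the law of $(\nabla_\theta g_t)\Delta$ depends only on $(\nabla_\theta g_t)h$. So $\sup_h R_\infty(\delta_c,h)=\sup_b\sum_t R_{t,\infty}(\delta_{c_t},h_1(b,0))$. For (ii), take any minimax $\tilde\delta$ and apply (i) on the slice through $h_0=0$ to obtain $\delta_c$ with $\sup_b R_\infty(\delta_c,h_1(b,0))\le\sup_h R_\infty(\tilde\delta,h)$. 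The left-hand side equals $\sup_h R_\infty(\delta_c,h)$, so $\delta_c$ is also minimax, and $c^*$ solves the stated inf--sup. For (iii), we use the bound $\sup_b\sum_t R_{t,\infty}\le\sum_t\sup_b R_{t,\infty}$. This upper bound decouples across $t$, and (i) applied componentwise shows that each $\inf\sup$ is attained within threshold rules.

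\textbf{Main obstacle.} I expect the hardest step to be the componentwise complete-class claim in Step 2, uniformly in $b$ with one cutoff per $t$. That claim has to hold for randomized rules that depend on all of $\Delta$, not just on the sufficient statistic, and for loadings $\lambda_t$ of either sign. The conditioning-on-sufficient-statistic reduction, together with the sign convention from \eqref{eq:right_loss}, is what should make it go through.
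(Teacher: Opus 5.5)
Your proposal is correct and follows essentially the same route as the paper: you choose $c_t$ so that $\delta_{c_t}$ and $\tilde\delta^t$ have the same rejection probability at $b=0$, apply Neyman--Pearson on each side of the common slice and sign the risk difference via \eqref{eq:right_loss}, then use the invariance of threshold-rule risk in $h_0$ for (ii) and the bound $\sup_b\sum_t\le\sum_t\sup_b$ for (iii). The sufficiency/Karlin--Rubin reduction in your Step 2 is harmless but unnecessary, since Neyman--Pearson already compares the threshold test against arbitrary randomized rules of $\Delta$; your componentwise justification of (iii) is somewhat more explicit than the paper's.
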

\noindent Condition \eqref{eq:right_loss} requires that a higher loss be assigned to any incorrect choice for each $t$, and $L^m_{t,\infty}(\delta^{t},h)$ for $m\in \{H,R\}$ satisfies this condition.

Theorem \ref{thm:slicing}(i) implies that the threshold rule ${\delta}_{c}(\Delta)$ is \emph{admissible} on the subspace $\{h_1(b,h_0):b \in \mathbbm{R}\}$.
Theorem \ref{thm:slicing}(ii) asserts that the optimal cutoff points under the minimax criterion can be obtained by solving an easier problem. In particular, the minimax rule under Assumption \ref{assum: rank 1} in the limit experiment is a cutoff rule where the cutoffs are the solution to $$\inf_{(c_1,\dots,c_T)}\max\{\sup_{b>0} f_+(b, (c_1, \dots, c_T)), \sup_{b<0} f_-(b, (c_1,\dots,c_T))\}$$ where $f_+(b, (c_1, \dots, c_T)):=b(\sum_{t:\lambda_t>0}\sigma_{g_{t}}\lambda_t\Phi(c_t-\lambda_tb)-\sum_{t:\lambda_t<0}\sigma_{g_{t}}\lambda_t\Phi(\lambda_tb-c_t))$ and \\ $f_-(b, (c_1, \dots, c_T)):=b(\sum_{t:\lambda_t<0}\sigma_{g_{t}}\lambda_t\Phi(c_t-\lambda_tb)-\sum_{t:\lambda_t>0}\sigma_{g_{t}}\lambda_t\Phi(\lambda_tb-c_t)).$ Unfortunately, there is no closed-form analytical solution to this problem. 

Building on Theorem \ref{thm:slicing}(ii), part (iii) states that if we adopt a slightly more conservative objective, namely \textit{minimizing the upper bound of the worst-case risk}, the optimal cutoff points follow from a simplified optimization such that $c^{*}$ equals a $T$-dimensional vector of zeros.

To obtain the finite-sample rule that converges to the asymptotically optimal rule in Theorem \ref{thm:slicing}(iii), we impose the following regularity conditions.
\begin{assum}\label{ass:estimator}
\quad
\begin{itemize}
   \item [(i)] There exists the best regular estimator $\hat{\tht}$ such that
\eqs{
  \sqrt{n}(\hat{\tht}_n-\tht_0 -h/\sqrt{n}) \overset{h}{\rightsquigarrow} N(0,I_0^{-1})~~ \forall h \in \mathbbm{R}^d,
}where $\overset{h}{\rightsquigarrow}$ denotes the convergence in distribution under the sequence of $P_{\theta_0 + h/\sqrt{n}, n}$.\\
\item [(ii)] Recall that $\sigma_{tt}^2:=(\nabla_\theta g_t) I_0^{-1} (\nabla_\theta g_t)^{\top}$. Then, there exists an estimator $\hat{\sigma}_{tt}$ such that
\eqs{
  \hat{\sigma}_{tt} \overset{p}{\to} \sigma_{tt}~~\forall t =1,\ldots,T
\,\,\text{ under }\,\, \theta_0.}
\end{itemize}

\end{assum}
\noindent These regularity conditions are similar to those in \citet{hirano2009asymptotics}. Assumption \ref{ass:estimator}(i) ensures the existence of an efficient estimator for $\theta_0$, and (ii) ensures the existence of a consistent estimator for $\sigma_{tt}$ for each $t=1,\dots, T$.

\begin{theorem}\label{thm:asymp_opt_parametric}
Suppose that Assumptions \ref{assum: rank 1} and \ref{ass:estimator} hold.  Under the additively separable regret loss function, the finite-sample threshold pairwise rules $\bar{\dt}_{n}^{R}$, where the components
    $$  \bar{\dt}_{n,t}^R := \mathbbm{1} \lt(g_t(\hat{\tht}_n)>0\rt)
,$$
 converges to a rule that minimizes the upper bound of the worst-case risk in the limit, i.e.,
    \eq{
  \sup_{H \in \mathcal{H}} \liminf_{n\to \infty}  \sum_{t=1}^T\sup_{h \in H} \sqrt{n}R_{t}(\bar{\dt}_{n,t}^R,\theta_0 + \frac{h}{\sqrt{n}})=
  \inf_{\delta_n\in\mathcal{D}} \sup_{H \in \mathcal{H}} \liminf_{n \to \infty}  \sum_{t=1}^T\sup_{h \in H} \sqrt{n}R_{t}(\dt_{n},\theta_0 + \frac{h}{\sqrt{n}}),
}
where $\mathcal{H}$ is a collection of all finite subsets of $\mathbbm{R}^d$ and $\mathcal{D}$ is the set of all sequences of decision rules that converge to the asymptotic decision rule.
\end{theorem}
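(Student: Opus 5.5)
The plan follows the template of \citet{hirano2009asymptotics}, Theorem~3.4 and its corollaries, carried out component by component thanks to additive separability. There are three steps: compute the limit of the ES-type rule, evaluate the right-hand side of the display through the limit experiment, and show the two values agree.

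\textbf{Step 1: limit of the plug-in rule.} Fix $t$ and a local alternative $h$. By the delta method and Assumption~\ref{ass:estimator}(i), under $P_{\theta_0+h/\sqrt n,n}$,
\[
\sqrt n\, g_t(\hat\theta_n)=\sqrt n\, g_t(\theta_0)+(\nabla_\theta g_t)\sqrt n(\hat\theta_n-\theta_0)+o_p(1)\rightsquigarrow (\nabla_\theta g_t)\Delta,
\]
with $\Delta\sim N(h,I_0^{-1})$, since $g(\theta_0)=0$. Because $\bar\delta^R_{n,t}=\mathbbm 1(\sqrt n g_t(\hat\theta_n)>0)$, the continuous mapping theorem applies: the limit variable $(\nabla_\theta g_t)\Delta$ is nondegenerate normal with variance $\sigma_{tt}^2>0$ (using $I_0$ nonsingular and $\nabla_\theta g_t\neq0$ under Assumption~\ref{assum: rank 1}), so it has no atom at zero. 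Hence $\bar\delta^R_{n,t}$ converges in the sense of Proposition~3.1 of \citet{hirano2009asymptotics} to the threshold rule $\delta_{c_t}$ with $c_t=0$. For each fixed $h$, the scaled regret loss $\sqrt n L^R_t$ converges to the bounded-in-probability limit $L^R_{t,\infty}$. This gives
\[
\sqrt n R_t\big(\bar\delta^R_{n,t},\theta_0+h/\sqrt n\big)\to R^R_{t,\infty}(\delta_{0},h),
\]
with uniform integrability following from the bound $|L^R_{t,\infty}|\le|(\nabla g_t)h|$. Since each $H\in\mathcal H$ is finite, the inner sup commutes with the limit, and the left-hand side of the theorem equals
\[
\sup_H\sum_t\max_{h\in H}R^R_{t,\infty}(\delta_0,h)=\sum_t\sup_{h}R^R_{t,\infty}(\delta_0,h).
\]
The interchange of $\sup_H$ with the sum is valid because the suprema can be approached jointly by taking unions of finite sets.

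\textbf{Step 2: lower bound for any sequence.} Take any $\delta_n\in\mathcal D$ with limit rule $\delta$ in the limit experiment. By the same finite-$H$ argument, together with the matching of limits of experiments (Portmanteau/lower semicontinuity for the bounded-below loss, as in \citet{hirano2009asymptotics}, Theorem~3.4), the right-hand side value for $\delta_n$ is at least $\sum_t\sup_h R^R_{t,\infty}(\delta^t,h)$. Theorem~\ref{thm:slicing}(iii) then lower-bounds this, over all limit rules, by $\sum_t\inf_{c_t}\sup_b R_{t,\infty}(\delta_{c_t},h_1(b,0))$. The regret loss depends on $h$ only through $(\nabla g_t)h$, and by Assumption~\ref{assum: rank 1} every such value is attained on the slice, so restricting to slices loses nothing.

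\textbf{Step 3: the zero cutoff attains the bound.} It remains to show that $c_t^*=0$ solves each scalar problem
\[
\inf_{c}\sup_{b}\; b\big[\mathbbm 1(b>0)-\Pr(b+\sigma_{tt}Z>c)\big],\qquad Z\sim N(0,1).
\]
This is the one-dimensional minimax-regret problem of \citet{hirano2009asymptotics}. The objective is symmetric under $(b,c)\mapsto(-b,-c)$, and moving $c$ away from $0$ raises the worst-case regret on one side. So $c=0$ is optimal, and $\delta_0$ matches the lower bound.

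I expect the main obstacle to be the interchange of limits in Step 2: justifying $\liminf_n\sup_{h\in H}$ of the scaled risk $\ge\sup_{h\in H}$ of the limit risk, given that the regret loss is unbounded in $h$. I would handle this by using finiteness of $H$, which reduces the claim to a pointwise statement for each $h$. For each fixed $h$ the loss takes only the two values $0$ and $|(\nabla g_t)h|$, so convergence of the risk follows from convergence of $\Pr(\delta_{n,t}=1)$ under the contiguous sequence, via Le Cam's third lemma.
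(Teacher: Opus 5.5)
Your proposal is correct and follows essentially the same route as the paper. You identify the zero-threshold limit rule matched by $\bar{\delta}^R_n$, show that the scaled component risks converge to their limit-experiment counterparts, and combine Theorem~\ref{thm:slicing} with the symmetric one-dimensional minimax-regret problem of \citet{hirano2009asymptotics} to get $c_t^*=0$. The differences are minor: you get the matching directly from Assumption~\ref{ass:estimator}(i) under each $h$, so neither Le Cam's third lemma nor $\hat\sigma_{tt}$ is needed at a zero cutoff, and you write out the lower-bound/attainment sandwich where the paper invokes Lemma~4 of \citet{hirano2009asymptotics}.
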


Theorem \ref{thm:asymp_opt_parametric} implies that under Assumption \ref{assum: rank 1} and using the regret loss function, the ES ranking rule converges to a rule that minimizes the upper bound of the worst-case risk in the limit. This lends theoretical support in the asymptotics for the first-stage design problem studied in Section~\ref{Quasi-optimal Experiment Design}, where quasi-optimal designs are obtained by minimizing a feasible upper bound on the worst-case risk of the ES ranking rule.

We have restricted our attention to the class of parametric models $\Theta$ in this section, but the results extend to the class of distributions $\mathcal{P}$ with more involved notation. Instead of repeating the same arguments with more complex notation, we refer to \citet[section 4]{hirano2009asymptotics} and \citet{van1991asymptotic} for the extension. The main difference is that the multivariate Gaussian limit experiment is replaced by an infinite Gaussian sequence.

\section{Monte Carlo Simulations}\label{sec:simulations}

In this section, we conduct Monte Carlo experiments to examine the
theoretical properties established in the previous sections. All simulations use the LIM model of
\citet{manski1993identification} as the data generating process. The
outcome for individual $j$ in cluster~$i$ is
\begin{align}\label{eq:sim_lim}
  Y_{ij} = \eta_1 \bar{Y}_i + \eta_2 \Pi_i
    + \eta_3 Z_{ij} + \epsilon_{ij},
\end{align}
where $\bar{Y}_i = n_i^{-1}\sum_{j'} Y_{ij'}$ is the cluster mean outcome, $\Pi_i = n_i^{-1}\sum_{j'} Z_{ij'}$ is the realized saturation, $Z_{ij}\in\{0,1\}$ is the treatment indicator, and $\epsilon_{ij}\sim N(0,\sigma_\epsilon^2)$ is independent of $(Z_{ij},\Pi_i)$. The outcome $Y_{ij}$ is endogenous: it depends on $\bar{Y}_i$, which in turn depends on $Y_{ij}$ itself. This simultaneity, governed by the endogenous peer effect $\eta_1$ with
$|\eta_1|<1$, creates within-cluster dependence that strengthens as $\eta_1$ approaches unity through the social multiplier $1/(1-\eta_1)$. The remaining parameters $\eta_2$ and $\eta_3$ capture the exogenous peer effect and the direct treatment effect, respectively. Solving \eqref{eq:sim_lim} for the equilibrium cluster mean yields the reduced form
\begin{align}\label{eq:sim_reduced}
  \bar{Y}_i
    = \frac{\eta_2 + \eta_3}{1 - \eta_1}\,\Pi_i
    + \frac{1}{1-\eta_1}\,\bar{\epsilon}_i,
\end{align}
where $\bar{\epsilon}_i \sim N(0, \sigma_\epsilon^2/n_i)$. The welfare function is $U(\pi_k, \theta) = (\eta_2+\eta_3)/(1-\eta_1)\cdot \pi_k$, so the composite parameter $\eta_2+\eta_3$ determines the signal strength for ranking.

We consider $K=4$ saturation levels $\Pi = \{0.1, 0.3, 0.5, 0.7\}$, yielding $T = 6$ pairwise comparisons. Clusters and units are assigned to saturation levels and treatments, respectively, under the two-stage complete randomization (Assumption~\ref{structure of design}), and all results are based on $R = 10{,}000$ replications.

\subsection{Risk Bounds and the Role of Accounting for Interference}\label{sec:sim_bounds}
We first compare the Monte Carlo (MC) risk of the ES ranking rule against the risk bounds we derived in the previous section. We also illustrate that the \citet{manski2004statistical} bound, which does not account for within-cluster dependence, may fall below the actual MC~risk when interference is sufficiently strong.

We compare the MC risk and the finite-sample upper bounds while changing the number of clusters ($C$), the lower bound of cluster sizes ($\underline{n}$), and the size of the signal strength for ranking ($\eta_2+\eta_3$). We fix the remaining parameters to
$\eta_1 =0.5$, $\sigma_\epsilon = 2.0$. We conduct the two-stage complete randomization with balanced allocation, and
draw cluster sizes i.i.d.\ uniformly on $\{\underline{n},\dots,\overline{n}\}$ so that the unit-level and cluster-level
closed-form bounds of Corollary~\ref{cor:unit-level-bounds} differ.

We report both closed-form bounds alongside the MC risk: the unit-level bound \eqref{eq:bound-unit-bounded 2}, which carries
the cluster-size heterogeneity factor $\underline{n}/\overline{n}$, and the cluster-level bound
\eqref{eq:bound-cond-cluster 2}, which does not. By Corollary~\ref{cor:unit-level-bounds} the unit-level bound is never
tighter than the cluster-level bound, the two coincide as $\underline{n}/\overline{n}\to 1$, and both share the common
exponent coefficient $C_kC_{k'}/(C_k+C_{k'})^2$ that equals $1/4$ under the balanced design.

Table~\ref{tab:sim1} and Figure \ref{fig:sim1} report results across three panels. In Panel~A, we fix the size support $\underline{n}=10$, $\overline{n}=40$ and $\eta_2+\eta_3=0.05$, and vary $C \in \{12, 24, 48, 96, 192\}$. Both bounds are constant in $C$ under the balanced design---the per-pair exponent coefficient $C_kC_{k'}/(C_k+C_{k'})^2$ equals $1/4$ for every $C$---so they stay near $0.20$ throughout, while the MC~risk declines slowly from $0.095$ to $0.082$. At this weak signal, the unit-level and cluster-level bounds are nearly indistinguishable ($0.2000$ versus $0.1998$): the exponent is close to zero, so each bound is dominated by the sum of welfare gaps, and the heterogeneity factor $\underline{n}/\overline{n}$ has almost no effect.

\begin{table}[tbp]
\centering
\caption{Finite-Sample Risk Bound vs.\ Monte Carlo Risk}
\label{tab:sim1}
\footnotesize
\begin{threeparttable}
\setlength{\tabcolsep}{6pt}
\renewcommand{\arraystretch}{0.95}
\begin{tabular}{cccc}
\toprule
& MC Risk
  & Unit-level bound \eqref{eq:bound-unit-bounded 2}
  & Cluster bound \eqref{eq:bound-cond-cluster 2} \\
\midrule
\multicolumn{4}{l}{\textit{Panel~A: Varying $C$
  \quad ($\underline{n}=10$, $\overline{n}=40$, $\eta_2+\eta_3=0.05$)}} \\[2pt]
$C=12$  & 0.09517 & 0.2000 & 0.1998 \\
$C=24$  & 0.09341 & 0.2000 & 0.1998 \\
$C=48$  & 0.09082 & 0.2000 & 0.1998 \\
$C=96$  & 0.08704 & 0.2000 & 0.1998 \\
$C=192$ & 0.08209 & 0.2000 & 0.1998 \\[4pt]
\hline
\multicolumn{4}{l}{\textit{Panel~B: Varying cluster size
  \quad ($\overline{n}=100$, $C=48$, $\eta_2+\eta_3=0.05$)}} \\[2pt]
$\underline{n}=5$   & 0.08864 & 0.2000 & 0.1998 \\
$\underline{n}=10$  & 0.08810 & 0.2000 & 0.1998 \\
$\underline{n}=20$  & 0.08630 & 0.2000 & 0.1998 \\
$\underline{n}=50$  & 0.08359 & 0.1999 & 0.1998 \\
$\underline{n}=100$ & 0.08075 & 0.1998 & 0.1998 \\[4pt]
\hline
\multicolumn{4}{l}{\textit{Panel~C: Varying
  $\eta_2+\eta_3$ \quad ($\underline{n}=10$, $\overline{n}=40$, $C=48$)}} \\[2pt]
$\eta_2+\eta_3=0.1$ & 0.16413 & 0.3996 & 0.3985 \\
$\eta_2+\eta_3=0.2$ & 0.26166 & 0.7971 & 0.7883 \\
$\eta_2+\eta_3=0.3$ & 0.30499 & 1.1901 & 1.1612 \\
$\eta_2+\eta_3=0.4$ & 0.31421 & 1.5767 & 1.5097 \\  
$\eta_2+\eta_3=0.5$ & 0.29952 & 1.9548 & 1.8278 \\
$\eta_2+\eta_3=0.6$ & 0.27610 & 2.3224 & 2.1107 \\
$\eta_2+\eta_3=0.7$ & 0.25069 & 2.6778 & 2.3555 \\
$\eta_2+\eta_3=0.8$ & 0.22334 & 3.0194 & 2.5606 \\
$\eta_2+\eta_3=0.9$ & 0.19665 & 3.3458 & 2.7262 \\
\bottomrule
\end{tabular}
%  \begin{tablenotes}[flushleft]\footnotesize
% \item Cluster sizes are drawn i.i.d.\ uniformly on $\{\underline{n},\dots,\overline{n}\}$. 
%  % The unit-level bound
%  %  \eqref{eq:bound-unit-bounded 2} is evaluated at the design endpoints and carries the heterogeneity factor
%  % $\underline{n}/\overline{n}$; the cluster-level bound \eqref{eq:bound-cond-cluster 2} does not. The two coincide as $\underline{n}/\overline{n}\to1$ and the gap widens with the signal $\eta_2+\eta_3$ (Panel~C).
%  \end{tablenotes}
\end{threeparttable}
\end{table}

In Panel~B, we fix $C=48$, $\eta_2+\eta_3=0.05$, and the upper size bound $\overline{n}=100$, and raise the lower bound $\underline{n}\in\{5,10,20,50,100\}$ toward $\overline{n}$. As $\underline{n}$ increases the size distribution becomes less dispersed, so the heterogeneity factor $\underline{n}/\overline{n}$ rises toward $1$ and the unit-level bound descends monotonically toward the cluster-level bound, reaching it exactly when $\underline{n}=\overline{n}=100$ ($0.2000\to0.1998$, against a flat cluster-level $0.1998$). The cluster-level bound is unaffected by the sizes throughout, and the MC~risk falls from $0.089$ to $0.081$ as the clusters grow larger on average. 

In Panel~C, we fix $\underline{n}=10$, $\overline{n}=40$ and $C=48$, and vary $\eta_2+\eta_3$ over $\{0.1, 0.2, \ldots, 0.9\}$. The MC~risk displays a hump-shaped pattern, rising to a peak of $0.314$ at $\eta_2+\eta_3=0.4$ and then declining to $0.197$ at $\eta_2+\eta_3=0.9$. This reflects two opposing forces: as the signal grows, the cost of each misranking (proportional to $\Delta_{kk'}$) increases, but the probability of misranking (which decays exponentially in $\Delta_{kk'}^2$) decreases. For small signals, the linear cost effect dominates; for large signals, the exponential probability decay takes over. Both bounds, by contrast, increase monotonically over this range, and here they separate visibly: the cluster-level bound rises from $0.40$ to $2.73$ while the looser unit-level bound rises from $0.40$ to $3.35$. The gap between them is exactly the $\underline{n}/\overline{n}$ factor in the exponent, which is immaterial at low signal but magnified as the welfare gaps grow. In all cases, the MC~risk lies strictly below both bounds.

We next examine the behavior of the \citet{manski2004statistical} bound under strong
interference. We search over configurations with $\eta_1 \in \{0.5, 0.8, 0.9\}$,
$\sigma_\epsilon=0.3$, and $n_i \in \{20, 50, 100, 200\}$, with $C=12$ (3~clusters per arm) and $\eta_2+\eta_3=0.05$. As $\eta_1$ increases to $1$, the
social multiplier $1/(1-\eta_1)$ amplifies within-cluster dependence. The Manski bound, treating each of the $C_k \cdot n_i$ unit-level observations as independent,
shrinks rapidly with $n_i$, while the true risk does not shrink commensurately because the observations within each cluster are highly correlated. Figure~\ref{fig:manski_violation} illustrates the results. At $\eta_1 = 0.5$ (left panel), the Manski bound lies safely above the MC~risk. At $\eta_1 = 0.8$ (center), the bound is violated for $n_i \geq 100$. At $\eta_1 = 0.9$ (right), the MC~risk exceeds the Manski upper bound at every cluster size, with the gap widening as $n_i$ grows. The paper's bound remains valid throughout, demonstrating the necessity of accounting for interference in the upper bounds.

\begin{figure}[tbp]
\centering
\caption{Closed-form risk bounds vs.\ Monte Carlo risk 
% Panel~A varies the number of clusters~$C$; Panel~B raises the lower size bound $\underline{n}$ (with $\overline{n}=100$ fixed);
% Panel~C varies signal strength $\eta_2+\eta_3$. Baseline: $K=4$, $\eta_1=0.5$, $\sigma_\epsilon=2.0$, balanced
% allocation, sizes drawn i.i.d.\ on $\{\underline{n},\dots,\overline{n}\}$. Each panel plots the MC~risk (black), the
% unit-level bound \eqref{eq:bound-unit-bounded 2} (blue), and the cluster-level bound \eqref{eq:bound-cond-cluster 2}
% (purple). Both bounds are constant in $C$ (Panel~A) and in cluster size (Panel~B) and increase with the signal (Panel~C),
% where the looser unit-level bound separates from the cluster-level bound by the factor $\underline{n}/\overline{n}$; the
% MC~risk lies below both throughout.
}
\includegraphics[width=0.48\textwidth]{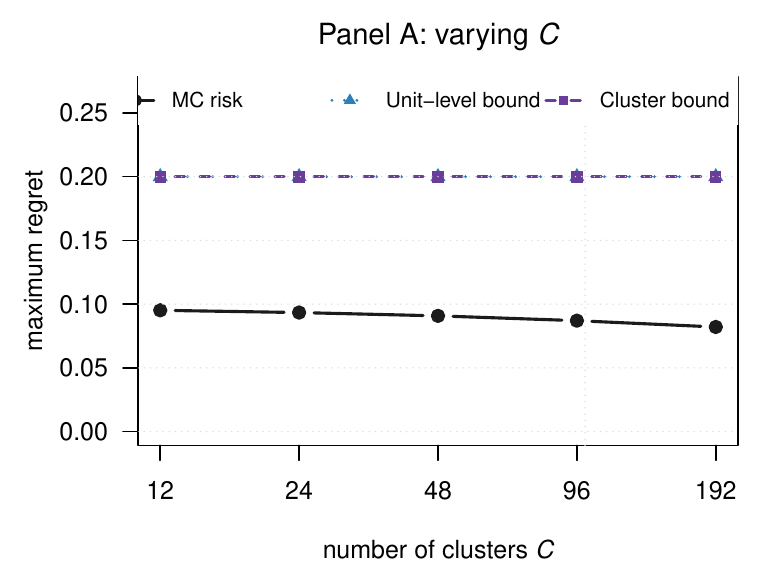}
\hfill
\includegraphics[width=0.48\textwidth]{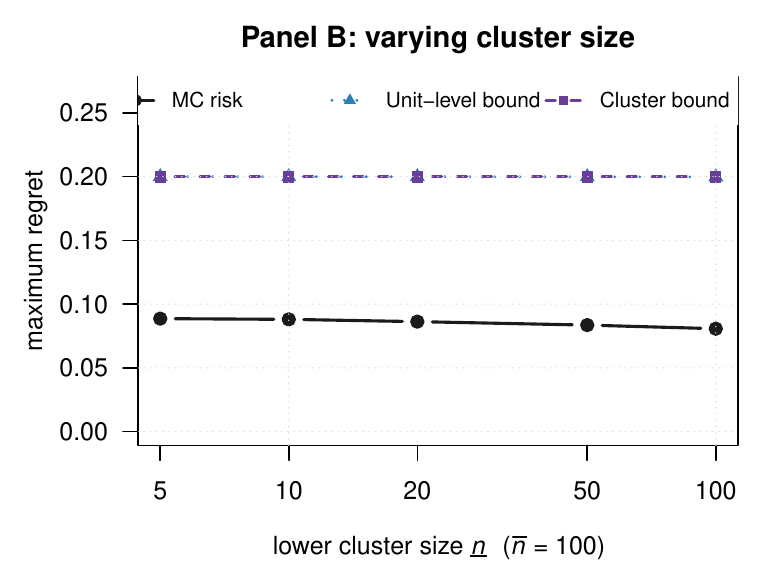}
\includegraphics[width=0.48\textwidth]{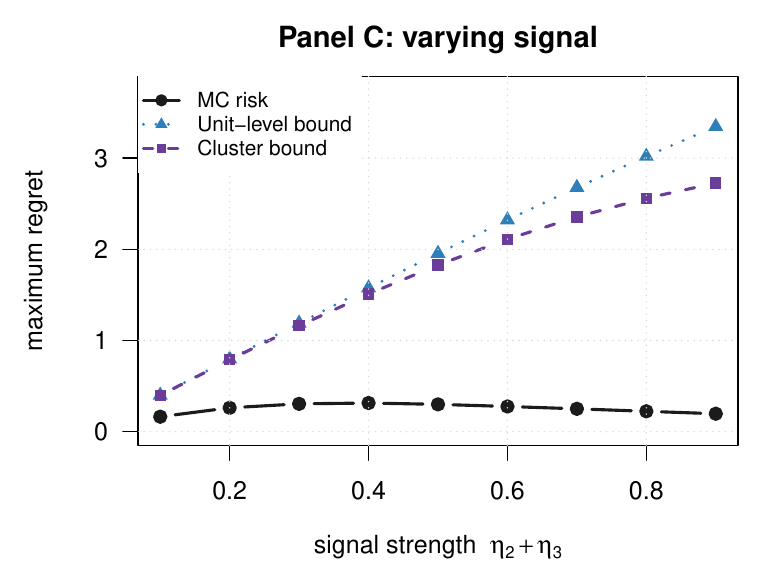}
% Figures generated by simulations.R under the complete-graph bound
% exp(-2*Delta^2 * C_k C_k' /(C_k+C_k')^2); see that script to reproduce.
\label{fig:sim1}
\end{figure}

\begin{figure}[tbp]
\centering
\caption{Manski bound validity across $\eta_1$ with $\sigma_\epsilon = 0.3$.
MC~risk (black), Manski bound (orange), and the paper's complete-graph bound (purple).
$K=4$, $C=12$, $\eta_2+\eta_3=0.05$.}
\label{fig:manski_violation}
\includegraphics[width=\textwidth]{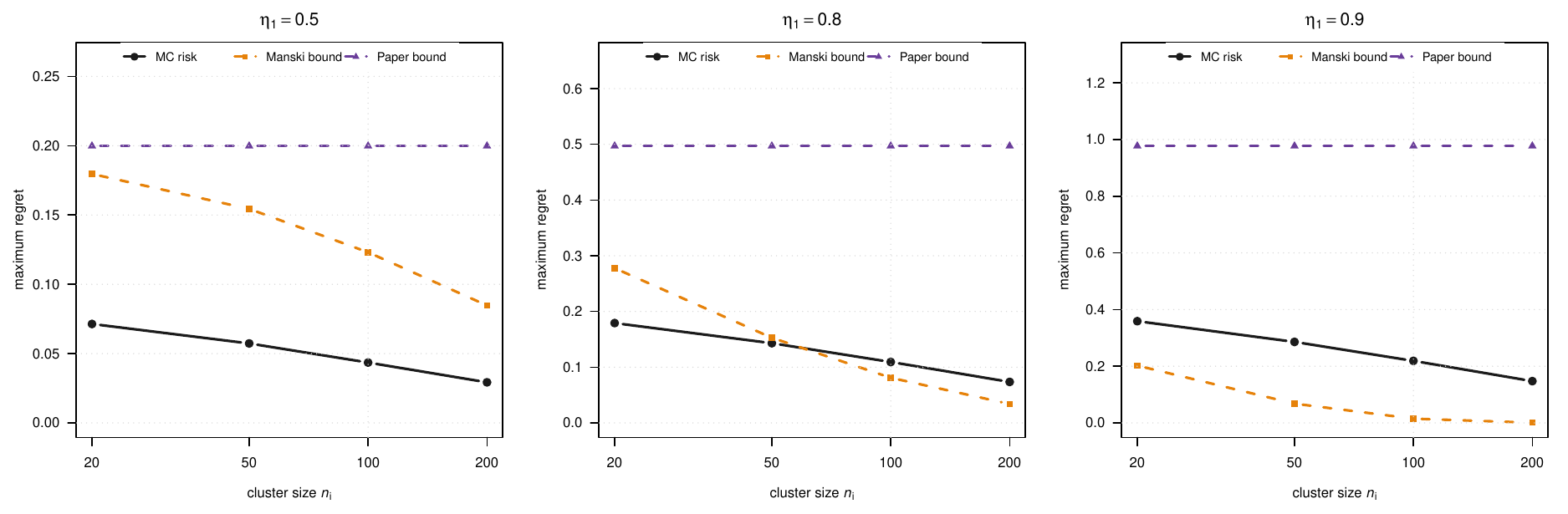}
\end{figure}

\subsection{Balanced vs.\ Unbalanced Cluster
  Allocation}\label{sec:sim2}

We confirm that balanced cluster allocation minimizes the maximum
risk of the ES ranking rule, as predicted by the quasi-optimal design theory
in Section~\ref{Quasi-optimal Experiment Design}. We examine this by
comparing the MC~risk of the ES ranking rule under three designs: balanced
($\alpha_k = 1/4$ for all~$k$), extreme tilt
($.50/.20/.20/.10$), and random Dirichlet$(1,1,1,1)$. The parameters
are $K=4$, $C=60$, $n_i=20$, $\eta_1 = 0.3$,
$\sigma_\epsilon=0.2$, and $\eta_2+\eta_3$ varies over
$\{0.1, 0.2, 0.3, 0.4, 0.5\}$.

Table~\ref{tab:sim2} and Figure \ref{fig:sim2} report the maximum MC~risk across the signal grid and the envelope bound for each design. The balanced design
achieves the lowest maximum risk ($0.010$), confirming the
quasi-optimality result, compared to $0.014$ for the extreme tilt
and $0.023$ for random Dirichlet.

\begin{table}[H]
\centering
\caption{Maximum Risk by Allocation Design ($C=60$)}
\label{tab:sim2}
\small
\begin{threeparttable}
\begin{tabular}{lccc}
\toprule
Design & $C_k$ Allocation & Max MC Risk & Envelope Bound \\
\midrule
Balanced ($1/4$ each)
  & 15/15/15/15 & 0.01023 & 3.639 \\
Extreme tilt (.50/.20/.20/.10)
  & 30/12/12/6  & 0.01447 & 4.050 \\
Random Dirichlet$(1,1,1,1)$
  & varies      & 0.02334 & --- \\
\bottomrule
\end{tabular}
\begin{tablenotes}[flushleft]\footnotesize
\item \textit{Notes:} $K=4$,
$\Pi=\{0.1,0.3,0.5,0.7\}$, $n_i=20$, $\eta_1=0.3$,
$\sigma_\epsilon=0.2$, $R=10{,}000$. Max MC~Risk is
the maximum over
$\eta_2+\eta_3 \in \{0.1,\ldots,0.5\}$.
``Envelope Bound'' is the analytical $\sup_\theta$ of the cluster-level bound from
equation~\eqref{eq:ub-of-penalty}. For Random Dirichlet, results are averaged over 100 draws.
\end{tablenotes}
\end{threeparttable}
\end{table}

\begin{figure}[H]
\centering
\caption{Risk curves by allocation design ($C=60$).
The balanced design (black solid) achieves the lowest peak
risk. $K=4$, $n_i=20$, $\eta_1=0.3$,
$\sigma_\epsilon=0.2$.}
\includegraphics[width=0.6\textwidth]{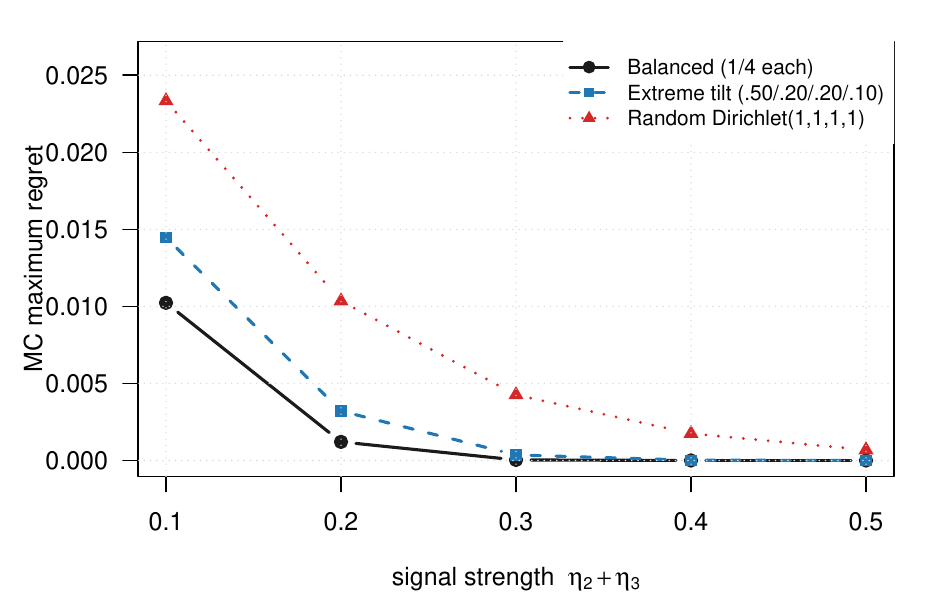}
\label{fig:sim2}
\end{figure}

\section{Conclusion}\label{sec:conclusion}
We develop a decision-theoretic framework for ranking treatment saturations under clustered interference. We propose an empirical success ranking rule that is simple to implement and does not require knowledge of the within-cluster network. We derive finite-sample upper bounds on the regret of the ES ranking rule and use them to characterize a quasi-optimal first-stage saturation distribution within the two-stage randomization design of \citet{baird2018optimal}. The bounds depend on the network only through a single combinatorial summary of the dependency structure.

We further study the local asymptotic behavior of the ES ranking rule and show that it belongs to a class of threshold ranking rules that is asymptotically admissible. Under a rank-one structural condition on the limiting Gaussian experiment, this class is also minimax-optimal in the sense of minimizing the worst-case risk bound under additively separable regret loss. Several extensions remain open, including ranking with covariate-adaptive policies and dynamic optimal saturation design with repeated experiments. 

\onehalfspacing
\bibliographystyle{chicago}
\bibliography{reference}

\appendix

\pagebreak
\begin{center}
    \Large{\textbf{Appendix}}\label{supp mat}\\
\vspace{0.5cm}
\end{center}

\setcounter{section}{0}
\setcounter{equation}{0}
\setcounter{figure}{0}
\setcounter{table}{0}
\setcounter{theorem}{0}
\setcounter{lemma}{0}
\setcounter{cor}{0}
\setcounter{assum}{0}
\makeatletter
\renewcommand{\theequation}{S\arabic{equation}}
\renewcommand{\thefigure}{S\arabic{figure}}
\renewcommand{\thetable}{S\arabic{table}}
\renewcommand{\thetheorem}{S\arabic{theorem}}
\renewcommand{\thecor}{S\arabic{cor}}
\renewcommand{\thelemma}{S\arabic{lemma}}
\renewcommand{\theassum}{S\arabic{assum}}
\renewcommand{\bibnumfmt}[1]{[S#1]}
\renewcommand{\citenumfont}[1]{S#1}
\renewcommand{\thesubsection}{\thesection.\arabic{subsection}}

\section{Graph Coloring Concepts}\label{app:graph-coloring}
This section provides background on graph coloring concepts used in the paper's main results.

\paragraph{Graph Coloring and Chromatic Number.}
Let $G = (V, E)$ be an undirected graph with vertex set $V$ and edge set $E$. A \emph{proper coloring} of $G$ is a function $f: V \to \{1, \dots, k\},$ where $k\in\mathbbm{N},$ such that $f(u) \neq f(v)$ for all $\{u,v\} \in E$, i.e., adjacent vertices receive different colors. The smallest integer $k$ for which a proper coloring exists is called the \emph{chromatic number} of $G$, denoted by $\chi(G)$. For example, the chromatic number of a \textit{complete} graph on $n$ vertices, $K_n$, is $n$, since every vertex must receive a unique color. On the other hand, the chromatic number of a \textit{cycle} with an even number of vertices is $2$, while for odd cycles it is $3$; see  Chapter 1 of \cite{royle2001algebraic} for an introduction to algebraic graph theory, including the definitions of a complete graph and a cycle. 

\paragraph{Fractional Coloring and the Fractional Chromatic Number.}
While the chromatic number $\chi(G)$ captures the minimum number of colors needed to assign a single color to each vertex such that adjacent vertices receive different colors, the \emph{fractional chromatic number} $\chi_f(G)$ generalizes this idea by allowing \emph{fractional color assignments}. Specifically, instead of assigning exactly one color to each vertex, we allow vertices to be assigned multiple colors fractionally, provided that adjacent vertices are assigned disjoint color sets and each vertex receives a total color weight of at least one.

Formally, the fractional chromatic number $\chi_f(G)$ is defined as the minimum total weight over all fractional colorings:
\[
\chi_f(G) := \min \left\{ \sum_{I \in \mathcal{I}} w_I \,:\, w_I \geq 0,\; \sum_{\substack{I \in \mathcal{I}\\ v \in I}} w_I \geq 1 \text{ for all } v \in V \right\},
\]
where $\mathcal{I}$ denotes the collection of all independent sets in $G$, and $w_I$ represents the fractional use of color class (independent set) $I$. The condition ensures that each vertex is ``covered'' by a total weight of at least one from the independent sets it belongs to, while adjacent vertices never share color mass.

This relaxation leads to a lower bound on the chromatic number, satisfying $\chi_f(G) \leq \chi(G)$, and often provides a more efficient coloring in fractional terms. For instance, for the complete graph $K_n$, we have $\chi_f(K_n) = \chi(K_n) = n$; and for the 5-cycle $C_5$, we have $\chi(C_5) = 3$ but $\chi_f(C_5) = 5/2$, since a fractional coloring can exploit the structure of the graph more efficiently.

%%%%%%%%%%%%%%%%%%%%%%%%%%
\section{Extension: Regret Bounds Under Bernoulli Assignment}\label{ext: bernoulli}
In this section, we consider data generated via the two-stage Bernoulli treatment assignment mechanism and derive bounds analogous to those in Section \ref{sec:multinomial}. We maintain the equal-cluster-size Assumption~\ref{assump:equal-n} throughout, so that $n_i = n_0$ for all $i$.

\begin{assum}[Two-stage Bernoulli design]\label{structure of design 2}
The first-stage cluster assignment mechanism indexed by $\nu$ is a Bernoulli design. Conditional on the first-stage assignment, each second-stage treatment strategy $\psi_k$, for $k=1,\ldots,K$, is implemented via a Bernoulli treatment assignment mechanism within clusters.
\end{assum}
In contrast to Assumption~\ref{structure of design}, Assumption~\ref{structure of design 2} does not induce dependence in both cluster-level and unit-level treatment assignments. Hence, the dependence generated by the underlying unobserved network structure $G$ is not masked.

A well-documented drawback of the Bernoulli design is that the realized saturation may deviate from the nominal level \( \pi_k \), especially in clusters of small size. This random discrepancy introduces additional variability into estimators of cluster-level quantities, thereby complicating inference. To address this concern, we impose a smoothness condition on the cluster-level mean potential outcome function \( \pi \mapsto \bar{Y}_i(\pi) \) for all \( i \in \mathcal{I} \). 

\begin{assum}[Smoothness of the Cluster-Level Average Potential Outcome Function]\label{assump:smoothness}
For all $\pi \in \boldsymbol{\Pi}$ and $i \in \mathcal{I}$,
\[
\lim_{\epsilon \to 0} \bar{Y}_i(\pi + \epsilon) = \bar{Y}_i(\pi).
\]
\end{assum}

Assumption~\ref{assump:smoothness} imposes a continuity condition on the function $\pi \mapsto \bar{Y}_i(\pi)$, which maps the treatment saturation level to the cluster-level average potential outcome. It does not directly restrict the primitive potential outcomes function $Y_{ij}(\cdot)$, but instead ensures that small deviations in treatment saturation lead to only mild changes in average potential outcomes, thereby controlling the impact of such random variation on estimator performance. If  Assumption~\ref{assump:smoothness} fails, then welfare at a given $\pi$ value can be interpreted as the intention-to-treat welfare at saturation $\pi$, since the realized treatment saturation may differ from what the experimenter may impose.

Under Assumption~\ref{structure of design 2}, the simple cluster mean used in the main text is no longer unbiased for the CR-mixture welfare $\bar{Y}_i(\pi_k)$, because the conditional law of $\mathbf{Z}_i$ under within-cluster Bernoulli assignment is the product pmf
\[
\beta(\pi_k,\mathbf z_i)
:=
\pi_k^{|\mathbf z_i|}
(1-\pi_k)^{n_0-|\mathbf z_i|},
\]
which differs from $\gamma(\pi_k,\cdot)$ (the two coincide only at $\pi_k\in\{0,1\}$). To restore unbiased estimation of the
CR-mixture welfare under the Bernoulli design, we combine the within-cluster inverse-probability weighting of \citet{tchetgen2012causal} with a Horvitz--Thompson correction for the first-stage Bernoulli assignment. The resulting estimator is

\begin{align}
\label{estimator:: mean potential outcome 2}
\widetilde U(\pi_k,\theta)
:=
\frac{1}{Cp_k}
\sum_{i=1}^{C}
\mathbbm 1\{S_i=\pi_k\}
\widehat Y_i^{IPW}(\pi_k),
\end{align}
where $p_k:=\Pr(S_i=\pi_k)$ denotes the first-stage assignment probability and

\[
\widehat Y_i^{IPW}(\pi_k)
:=
\frac{1}{n_0}
\sum_{j=1}^{n_0}
\frac{\gamma(\pi_k,\mathbf Z_i)}
{\beta(\pi_k,\mathbf Z_i)}
Y_{ij}.
\]

The ratio $\gamma(\pi_k,\mathbf Z_i)/\beta(\pi_k,\mathbf Z_i)$ is the Radon--Nikodym derivative of the CR assignment law with respect to the Bernoulli assignment law at saturation $\pi_k$. This ratio is finite whenever $\beta(\pi_k,\mathbf z_i)>0$ for every $\mathbf z_i$ in the support of $\gamma(\pi_k,\cdot)$,which holds for $\pi_k\in(0,1)$.

The following lemma is the Bernoulli analog of the unbiasedness assertion in the main text.

\begin{lemma}
\label{lem:bernoulli-id}

Suppose Assumptions~\ref{assump:discrete Pi},
\ref{assump:equal-n},
and \ref{structure of design 2}
hold, and
$\beta(\pi_k,\mathbf z_i)>0$
for every
$\mathbf z_i$
in the support of
$\gamma(\pi_k,\cdot)$.
Then, for every cluster $i$ and every $k=1,\ldots,K$,

\[
\mathbbm E
\!\left[
\widehat Y_i^{IPW}(\pi_k)
\,\middle|\,
S_i=\pi_k,
Y_i(\cdot)
\right]
=
\bar Y_i(\pi_k),
\]

and

\[
\mathbbm E_\theta
\!\left[
\widetilde U(\pi_k,\theta)
\right]
=
U(\pi_k,\theta).
\]

\end{lemma}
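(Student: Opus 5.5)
The argument is a change-of-measure (importance-weighting) computation for the within-cluster conditional expectation, followed by a Horvitz--Thompson step for the first stage.

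\emph{Step 1: the conditional claim.} Condition on $S_i=\pi_k$ and on the fixed potential-outcome schedule $Y_i(\cdot)$. Under Assumption~\ref{structure of design 2}, the law of $\mathbf Z_i$ is the product pmf $\beta(\pi_k,\cdot)$ on $\{0,1\}^{n_0}$. Since $Y_{ij}=Y_{ij}(\mathbf Z_i)$, we can write
\[
\mathbbm E\!\left[\widehat Y_i^{IPW}(\pi_k)\,\middle|\,S_i=\pi_k,Y_i(\cdot)\right]
=\frac{1}{n_0}\sum_{j=1}^{n_0}\sum_{\mathbf z_i:\,\beta(\pi_k,\mathbf z_i)>0}\beta(\pi_k,\mathbf z_i)\,\frac{\gamma(\pi_k,\mathbf z_i)}{\beta(\pi_k,\mathbf z_i)}\,Y_{ij}(\mathbf z_i).
\]
Cancelling $\beta$ leaves a sum of $\gamma(\pi_k,\mathbf z_i)Y_{ij}(\mathbf z_i)$ over the support of $\beta$. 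By the positivity hypothesis, the support of $\gamma(\pi_k,\cdot)$ is contained in that of $\beta(\pi_k,\cdot)$. The sum therefore runs over all $\mathbf z_i$ with $\gamma>0$, and terms with $\gamma=0$ contribute nothing. It thus equals $\sum_{\mathbf z_i}\gamma(\pi_k,\mathbf z_i)Y_{ij}(\mathbf z_i)=\bar Y_{ij}(\pi_k)$ by \eqref{eq:indiv-avg-po}. Averaging over $j$ gives $\bar Y_i(\pi_k)$. Assumption~\ref{assump:equal-n} ensures that $n_0$ and $\gamma$ are cluster-invariant and that $n_0\pi_k$ is an integer, so $\gamma$ is well defined.

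\emph{Step 2: the population claim.} Write $\widetilde U(\pi_k,\theta)=(Cp_k)^{-1}\sum_i \mathbbm 1\{S_i=\pi_k\}\widehat Y_i^{IPW}(\pi_k)$ and take expectations term by term with the tower property. Condition first on the sampled cluster's potential outcomes $Y_i(\cdot)$ and on $S_i$. Under the two-stage Bernoulli design, the first-stage assignment $S_i$ is drawn independently of the cluster's potential outcomes with $\Pr(S_i=\pi_k)=p_k$, and the second-stage assignment within cluster $i$ depends only on $S_i$. Hence
\[
\mathbbm E\!\left[\mathbbm 1\{S_i=\pi_k\}\widehat Y_i^{IPW}(\pi_k)\,\middle|\,Y_i(\cdot)\right]=p_k\,\bar Y_i(\pi_k)
\]
by Step 1. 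Taking the outer expectation over random sampling of clusters from $\mathcal I$ gives $p_k\,\mathbbm E_\theta[\bar Y_i(\pi_k)]=p_kU(\pi_k,\theta)$. Summing over the $C$ clusters and dividing by $Cp_k$ yields the result. Using the fixed denominator $Cp_k$, rather than the random count $C_k$, is what makes the estimator exactly unbiased; a ratio estimator would be unbiased only asymptotically.

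\emph{Main obstacle.} The only delicate point is the justification in Step 2 that $S_i$ is independent of the cluster's potential outcomes, together with the exchange of expectation and finite sum. Boundedness of $Y\in[0,1]$ and finiteness of the weight ratio for $\pi_k\in(0,1)$ give integrability. Independence follows from the design, since assignment is randomized by the experimenter, whereas the potential outcomes are fixed population quantities and the only other randomness is the sampling of clusters. Beyond this, Step 1 is the standard importance-sampling identity, and the support condition is exactly the hypothesis stated in the lemma.
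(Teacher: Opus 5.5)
Your proposal is correct and follows the paper's proof essentially step for step. You use the same cancellation of $\beta$ against the weight $\gamma/\beta$ for the conditional claim, then iterated expectations with $\Pr(S_i=\pi_k)=p_k$ and i.i.d.\ cluster sampling for the population claim, while usefully making explicit the support condition and the independence of $S_i$ from $Y_i(\cdot)$ that the paper uses only implicitly. One side remark is inaccurate but harmless: equal cluster sizes do not make $n_0\pi_k$ an integer (that is the separate integrality condition, Assumption~\ref{assump:integrality}), and your Step~1 identity does not need it anyway.
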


\begin{proof}

Conditional on
$\{S_i=\pi_k,Y_i(\cdot)\}$,
the within-cluster assignment vector
$\mathbf Z_i$
has the Bernoulli law
$\beta(\pi_k,\cdot)$.
Therefore,

\begin{align*}
\mathbbm E
\!\left[
\widehat Y_i^{IPW}(\pi_k)
\,\middle|\,
S_i=\pi_k,
Y_i(\cdot)
\right]
&=
\frac{1}{n_0}
\sum_{j=1}^{n_0}
\sum_{\mathbf z_i}
\frac{\gamma(\pi_k,\mathbf z_i)}
{\beta(\pi_k,\mathbf z_i)}
Y_{ij}(\mathbf z_i)
\beta(\pi_k,\mathbf z_i)
\\
&=
\frac{1}{n_0}
\sum_{j=1}^{n_0}
\sum_{\mathbf z_i}
Y_{ij}(\mathbf z_i)
\gamma(\pi_k,\mathbf z_i)
\\
&=
\frac{1}{n_0}
\sum_{j=1}^{n_0}
\bar Y_{ij}(\pi_k)
\\
&=
\bar Y_i(\pi_k),
\end{align*}
where the third equality follows from the definition of
$\bar Y_{ij}(\pi_k)$
and the last equality follows from the definition of
$\bar Y_i(\pi_k)$.

Next, applying the law of iterated expectations and the first result,

\begin{align*}
\mathbbm E_\theta
\!\left[
\widetilde U(\pi_k,\theta)
\right]
&=
\frac{1}{Cp_k}
\sum_{i=1}^{C}
\mathbbm E_\theta
\!\left[
\mathbbm 1\{S_i=\pi_k\}
\widehat Y_i^{IPW}(\pi_k)
\right]
\\
&=
\frac{1}{Cp_k}
\sum_{i=1}^{C}
\Pr(S_i=\pi_k)
\,
\mathbbm E_\theta
\!\left[
\widehat Y_i^{IPW}(\pi_k)
\,\middle|\,
S_i=\pi_k
\right]
\\
&=
\frac{1}{Cp_k}
\sum_{i=1}^{C}
p_k\,
\mathbbm E_\theta
\!\left[
\bar Y_i(\pi_k)
\right]
\\
&=
\frac{1}{C}
\sum_{i=1}^{C}
\mathbbm E_\theta
\!\left[
\bar Y_i(\pi_k)
\right].
\end{align*}

Since the sampled clusters are i.i.d.
draws from the super-population indexed by $\theta$. Hence,

\[
\mathbbm E_\theta
\!\left[
\bar Y_i(\pi_k)
\right]
=
U(\pi_k,\theta)
\]

for every sampled cluster $i$. Therefore,

\begin{align*}
\mathbbm E_\theta
\!\left[
\widetilde U(\pi_k,\theta)
\right]
&=
\frac{1}{C}
\sum_{i=1}^{C}
U(\pi_k,\theta)
\\
&=
U(\pi_k,\theta),
\end{align*}
which establishes the claim. 
\end{proof}

Lemma~\ref{lem:bernoulli-id} establishes that the Horvitz--Thompson/IPW estimator
$\widetilde U(\pi_k,\theta)$ is unbiased for the CR-mixture welfare
$U(\pi_k,\theta)$ under the Bernoulli design. Unlike the ratio estimator based on the realized arm count $C_k$, the estimator \eqref{estimator:: mean potential outcome 2} is unbiased unconditionally with respect to the first-stage Bernoulli assignment mechanism. Consequently, the Janson centering used in the proof of Theorem~\ref{thm:first bounds 2} is performed at the same welfare target
$U(\pi_k,\theta)$ as in the main text, and the welfare gap 
$\Delta_{kk'}=|U(\pi_k,\theta)-U(\pi_{k'},\theta)|$
 has the same interpretation throughout.

Analogous to Theorem \ref{thm:bounded bounds}(ii), we obtain the following risk bound under the Bernoulli design.
\begin{theorem}[Risk Bound for the ES Rule, Bernoulli Design]
\label{thm:first bounds 2}

Suppose Assumptions~\ref{assump:discrete Pi},
\ref{assump:equal-n},
\ref{structure of design 2},
and \ref{assump:smoothness}
hold, and let
$\Delta_{kk'}:=|U(\pi_k,\theta)-U(\pi_{k'},\theta)|$
for all $k<k'$.
Then the risk of the empirical success rule based on
\eqref{estimator:: mean potential outcome 2}
satisfies

\begin{align}
0
\le
R(\delta_{\mathrm{ES}},\theta)
\le
\sum_{k=1}^{K}
\sum_{k'>k}
\exp\!\left(
\frac{
-2\Delta_{kk'}^2
}
{
\chi_f(G_{kk'})
(A_k^B+A_{k'}^B)
}
\right)
\Delta_{kk'},
\label{eq:main-theorem s}
\end{align}
where

\[
A_k^B
=
\frac{\Gamma_k^2}
{C\,p_k^2\,n_0},
\]

\[
\Gamma_k
:=
\sup_{\mathbf z_i:\,\beta(\pi_k,\mathbf z_i)>0}
\frac{\gamma(\pi_k,\mathbf z_i)}
{\beta(\pi_k,\mathbf z_i)},
\]
and
$\chi_f(G_{kk'})$
denotes the fractional chromatic number of the unit-level dependency graph formed by the union of clusters assigned to
$\pi_k$
and
$\pi_{k'}$.

\end{theorem}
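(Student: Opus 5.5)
The plan mirrors the argument for Theorem~\ref{thm:bounded bounds}(ii). First, decompose the risk pair by pair. For each $(k,k')$ with $k<k'$, the component regret equals $\Delta_{kk'}$ times the probability that the ES rule picks the inferior saturation. Suppose without loss of generality that $U(\pi_k,\theta)>U(\pi_{k'},\theta)$; if $\Delta_{kk'}=0$ the summand is zero and there is nothing to prove. The component regret is then
\[
\Delta_{kk'}\Pr\bigl(\widetilde U(\pi_k,\theta)<\widetilde U(\pi_{k'},\theta)\bigr),
\]
and it is bounded above by $\Delta_{kk'}\Pr(D_{kk'}-\mathbbm{E}[D_{kk'}]\le -\Delta_{kk'})$, where $D_{kk'}:=\widetilde U(\pi_k,\theta)-\widetilde U(\pi_{k'},\theta)$. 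By Lemma~\ref{lem:bernoulli-id}, $\mathbbm{E}[D_{kk'}]=\Delta_{kk'}$. Nonnegativity of the risk is immediate because each component loss is nonnegative.

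Next, write $D_{kk'}$ as a sum of bounded unit-level summands indexed by the vertices of $G_{kk'}$. For unit $j$ in cluster $i$, set
\[
X_{ij}:=\frac{1}{Cp_kn_0}\mathbbm{1}\{S_i=\pi_k\}\frac{\gamma(\pi_k,\mathbf Z_i)}{\beta(\pi_k,\mathbf Z_i)}Y_{ij}-\frac{1}{Cp_{k'}n_0}\mathbbm{1}\{S_i=\pi_{k'}\}\frac{\gamma(\pi_{k'},\mathbf Z_i)}{\beta(\pi_{k'},\mathbf Z_i)}Y_{ij}.
\]
Since $Y_{ij}\in[0,1]$ and the weight ratio is at most $\Gamma_k$ (respectively $\Gamma_{k'}$), each summand lies in an interval. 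The $\pi_k$-part lies in an interval of length $\Gamma_k/(Cp_kn_0)$, and the $\pi_{k'}$-part in one of length $\Gamma_{k'}/(Cp_{k'}n_0)$.

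The sum of squared ranges is what produces $A_k^B+A_{k'}^B$. Under the Bernoulli first stage, clusters are independent, so cross-cluster edges do not appear, and within a cluster the dependence comes only from the network and the IPW weight. I then apply Janson's (2004) Hoeffding-type inequality for graph-dependent variables:
\[
\Pr\bigl(\textstyle\sum X-\mathbbm{E}\sum X\le -t\bigr)\le \exp\bigl(-2t^2/(\chi_f(G_{kk'})\sum_v(b_v-a_v)^2)\bigr).
\]
Setting $t=\Delta_{kk'}$ and summing over pairs gives \eqref{eq:main-theorem s}.

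The main obstacle is the bookkeeping of ranges that yields exactly $A_k^B=\Gamma_k^2/(Cp_k^2n_0)$. The issue is that $C_k$ is random under the Bernoulli design. I would treat every sampled cluster as a potential member of either arm, so that each of the $Cn_0$ units in the vertex set contributes a summand that is zero off its arm. Each unit then contributes a squared range of at most $\Gamma_k^2/(Cp_kn_0)^2$ for arm $k$. Multiplying by the expected count $Cp_kn_0$ of arm-$k$ units recovers $\Gamma_k^2/(Cp_kn_0)$. Matching the stated $p_k^2$ requires either conditioning on the first stage with $C_k\approx Cp_k$, or counting all $C n_0$ vertices, which gives $Cn_0\cdot\Gamma_k^2/(Cp_kn_0)^2=\Gamma_k^2/(Cp_k^2n_0)$. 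I would adopt the latter, which matches exactly.

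Assumption~\ref{assump:smoothness} is used only to justify interpreting the target $\bar Y_i(\pi_k)$ despite realized-saturation noise. It is not needed for the inequality, since unbiasedness already comes from Lemma~\ref{lem:bernoulli-id}.
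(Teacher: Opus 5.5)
Your route is the paper's: reduce the risk to $\sum_{k<k'}\Delta_{kk'}p_{kk'}$, center with Lemma~\ref{lem:bernoulli-id}, and index the contrast over all $Cn_0$ sampled units with arm indicators, so the random arm sizes never enter. Then bound the IPW weight by $\Gamma_k$ and apply Janson's inequality. The one step that fails as written is the range bookkeeping, which you yourself identify as the crux. Your $X_{ij}$ merges the arm-$k$ and arm-$k'$ terms of unit $(i,j)$ into a single random variable. Since $S_i$ may equal either $\pi_k$ or $\pi_{k'}$, it takes values in $[-a_{k'},a_k]$, where $a_k:=\Gamma_k/(Cp_kn_0)$, an interval of length $a_k+a_{k'}$. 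Janson's inequality charges each vertex the squared length of its whole range. Your $Cn_0$ vertices therefore give $Cn_0(a_k+a_{k'})^2=A_k^B+A_{k'}^B+2Cn_0a_ka_{k'}$, not $A_k^B+A_{k'}^B$. The per-arm counting in your last paragraph silently treats the two parts as different vertices, which contradicts your own definition of $X_{ij}$.

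The fix is to make that split explicit, as the paper does by writing $D_{kk'}$ as two separate double sums over $i\le C$, $j\le n_0$. Take the $2Cn_0$ variables $\mathbbm 1\{S_i=\pi_k\}\gamma(\pi_k,\mathbf Z_i)Y_{ij}/(Cp_kn_0\,\beta(\pi_k,\mathbf Z_i))$ together with their negated arm-$k'$ analogues. Their squared ranges sum to exactly $Cn_0a_k^2+Cn_0a_{k'}^2=A_k^B+A_{k'}^B$; let $G_{kk'}$ be their dependency graph. In this graph the two variables attached to the same unit are adjacent, since both are functions of $S_i$ and $\mathbf Z_i$, so each cluster is a clique on $2n_0$ vertices, not $n_0$. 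This is harmless for the theorem, which leaves $\chi_f(G_{kk'})$ abstract. However, you should not later pair $A_k^B+A_{k'}^B$ with the $n_0$-clique graph of your merged indexing. The consistent pairs are $(2n_0,\ a_k^2+a_{k'}^2)$ and $(n_0,\ (a_k+a_{k'})^2)$, and both exceed $n_0(A_k^B+A_{k'}^B)$ by a factor of up to $2$. Finally, drop the conditioning alternative entirely rather than treating it as an equivalent bookkeeping choice. Given $\mathbf S$, $\mathbbm E[\widetilde U(\pi_k,\theta)\mid\mathbf S]=C_kU(\pi_k,\theta)/(Cp_k)$, so conditioning destroys the centering at $\Delta_{kk'}$; only the unconditional all-units indexing works.
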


The main difference between the bounds in Theorems \ref{thm:bounded bounds}(ii) and \ref{thm:first bounds 2} lies in the scale
factor. Recall that, under the complete randomization design with equal cluster sizes, the scale factor equals
$n_0^{-1}(C_k^{-1}+C_{k'}^{-1})$. Under the Bernoulli design, it is inflated by the squared envelope of the Radon--Nikodym
ratio of the hypothetical CR assignment mechanism to the actual Bernoulli design.

\begin{cor}[Closed-Form Bernoulli Bound]\label{cor:bernoulli-closed}
Suppose Assumptions~\ref{assump:discrete Pi}, \ref{assump:equal-n},
\ref{structure of design 2}, and \ref{assump:smoothness} hold. For any
$\theta\in\Theta$, define $\Delta_{kk'}:=|U(\pi_k,\theta)-U(\pi_{k'},\theta)|$ for
$k<k'$. Under the two-stage Bernoulli design, the unit-level dependency graph
$G_{kk'}$ is a disjoint union of within-cluster cliques, one clique on the $n_0$
units of each cluster assigned to $\pi_k$ or $\pi_{k'}$. Its fractional chromatic
number, therefore, equals the size of the largest such cluster,
\[
\chi_f(G_{kk'}) \;=\; \max\bigl\{\, n_i : S_i\in\{\pi_k,\pi_{k'}\}\,\bigr\}\;=\;n_0,
\]
the last equality holding under Assumption~\ref{assump:equal-n}. Substituting
$\chi_f(G_{kk'})=n_0$ and $A_k^B=\Gamma_k^2/(C p_k^2 n_0)$ into
\eqref{eq:main-theorem s} and cancelling $n_0$ yields
\begin{align}\label{eq:bernoulli-closed}
0 \le R(\delta_{\mathrm{ES}},\theta)
\le \sum_{k=1}^{K}\sum_{k'>k}
\exp\!\left(\frac{-2C\,\Delta_{kk'}^2}{\Gamma_k^2/p_k^2 + \Gamma_{k'}^2/p_{k'}^2}\right)
\Delta_{kk'}.
\end{align}
\end{cor}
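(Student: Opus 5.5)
The proof has two parts: compute $\chi_f(G_{kk'})$ exactly, then substitute into Theorem~\ref{thm:first bounds 2} and simplify.

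\textbf{Structure of $G_{kk'}$.} Under Assumption~\ref{structure of design 2}, cluster labels $S_i$ are drawn independently across clusters, and within-cluster treatments are drawn independently across clusters. Sampled clusters are i.i.d.\ draws from the super-population, and interference is confined to clusters. Hence the vectors $(S_i,\mathbf Z_i,\mathbf Y_i)$ are mutually independent across $i$. Each unit-level summand of $\widetilde U(\pi_k,\theta)-\widetilde U(\pi_{k'},\theta)$ is a function of its own cluster's vector only. So no edge of $G_{kk'}$ joins units in different clusters, and $G_{kk'}$ is a disjoint union of per-cluster blocks.

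Within a cluster, I would take each block to be a clique, for two reasons. First, the IPW weight $\gamma(\pi_k,\mathbf Z_i)/\beta(\pi_k,\mathbf Z_i)$ depends on the whole vector $\mathbf Z_i$, which couples every unit's term to every other unit's term. Second, the unobserved network $G_i$ is arbitrary. A clique is therefore the valid dependency graph absent further structure.

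\textbf{Computing $\chi_f$.} For a disjoint union, $\chi_f$ equals the maximum over components, since each component can be colored separately and the color classes reused across components. Together with $\chi_f(K_m)=m$ (Section~\ref{app:graph-coloring}), this gives $\chi_f(G_{kk'})=\max\{n_i:S_i\in\{\pi_k,\pi_{k'}\}\}$. Under Assumption~\ref{assump:equal-n}, this equals $n_0$.

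To justify the disjoint-union claim, I would argue both directions:
\begin{itemize}
\item \emph{Upper bound:} put weight $1$ on each of $n_0$ independent sets, the $r$-th set taking the $r$-th unit from every cluster. Every vertex is covered with total weight $1$.
\item \emph{Lower bound:} any fractional coloring restricted to one clique needs total weight at least $n_0$, because each independent set meets a clique in at most one vertex.
\end{itemize}

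\textbf{Substitution.} Plug $\chi_f=n_0$ and $A_k^B=\Gamma_k^2/(Cp_k^2n_0)$ into \eqref{eq:main-theorem s}:
\[
\chi_f(G_{kk'})\bigl(A_k^B+A_{k'}^B\bigr)=\frac{1}{C}\Bigl(\frac{\Gamma_k^2}{p_k^2}+\frac{\Gamma_{k'}^2}{p_{k'}^2}\Bigr).
\]
The exponent becomes $-2C\Delta_{kk'}^2/(\Gamma_k^2/p_k^2+\Gamma_{k'}^2/p_{k'}^2)$, which is \eqref{eq:bernoulli-closed}. The lower bound $0\le R$ is inherited directly from the theorem.

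\textbf{Main obstacle.} The delicate step is the cross-cluster independence. Under the Horvitz--Thompson form, clusters not assigned to $\pi_k$ or $\pi_{k'}$ contribute zero. The summands of clusters that are assigned still involve the random indicator $\mathbbm 1\{S_i=\pi_k\}$, so the vertex set of $G_{kk'}$ is itself random.

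I would handle this by one of two routes:
\begin{itemize}
\item treat every sampled cluster as a vertex block, where terms are zero unless $S_i\in\{\pi_k,\pi_{k'}\}$; the graph is still a disjoint union of $n_0$-cliques, so $\chi_f=n_0$ regardless; or
\item condition on $\mathbf S$, which the statement's "clusters assigned to $\pi_k$ or $\pi_{k'}$" suggests.
\end{itemize}
Either route gives the same value of $\chi_f$.
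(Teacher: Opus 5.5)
Your argument follows the paper's proof step for step: cross-cluster independence, within-cluster cliques, $\chi_f$ of a disjoint union equals that of its largest component (which you verify directly rather than citing Scheinerman--Ullman), then substitution. The gap is at the point you yourself flag as the main obstacle. Saying ``either route gives the same $\chi_f$'' does not settle it, because in Janson's inequality $\chi_f(G)$ must be paired with the ranges of the \emph{same} family of summands and with the \emph{same} centering.

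The scale factor of Theorem~\ref{thm:first bounds 2},
$A_k^B+A_{k'}^B=Cn_0\bigl(\Gamma_k/(Cp_kn_0)\bigr)^2+Cn_0\bigl(\Gamma_{k'}/(Cp_{k'}n_0)\bigr)^2$,
counts $n_0$ arm-$k$ summands and $n_0$ arm-$k'$ summands in every one of the $C$ clusters.
\begin{itemize}
\item In your route (1), these $2n_0$ summands are mutually dependent through $S_i$: an arm-$k$ term and an arm-$k'$ term of the same cluster are never both nonzero. So each block is a $2n_0$-clique, $\chi_f(G_{kk'})=2n_0$, and the exponent is halved.
\item If you instead merge each unit's two terms into one vertex to get $n_0$-cliques, that vertex ranges over $[-\Gamma_{k'}/(Cp_{k'}n_0),\,\Gamma_k/(Cp_kn_0)]$. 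Then $\chi_f\|\mathbf c\|_2^2=C^{-1}(\Gamma_k/p_k+\Gamma_{k'}/p_{k'})^2$, which is strictly larger than the $C^{-1}(\Gamma_k^2/p_k^2+\Gamma_{k'}^2/p_{k'}^2)$ you substitute.
\item Route (2) does give $n_0$-cliques with arm-specific ranges. But conditionally on $\mathbf S$ the mean of $D_{kk'}$ is $\frac{C_k}{Cp_k}U(\pi_k,\theta)-\frac{C_{k'}}{Cp_{k'}}U(\pi_{k'},\theta)$, not $\pm\Delta_{kk'}$, and the squared norm involves $C_k,C_{k'}$ instead of $C$. Conditioning throws away the randomness of the arm counts, which is the leading source of error of the Horvitz--Thompson estimator.
\end{itemize}

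The paper's proof makes the same $n_0$-clique identification, and the display \eqref{eq:bernoulli-closed} is in fact false, so no bookkeeping will rescue it. Take the following example.
\begin{itemize}
\item $K=2$ with $(\pi_1,\pi_2)=(0,1)$, so $\Gamma_1=\Gamma_2=1$ and every IPW weight equals $1$.
\item $p_1=p_2=1/2$.
\item No interference: $Y_{ij}(\mathbf z)=1-\epsilon+\epsilon z_{ij}$, so $\Delta_{12}=\epsilon$.
\end{itemize}
Let $N\sim\mathrm{Bin}(C,1/2)$ be the number of clusters assigned to saturation $1$. The two estimated welfares are $2N/C$ and $2(C-N)(1-\epsilon)/C$. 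So the ES rule misranks exactly when $N\le C(1-\epsilon)/(2-\epsilon)$, a binomial lower tail roughly $\epsilon\sqrt C/2$ standard deviations below the mean.

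For $C=10^4$ and $\epsilon=0.06$ this probability is about $10^{-3}$. By contrast, \eqref{eq:bernoulli-closed} asserts it is at most $e^{-C\epsilon^2/4}=e^{-9}\approx1.2\times10^{-4}$. For small $\epsilon$ the true exponential rate is about $C\epsilon^2/8$, not $C\epsilon^2/4$.

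What your merged-vertex version of route (1) does prove is the weaker bound with exponent $-2C\Delta_{kk'}^2/(\Gamma_k/p_k+\Gamma_{k'}/p_{k'})^2$. This is just Hoeffding's inequality over the $C$ independent cluster-level terms, and the example respects it.
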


\begin{proof}
Under Assumption~\ref{structure of design 2}, clusters are assigned independently
in the first stage, and units are assigned independently across clusters in the
second, so outcomes in distinct clusters are stochastically independent and
$G_{kk'}$ has no edges across clusters. Within a cluster, the $n_0$ unit-level
summands of $\widehat Y_i^{IPW}(\pi_k)$ share the common Radon--Nikodym weight
$\gamma(\pi_k,\mathbf Z_i)/\beta(\pi_k,\mathbf Z_i)$ and the within-cluster
network, hence they are mutually dependent and form a clique. Thus
$G_{kk'}=\bigcup_{c} K_{n_0}$. Since the fractional chromatic number of a
disjoint union equals the maximum over its components
\citep[][Sec.~3.3]{scheinerman2011fractional}, and $\chi_f(K_{n_0})=n_0$, we
obtain $\chi_f(G_{kk'})=n_0$ (under heterogeneous sizes, the largest cluster size
in the two arms). Substituting into \eqref{eq:main-theorem s} gives
$\chi_f(G_{kk'})(A_k^B+A_{k'}^B)=C^{-1}\bigl(\Gamma_k^2/p_k^2+\Gamma_{k'}^2/p_{k'}^2\bigr)$,
which yields \eqref{eq:bernoulli-closed}.
\end{proof}

Two contrasts with the complete-randomization bound of
Corollary~\ref{cor:unit-level-bounds} are worth noting. First, the dependence
penalty is the \emph{largest cluster} $n_0$, not the total unit count
$(C_k+C_{k'})n_0$, because independent first-stage assignment removes all
cross-cluster edges. Second, and consequently, the within-cluster size $n_0$
again cancels, but the exponent now carries a factor $C$: for any fixed gap
$\Delta_{kk'}>0$, the Bernoulli bound decays exponentially in the number of
sampled clusters, whereas the complete-randomization bound is constant in $C$. The price is the inflation of the scale factor by the squared Radon--Nikodym envelope
$\Gamma_k^2/p_k^2$, which measures how far the realized Bernoulli assignment law
sits from the target CR law at saturation $\pi_k$.

\subsection{Quasi-Optimal First-Stage Bernoulli Design}\label{ext:bernoulli-design}

We now repeat the design analysis of
Section~\ref{Quasi-optimal Experiment Design} for the two-stage Bernoulli design,
using the closed-form bound \eqref{eq:bernoulli-closed}. The design is indexed by
the vector of first-stage assignment probabilities
$\mathbf p=(p_1,\dots,p_K)\in\Delta_K$, with $p_k:=\Pr(S_i=\pi_k)$, in place of
the cluster shares $\boldsymbol\alpha$.

Recall from Theorem~\ref{thm:first bounds 2} that $\Gamma_k$ is the supremum,
over the support of the CR assignment law, of the Radon--Nikodym ratio of the
CR pmf to the Bernoulli pmf at saturation $\pi_k$. Under
Assumption~\ref{assump:equal-n} the two within-cluster pmfs over assignment
vectors $\mathbf z_i\in\{0,1\}^{n_0}$, expressed through the treated count
$|\mathbf z_i|=\sum_{j}z_{ij}$, are
\[
\gamma(\pi_k,\mathbf z_i)
=\binom{n_0}{n_0\pi_k}^{-1}\mathbbm 1\{|\mathbf z_i|=n_0\pi_k\},
\qquad
\beta(\pi_k,\mathbf z_i)
=\pi_k^{|\mathbf z_i|}(1-\pi_k)^{\,n_0-|\mathbf z_i|}.
\]
The CR assignment law $\gamma(\pi_k,\cdot)$ assigns positive probability only to treatment vectors with exactly $n_0\pi_k$ treated units and distributes this probability uniformly across the $\binom{n_0}{n_0\pi_k}$ such vectors. Consequently,
$\gamma(\pi_k,\mathbf z_i)=0
\text{ whenever }
|\mathbf z_i|\neq n_0\pi_k,$
so the Radon--Nikodym ratio $\gamma(\pi_k,\mathbf z_i)/\beta(\pi_k,\mathbf z_i)$ vanishes outside the slice
$\{\mathbf z_i:|\mathbf z_i|=n_0\pi_k\}$.
On this slice, the Bernoulli assignment law depends on $\mathbf z_i$ only through its treated count, which is fixed at $n_0\pi_k$. Therefore,
$\beta(\pi_k,\mathbf z_i)
=
\pi_k^{\,n_0\pi_k}
(1-\pi_k)^{\,n_0(1-\pi_k)}$
for every $\mathbf z_i$ satisfying $|\mathbf z_i|=n_0\pi_k$. The ratio is therefore constant on the support of $\gamma$, so the supremum collapses to
that value:
\[
\Gamma_k
=\sup_{\mathbf z_i:\,\beta(\pi_k,\mathbf z_i)>0}
\frac{\gamma(\pi_k,\mathbf z_i)}{\beta(\pi_k,\mathbf z_i)}
=\binom{n_0}{n_0\pi_k}^{-1}\pi_k^{-n_0\pi_k}(1-\pi_k)^{-n_0(1-\pi_k)}.
\]
Recognizing the reciprocal as a binomial probability mass, with $m:=n_0\pi_k$
(so $n_0-m=n_0(1-\pi_k)$),
\[
\Gamma_k^{-1}
=\binom{n_0}{m}\pi_k^{m}(1-\pi_k)^{\,n_0-m}
=\Pr\!\bigl(\mathrm{Bin}(n_0,\pi_k)=n_0\pi_k\bigr),
\text{ i.e., }
\Gamma_k=\Bigl[\Pr\!\bigl(\mathrm{Bin}(n_0,\pi_k)=n_0\pi_k\bigr)\Bigr]^{-1}.
\]
Thus $\Gamma_k$ is a known constant determined by $(\pi_k,n_0)$: the reciprocal of the probability that a Bernoulli design hits the target count exactly. 
% The supremum does no real work here---it reduces to a single value because the CR law is uniform on one count slice---and the constraint $\beta(\pi_k,\mathbf z_i)>0$ is
% automatic for $\pi_k\in(0,1)$, where the Bernoulli law has full support. 
Two features matter for the design below. First, $\Gamma_k^{-1}$ is exactly the exact probability, so the IPW estimator effectively retains a fraction $1/\Gamma_k$ of clusters and reweights them by $\Gamma_k$. Second, by Stirling's
approximation \citep{feller1991introduction} $\Gamma_k\approx\sqrt{2\pi n_0 \pi_k(1-\pi_k)}$, thus $\Gamma_k$ is increasing in $n_0$, symmetric in $\pi_k$, and largest for $\pi_k$ near $1/2$. $\Gamma_k\propto \sqrt{n_0}$ is what makes the within-cluster Bernoulli assignment costlier as cluster size grows.

Maximizing each summand of \eqref{eq:bernoulli-closed} over $\Delta_{kk'}$, the map $\Delta\mapsto\Delta\exp\!\bigl(-2C\Delta^2/(\Gamma_k^2/p_k^2+\Gamma_{k'}^2/p_{k'}^2)\bigr)$ attains its maximum at
\[
\Delta^{*}_{kk'}
=\frac12\sqrt{\frac{\Gamma_k^2/p_k^2+\Gamma_{k'}^2/p_{k'}^2}{C}}.
\]
Substituting $\Delta^{*}_{kk'}$ yields a uniform bound on the risk over $\Theta$:
\begin{align}\label{eq:bernoulli-envelope}
0\le\sup_{\theta\in\Theta}R(\delta_{\mathrm{ES}},\theta)
\le \frac{e^{-1/2}}{2\sqrt{C}}\,
\sum_{k}\sum_{k'>k}\sqrt{\frac{\Gamma_k^2}{p_k^2}+\frac{\Gamma_{k'}^2}{p_{k'}^2}}.
\end{align}
In contrast to the complete-randomization envelope \eqref{eq:ub-of-penalty},
which is constant in $C$, the Bernoulli envelope contracts at the parametric rate
$C^{-1/2}$: independent first-stage assignment breaks the complete dependency
graph, so additional clusters carry information.

Since the constant $\tfrac{1}{2}e^{-1/2}C^{-1/2}$ does not affect the minimizer, the quasi-optimal Bernoulli design solves
\[
\mathbf p^{\star}\in
\argmin_{\mathbf p\in\Delta_K}\Phi(\mathbf p),
\qquad
\Phi(\mathbf p):=\sum_{k}\sum_{k'>k}
\sqrt{\frac{\Gamma_k^2}{p_k^2}+\frac{\Gamma_{k'}^2}{p_{k'}^2}}.
\]

\begin{theorem}[Quasi-Optimal Bernoulli Design]\label{thm:bernoulli-design}
For any fixed $C$, consider the two-stage Bernoulli design characterized by
$(\boldsymbol\Pi,\mathbf p)$ with $\sum_{k=1}^K p_k=1$ and $p_k>0$. Suppose
Assumptions~\ref{assump:discrete Pi}, \ref{assump:equal-n},
\ref{structure of design 2}, and \ref{assump:smoothness} hold.
\begin{enumerate}
\item[(i)] $\Phi$ is convex on $\Delta_K$ and $\Phi(\mathbf p)\to\infty$ whenever
any $p_k\to0$; hence the minimizer $\mathbf p^{\star}$ exists and lies in the
interior of $\Delta_K$.
\item[(ii)] If $\Gamma_1=\dots=\Gamma_K$, the unique minimizer is the balanced
design $p_k^{\star}=1/K$ for all $k$.
\item[(iii)] In general $\mathbf p^{\star}$ is the unique solution of the
first-order system: there exists $\mu>0$ such that, for every $k$,
\[
\frac{\Gamma_k^2}{(p_k^{\star})^{3}}
\sum_{k'\neq k}
\Bigl(\tfrac{\Gamma_k^2}{(p_k^{\star})^2}+\tfrac{\Gamma_{k'}^2}{(p_{k'}^{\star})^2}\Bigr)^{-1/2}
=\mu,
\qquad
\sum_{k}p_k^{\star}=1.
\]
For $K=2$, this gives the closed form
\[
p_k^{\star}=\frac{\Gamma_k^{2/3}}{\Gamma_1^{2/3}+\Gamma_2^{2/3}},
\qquad k=1,2,
\]
so the optimal allocation tilts toward the saturation with the larger envelope
$\Gamma_k$.
\end{enumerate}
\end{theorem}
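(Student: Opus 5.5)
The plan is to treat $\Phi$ as a finite sum of compositions of simple convex maps, deduce convexity and coercivity for part (i), and then use symmetry and the KKT conditions for parts (ii) and (iii). Write $x_k:=\Gamma_k/p_k$, so each summand of $\Phi$ is the Euclidean norm $\|(x_k,x_{k'})\|_2$. On the open simplex the map $p_k\mapsto \Gamma_k/p_k$ is convex and positive. The Euclidean norm on $\mathbbm{R}_{+}^2$ is convex and nondecreasing in each coordinate. The composition rule (a convex, coordinatewise nondecreasing outer function of convex inner functions) therefore makes every summand convex in $\mathbf p$, and hence $\Phi$ is convex.

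For coercivity, each summand is at least $\Gamma_k/p_k$, and $\Gamma_k\ge1$ because $\Gamma_k^{-1}$ is a probability. So $\Phi\to\infty$ as any $p_k\to0$ (with $K\ge2$, every $k$ appears in some pair). It follows that a minimizing sequence stays in a compact subset $\{p_k\ge\epsilon\}$, and continuity gives an interior minimizer. For uniqueness I would show strict convexity. Each summand $\sqrt{x_k^2+x_{k'}^2}$ is strictly convex along directions that are not radial in $(x_k,x_{k'})$, and $1/p$ is strictly convex. I expect this to be the main technical step. My first attempt would be to check that if $\Phi$ is affine on a segment $[\mathbf p,\mathbf q]$, then every $x_k(\cdot)$ must be affine along it. This is possible only if $p_k=q_k$ for all $k$, by the strict convexity of $1/p$ combined with the monotonicity used in the composition. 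An alternative route is to note that $\Phi(\mathbf p)\ge\sum_{k<k'}(x_k+x_{k'})/\sqrt2$ has a strictly convex lower piece, and argue directly from equality cases.

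For (ii), if all $\Gamma_k$ are equal then $\Phi$ is invariant under permutations of coordinates. Averaging a minimizer over all permutations stays in $\Delta_K$ and, by convexity, does not increase $\Phi$. The averaged point is $p_k=1/K$, and uniqueness from (i) then identifies it as the unique minimizer.

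For (iii), since the minimizer is interior, the Lagrange/KKT conditions with the single constraint $\sum_k p_k=1$ are necessary. By convexity they are also sufficient, and by strict convexity their solution is unique. Differentiating gives $\partial\Phi/\partial p_k=-\frac{\Gamma_k^2}{p_k^3}\sum_{k'\ne k}(\Gamma_k^2/p_k^2+\Gamma_{k'}^2/p_{k'}^2)^{-1/2}$. Setting this equal to $-\mu$ yields the displayed system, and $\mu>0$ because every term is positive. For $K=2$ the sum has the same single term for both $k$, so the condition reduces to $\Gamma_1^2/p_1^3=\Gamma_2^2/p_2^3$. This gives $p_k\propto\Gamma_k^{2/3}$, and normalizing produces the stated closed form.
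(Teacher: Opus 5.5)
Your proposal is correct and follows the same route as the paper's proof. Both use the composition rule together with the bound $\Phi\ge\Gamma_k/p_k$ for (i), permutation averaging for (ii), and Lagrange stationarity with the $K=2$ reduction $\Gamma_1^2/p_1^3=\Gamma_2^2/p_2^3$ for (iii). Your explicit strict-convexity argument, which the paper merely asserts, goes through, but it needs \emph{strict} monotonicity of the norm: $\partial\|x\|_2/\partial x_k=x_k/\|x\|_2>0$ on the open orthant. With that, any summand containing a coordinate with $p_k\neq q_k$ is strictly convex along the segment. The lower-bound alternative you mention is unnecessary and would not by itself yield strict convexity.
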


\begin{proof}
(i) For fixed $k$, $p\mapsto\Gamma_k/p$ is convex and positive on $(0,1)$, and $(x,y)\mapsto\sqrt{x^2+y^2}$ is convex and nondecreasing in $(|x|,|y|)$; the composition is therefore convex, and $\Phi$ is a sum of such terms, hence convex. Each term obeys $\sqrt{\Gamma_k^2/p_k^2+\Gamma_{k'}^2/p_{k'}^2}\ge \Gamma_k/p_k\to\infty$ as $p_k\to0$, so $\Phi$ is coercive on $\Delta_K$ and its minimum is attained in the interior.

(ii) When $\Gamma_k\equiv\Gamma$, $\Phi(\mathbf p)=\Gamma\sum_{k<k'}\sqrt{p_k^{-2}+p_{k'}^{-2}}$
is invariant under permutations of its coordinates. For any
$\mathbf p\in\Delta_K$, let $\bar{\mathbf p}=(K!)^{-1}\sum_{\sigma}\sigma\mathbf p
=(1/K,\dots,1/K)$ be its symmetrization. By convexity and permutation
invariance,
$\Phi(\bar{\mathbf p})\le (K!)^{-1}\sum_{\sigma}\Phi(\sigma\mathbf p)=\Phi(\mathbf p)$,
so the balanced design is a minimizer; strict convexity of $\Phi$ along the
affine hull of $\Delta_K$ makes it the unique one.

(iii) Since $\Phi$ is convex and differentiable on the interior, the constrained
minimizer is characterized by stationarity of the Lagrangian
$\Phi(\mathbf p)-\mu(\sum_k p_k-1)$. Computing
$\partial\Phi/\partial p_k=
-\Gamma_k^2 p_k^{-3}\sum_{k'\neq k}(\Gamma_k^2/p_k^2+\Gamma_{k'}^2/p_{k'}^2)^{-1/2}$
and setting it equal to $-\mu$ gives the stated system; convexity makes the
solution unique. For $K=2$ the single equation reduces to
$\Gamma_1^2/p_1^3=\Gamma_2^2/p_2^3$, i.e.\ $p_1/p_2=(\Gamma_1/\Gamma_2)^{2/3}$,
which with $p_1+p_2=1$ yields the closed form.
\end{proof}

Theorem~\ref{thm:bernoulli-design} shows that the design lesson of Section~\ref{Quasi-optimal Experiment Design} is partly preserved and partly overturned under Bernoulli assignment. Balanced allocation remains quasi-optimal exactly when the saturation menu induces a common Radon--Nikodym envelope $\Gamma_k$, for instance, a menu symmetric about $1/2$. Otherwise, the quasi-optimal design is no longer uniform: it allocates more clusters to saturations with larger $\Gamma_k$---those for which the Bernoulli assignment law sits farthest from the target complete-randomization law and the IPW correction is most variable---to equalize the per-pair contributions to the worst-case bound. Unlike the complete-randomization design, which is saturation-independent, the quasi-optimal Bernoulli design depends on $\boldsymbol\Pi$ through $\{\Gamma_k\}_{k=1}^K$, reflecting the extra variability that within-cluster Bernoulli assignment injects through the realized-saturation mismatch.

\subsection{Simulation Evidence: Matching the Experiment to the Deployment} \label{ext:bernoulli-sim}

This subsection complements the Monte Carlo study of Section~\ref{sec:simulations}. There, we evaluated the empirical success (ES) ranking rule under the two-stage complete-randomization (CR) design of Assumption~\ref{structure of design}. Here we re-run the same exercises under the two-stage \emph{Bernoulli} design of Assumption~\ref{structure of design 2}, with the inverse-probability-weighted (IPW) estimator \eqref{estimator:: mean potential outcome 2}, and place the two designs side by side. The comparison delivers a practical lesson for the implementation of the ES ranking rule. When the objective is to rank saturation policies defined under complete randomization, the ES rule performs better when applied to data generated by a complete-randomization experiment rather than a Bernoulli experiment. Under the Bernoulli design, identification of the CR-mixture welfare requires inverse-probability weighting, which inflates the variability of the welfare estimators and weakens the finite-sample risk guarantees. As a result, the ES rule generally incurs lower risk under complete randomization, particularly when the number of sampled clusters is small. The simulations below quantify this difference and identify the large-sample regime in which the performance gap becomes negligible.

\subsubsection{Simulation Design}\label{ext:sim-design}

\noindent\textit{Data-generating process.}\;
All experiments use the linear-in-means model \eqref{eq:sim_lim} as the DGP,
with reduced-form cluster mean
$\bar Y_i=\frac{\eta_2+\eta_3}{1-\eta_1}\,\Pi_i+\frac{1}{1-\eta_1}\bar\epsilon_i$,
$\bar\epsilon_i\sim N(0,\sigma_\epsilon^2/n_0)$, so the welfare is linear,
$U(\pi,\theta)=\frac{\eta_2+\eta_3}{1-\eta_1}\,\pi$, and the composite signal
$\eta_2+\eta_3$ governs the welfare gaps that drive the ranking. Throughout
$\eta_1=0.3$, clusters have equal size $n_0$ (Assumption~\ref{assump:equal-n}),
and every figure is based on $R=20{,}000$ replications.

\medskip
\noindent\textit{Two designs and their estimators.}\;
Under CR, the first stage assigns a fixed count $C_k=\alpha_k C$ of clusters to
each saturation and the second stage treats exactly $n_0\pi_k$ units per cluster; the realized saturation equals $\pi_k$, and welfare is estimated by the simple cluster mean. Under Bernoulli, the first stage assigns each cluster to an arm
independently with probability $p_k$, and the second stage treats each unit
independently with probability $\pi_k$; the realized saturation now fluctuates,
and welfare is estimated by the IPW/Horvitz--Thompson estimator
\eqref{estimator:: mean potential outcome 2}. With the CR pmf as target, the
Radon--Nikodym weight $\gamma(\pi_k,\mathbf z_i)/\beta(\pi_k,\mathbf z_i)$ is
nonzero only on clusters whose Bernoulli draw lands on the exact count
$n_0\pi_k$, where it equals $\Gamma_k$; the estimator thus retains a fraction
$1/\Gamma_k$ of clusters and reweights them, with
$\Gamma_k=[\Pr(\mathrm{Bin}(n_0,\pi_k)=n_0\pi_k)]^{-1}$ and
 as $n_0$ grows.

\medskip
\noindent\textit{Performance measure and bounds.}\;
For each design, we report the Monte Carlo risk (regret) of the ES rule (the additively separable regret loss of Section~\ref{sec:multinomial}, averaged over
replications), the closed-form risk bound
(\eqref{eq:bound-cond-cluster 2} for CR, \eqref{eq:bernoulli-closed} for
Bernoulli), and---for the design exercises---the worst-case envelope
(\eqref{eq:ub-of-penalty} for CR, \eqref{eq:bernoulli-envelope} for Bernoulli).
Table~\ref{tab:bern-design} collects the design constants for the leading menu
$\boldsymbol\Pi=\{0.1,0.3,0.5,0.7\}$ at $n_0=20$, together with the quasi-optimal
Bernoulli allocation $\mathbf p^\star$ of Theorem~\ref{thm:bernoulli-design}.

\begin{table}[H]\centering
\caption{Design constants and quasi-optimal first-stage Bernoulli allocation for $\boldsymbol\Pi=\{0.1,0.3,0.5,0.7\}$ at $n_0=20$. $\Gamma_k$ is the Radon--Nikodym envelope; $\mathbf p^\star$ minimizes $\Phi(\mathbf p)$.}
\label{tab:bern-design}\small
\begin{tabular}{lccccc}
\toprule
 & $\pi_1{=}0.1$ & $\pi_2{=}0.3$ & $\pi_3{=}0.5$ & $\pi_4{=}0.7$ & $\Phi(\mathbf p)$\\
\midrule
$\Gamma_k$ & $3.507$ & $5.218$ & $5.675$ & $5.218$ & ---\\
Balanced $p_k=1/K$ & $0.250$ & $0.250$ & $0.250$ & $0.250$ & $168.18$\\
Optimal $p_k^\star$ & $0.203$ & $0.261$ & $0.275$ & $0.261$ & $165.50$\\
\bottomrule
\end{tabular}
\end{table}

\subsubsection{Why Matching the Experiment to the Deployment Helps}\label{ext:matching}

The planner's estimand is the welfare $U(\pi_k,\theta)=\bar Y(\pi_k)$ of a
\emph{policy that treats exactly a fraction $\pi_k$ of each cluster}---that is,
 of a complete-randomization rollout. Two consequences follow.

\noindent\textit{(i) Estimand alignment.}\;
The welfare criterion studied in this paper is defined in terms of the
complete-randomization assignment law $\gamma(\pi_k,\cdot)$. When the experiment
itself uses complete randomization, the cluster mean estimator is unbiased for
this target, and no adjustment is required. By contrast, under a Bernoulli
second-stage assignment, the realized treatment distribution is governed by
$\beta(\pi_k,\cdot)$ rather than $\gamma(\pi_k,\cdot)$. Consequently, the
cluster mean no longer targets the CR-mixture welfare, and inverse-probability weighting is required to recover the quantity relevant for the ES ranking rule. Thus, when the objective is to rank saturation policies defined under complete randomization, the ES ranking rule applied to a complete-randomization experiment is naturally aligned with the target estimand, whereas a Bernoulli experiment requires an additional re-weighting step.

\noindent\textit{(ii) Finite-sample efficiency.}\;
The re-weighting required under the Bernoulli assignment increases estimator variability. The regret bound in Theorem~\ref{thm:first bounds 2} contains the
inflation factor
$\Gamma_k^2$,
which is the squared envelope of the Radon--Nikodym derivative
$\gamma(\pi_k,\cdot)/\beta(\pi_k,\cdot)$.
For fixed saturation $\pi_k$, Stirling's approximation implies that
$\Gamma_k^2$ grows linearly in the cluster size $n_0$. As a result, the
effective scale factor entering the concentration bound is larger under the Bernoulli design than under complete randomization, producing weaker finite-sample guarantees for the ES rule. This effect is most pronounced when the number of sampled clusters is small.

\medskip

Two caveats are worth emphasizing. First, the efficiency loss described above
is specific to recovering the complete-randomization welfare target via the
exact Radon--Nikodym reweighting scheme. Alternative welfare targets or smoother
reweighting schemes may reduce the penalty. Second, the comparison concerns
finite-sample performance. Because the first-stage Bernoulli assignment leaves
clusters independent, the corresponding regret bound contracts with the number
of sampled clusters $C$, whereas the complete-randomization bound does not.
Consequently, the theoretical worst-case guarantee may eventually favor the
Bernoulli design in sufficiently large samples, a phenomenon that we document
below.

\subsubsection{Risk and Bounds}\label{ext:sim-risk}

In Table~\ref{tab:bern-sim1} and Figure~\ref{fig:bern-sim1}, we vary the number of
clusters $C$, the cluster size $n_0$, and the signal, holding the design
balanced. Three findings stand out. (i) In $C$, the CR bound is flat---the complete dependency graph of two-stage CR prevents the bound from contracting (Section~\ref{sec:multinomial})---while the Bernoulli bound declines, since
independent first-stage assignment turns the unit-level dependency graph into disjoint within-cluster cliques. The Bernoulli MC regret is uniformly above the CR regret, the efficiency cost of assigning negligible weights to non-hit clusters. (ii) In the cluster size $n_0$, the designs \emph{reverse}: CR regret falls as larger clusters sharpen the cluster means, whereas Bernoulli regret \emph{rises}
because $\Gamma$ at $\pi{=}0.5$ climbs from $4.06$ at $n_0{=}10$ to $8.91$ at $n_0{=}50$, shrinking the exact-hit share $1/\Gamma_k$. This is the design-matching point in its sharpest form: the very feature (large clusters) that aids the
matched CR design degrades the mismatched Bernoulli design. (iii) In the signal, both bounds grow, with the Bernoulli bound above the CR bound by the Radon--Nikodym inflation. In every cell, the MC regret lies below its bound.

\begin{table}[H]\centering
\caption{Bernoulli vs.\ CR design: Monte Carlo regret and closed-form bounds. Balanced allocation, $\eta_1{=}0.3$, $\sigma_\epsilon{=}0.5$, $R=20{,}000$.}
\label{tab:bern-sim1}\small
\begin{tabular}{lcccc}
\toprule
 & \multicolumn{2}{c}{Complete randomization} & \multicolumn{2}{c}{Bernoulli (IPW)}\\
\cmidrule(lr){2-3}\cmidrule(lr){4-5}
 & MC risk & bound~\eqref{eq:bound-cond-cluster 2} & MC risk & bound~\eqref{eq:bernoulli-closed}\\
\midrule
\multicolumn{5}{l}{\textit{Panel A: vary $C$\quad($n_0=20$, $\eta_2+\eta_3=0.4$)}}\\[2pt]
$C=12$ & 0.08454 & 1.1093 & 0.59351 & 1.1405\\
$C=24$ & 0.03938 & 1.1093 & 0.38894 & 1.1381\\
$C=48$ & 0.01323 & 1.1093 & 0.25177 & 1.1333\\
$C=96$ & 0.00228 & 1.1093 & 0.15341 & 1.1240\\
$C=192$ & 0.00009 & 1.1093 & 0.08559 & 1.1056\\
[2pt]\hline
\multicolumn{5}{l}{\textit{Panel B: vary $n_0$\quad($C=48$, $\eta_2+\eta_3=0.4$) --- the reversal}}\\[2pt]
$n_0=10$ & 0.03998 & 1.1093 & 0.23213 & 1.1248\\
$n_0=20$ & 0.01379 & 1.1093 & 0.25011 & 1.1333\\
$n_0=30$ & 0.00524 & 1.1093 & 0.27337 & 1.1364\\
$n_0=40$ & 0.00214 & 1.1093 & 0.29768 & 1.1380\\
$n_0=50$ & 0.00087 & 1.1093 & 0.32514 & 1.1389\\
[2pt]\hline
\multicolumn{5}{l}{\textit{Panel C: vary signal $\eta_2+\eta_3$\quad($C=48$, $n_0=20$)}}\\[2pt]
$\eta_2+\eta_3=0.1$ & 0.05874 & 0.2852 & 0.10434 & 0.2856\\
$\eta_2+\eta_3=0.2$ & 0.04222 & 0.5672 & 0.16122 & 0.5702\\
$\eta_2+\eta_3=0.3$ & 0.02626 & 0.8428 & 0.20455 & 0.8531\\
$\eta_2+\eta_3=0.4$ & 0.01410 & 1.1093 & 0.25116 & 1.1333\\
$\eta_2+\eta_3=0.5$ & 0.00595 & 1.3638 & 0.30011 & 1.4101\\
$\eta_2+\eta_3=0.6$ & 0.00225 & 1.6039 & 0.34575 & 1.6825\\
$\eta_2+\eta_3=0.7$ & 0.00059 & 1.8278 & 0.38765 & 1.9497\\
$\eta_2+\eta_3=0.8$ & 0.00022 & 2.0337 & 0.44073 & 2.2112\\
$\eta_2+\eta_3=0.9$ & 0.00004 & 2.2204 & 0.49112 & 2.4660\\
\bottomrule
\end{tabular}
\end{table}

\begin{figure}[H]\centering
\caption{Bernoulli vs.\ CR: MC risk and closed-form bounds. (A) vary $C$; (B) vary $n_0$ (the reversal; grey dotted line is $\Gamma$ at $\pi{=}0.5$, right axis); (C) vary signal.}
\label{fig:bern-sim1}
\includegraphics[width=\textwidth]{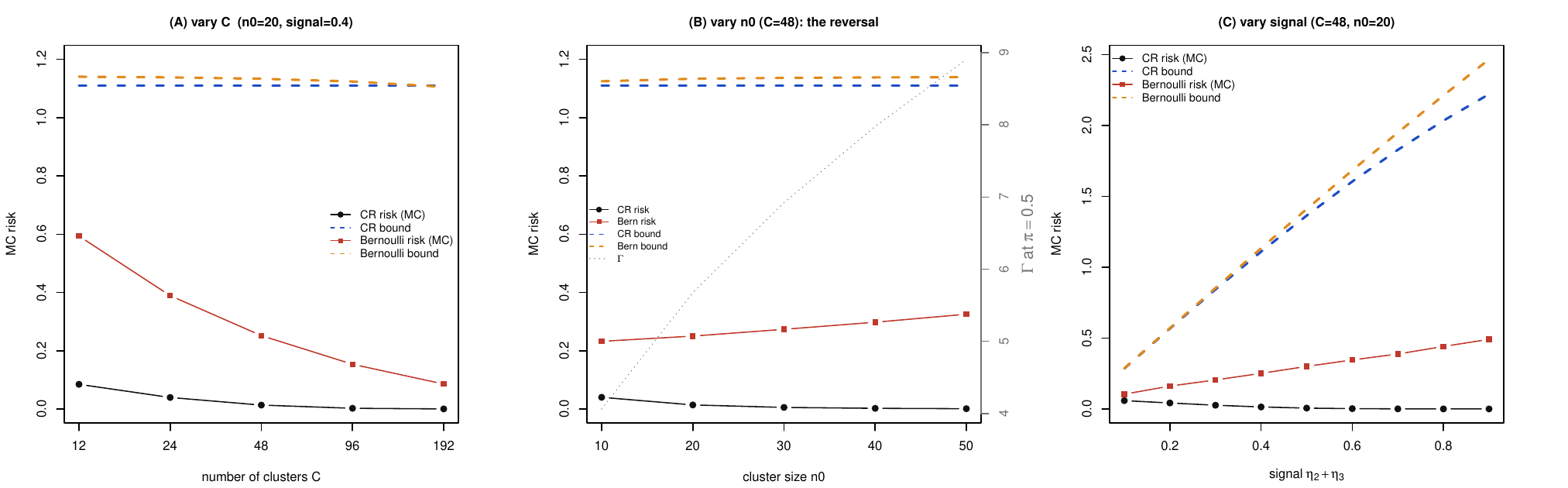}
\end{figure}

\subsubsection{Quasi-Optimal Allocation: Asymmetric Menu}\label{ext:sim-alloc}
When the saturation menu is asymmetric, the envelopes $\Gamma_k$ differ (Table~\ref{tab:bern-design}), and Theorem~\ref{thm:bernoulli-design}(iii)
predicts a \emph{tilted} quasi-optimal allocation. Table~\ref{tab:bern-sim2} and
Figure~\ref{fig:bern-sim2}(A) confirm this: under Bernoulli the optimal
$\mathbf p^\star=(0.20,0.26,0.27,0.26)$ attains the lowest maximum risk and the
lowest envelope, beating the balanced design, while extreme tilts are far worse.
This is the substantive departure from CR, where
Theorem~\ref{thm:quasi-optimal design} makes the balanced design optimal: the CR
envelope in Table~\ref{tab:bern-sim2} is minimized at
$\boldsymbol\alpha=(\tfrac14,\dots,\tfrac14)$. Figure~\ref{fig:bern-sim2}(B)
isolates the rate: the Bernoulli envelope contracts as $C^{-1/2}$ while the CR envelope is flat, the two crossing near $C\approx230$. Thus, the Bernoulli bound
becomes a sharper guarantee only once the cluster count is large enough to overcome the Radon--Nikodym inflation---consistent with the second caveat of
Section~\ref{ext:matching}.

\begin{table}[H]\centering
\caption{Maximum risk by first-stage allocation, asymmetric menu $\boldsymbol\Pi=\{0.1,0.3,0.5,0.7\}$, $C=60$, $n_0=20$, $\sigma_\epsilon=0.2$, max over $\eta_2+\eta_3\in\{0.1,\dots,0.5\}$.}
\label{tab:bern-sim2}\small
\begin{tabular}{lcccc}
\toprule
 & \multicolumn{2}{c}{Bernoulli (IPW)} & \multicolumn{2}{c}{Complete randomization}\\
\cmidrule(lr){2-3}\cmidrule(lr){4-5}
Allocation $\mathbf p$ & Max MC risk & Envelope & Max MC risk & Envelope\\
\midrule
Balanced $(.25,.25,.25,.25)$ & 0.22046 & 6.584 & 0.01023 & 3.639\\
Optimal $\mathbf p^\star\,(.20,.26,.27,.26)$ & 0.21305 & 6.479 & 0.01033 & 3.656\\
Extreme tilt $(.40,.20,.20,.20)$ & 0.26484 & 7.781 & 0.01222 & 3.750\\
Paper tilt $(.50,.20,.20,.10)$ & 0.40019 & 10.382 & 0.01456 & 4.050\\
\bottomrule
\end{tabular}
\par\footnotesize Under Bernoulli the envelope is minimized at $\mathbf p^\star$; under CR at the balanced design (Theorem~\ref{thm:quasi-optimal design}).
\end{table}

\begin{figure}[H]\centering
\caption{(A) Max risk by first-stage allocation under the Bernoulli design (asymmetric menu); $\mathbf p^\star$ is quasi-optimal. (B) Worst-case envelope bound vs $C$: $\propto C^{-1/2}$ under Bernoulli, flat under CR.}
\label{fig:bern-sim2}
\includegraphics[width=\textwidth]{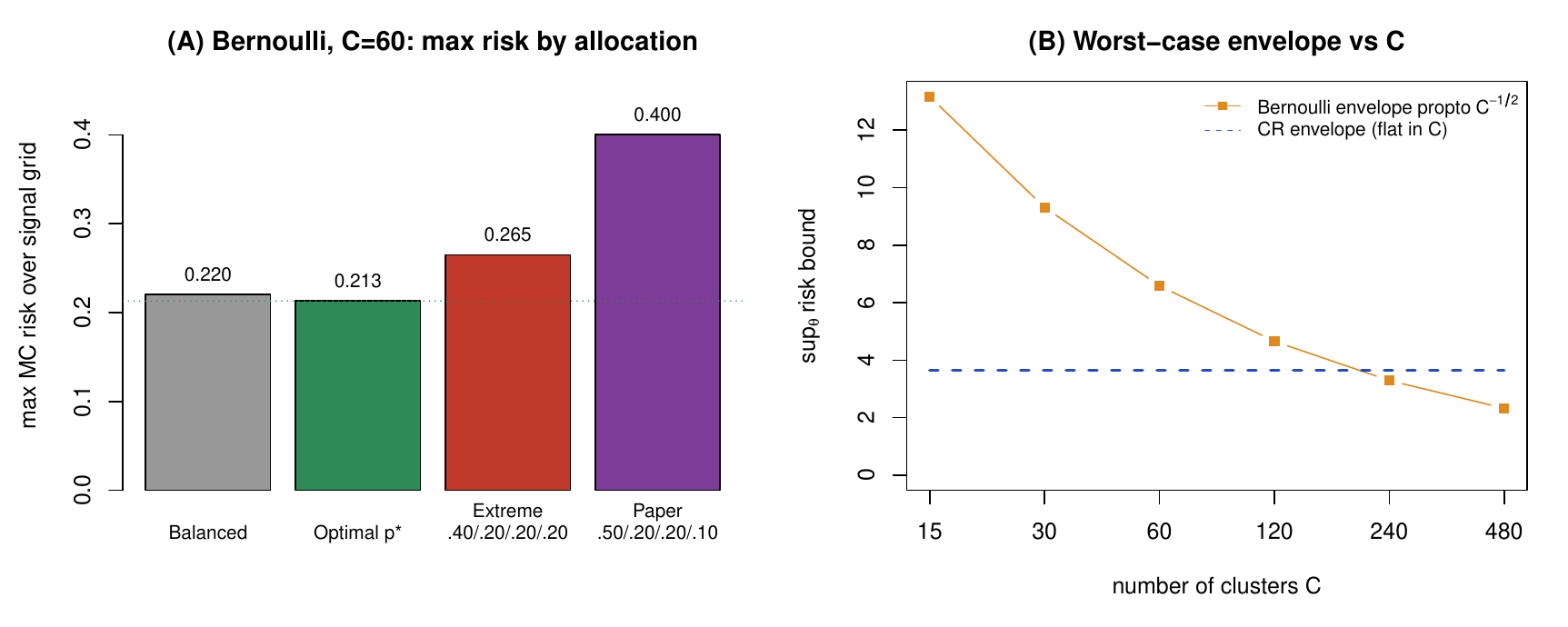}
\end{figure}

\subsubsection{Quasi-Optimal Allocation: Symmetric Menu Recovers Balance}\label{ext:sim-sym}

Theorem~\ref{thm:bernoulli-design}(ii) states that when the envelopes coincide,
$\Gamma_1=\dots=\Gamma_K$, the quasi-optimal Bernoulli allocation is again
balanced. Such a menu is easy to construct: $\Gamma_k$ depends on $\pi_k$ only
through the exact-hit probability $\binom{n_0}{n_0\pi_k}\pi_k^{n_0\pi_k}(1-\pi_k)^{n_0(1-\pi_k)}$,
which is invariant under $\pi_k\mapsto 1-\pi_k$. A two-point menu symmetric about
$1/2$, here $\boldsymbol\Pi=\{0.3,0.7\}$, therefore has identical envelopes
$\Gamma=(5.218,5.218)$, and the envelope-minimizing design is exactly
$\mathbf p^\star=(0.50,0.50)$.

Table~\ref{tab:bern-sim3} confirms that balanced minimizes both the envelope and
the realized maximum risk, with the two symmetric tilts essentially tied---as
they must be when $\Gamma$ is symmetric. 
% (To isolate the allocation effect we
% measure outcomes in deviations from the menu midpoint; otherwise the IPW
% estimator's zero-hit draws, which set $\widehat U=0$, interact with the level of
% welfare and pull the \emph{realized} optimum slightly toward the higher-welfare
% arm---a finite-sample instance of the bound-vs-risk gap already noted for the CR
% design in Section~\ref{Quasi-optimal Experiment Design}.) 
The upshot is that the CR design lesson is recovered exactly under Bernoulli whenever the menu is
$\Gamma$-symmetric; the tilt of Section~\ref{ext:sim-alloc} is a response to
$\Gamma$-asymmetry, not a generic feature of Bernoulli assignment.

\begin{table}[H]\centering
\caption{Maximum risk by first-stage allocation, symmetric menu $\boldsymbol\Pi=\{0.3,0.7\}$ ($\Gamma$-symmetric), $C=60$, $n_0=20$, $\sigma_\epsilon=0.2$, outcomes in deviations, max over $\eta_2+\eta_3\in\{0.1,\dots,0.5\}$.}
\label{tab:bern-sim3}\small
\begin{tabular}{lcc}
\toprule
Allocation $\mathbf p$ & Max MC risk & Envelope~\eqref{eq:bernoulli-envelope}\\
\midrule
Balanced $(.50,.50)$ & 0.00399 & 0.578\\
Tilt to low arm $(.65,.35)$ & 0.00474 & 0.663\\
Tilt to high arm $(.35,.65)$ & 0.00434 & 0.663\\
Extreme $(.20,.80)$ & 0.00601 & 1.053\\
\bottomrule
\end{tabular}
\par\footnotesize Balanced minimizes both columns, instantiating Theorem~\ref{thm:bernoulli-design}(ii); the two symmetric tilts are tied.
\end{table}

\subsubsection{Discussion}\label{ext:sim-takeaway}

The simulations support a clear operational recommendation. If the deployment is
a population-level CR rollout, the matched CR experiment (a) targets the
deployment welfare with no re-weighting, (b) delivers materially lower
finite-sample regret across cluster counts, sizes, and signal strengths
(Table~\ref{tab:bern-sim1}), and (c) avoids the efficiency loss that
\emph{grows} with cluster size under Bernoulli assignment. The Bernoulli design's
one advantage---a worst-case bound that contracts in $C$---is a statement about
the certified bound rather than realized regret, and it bites only for very large
$C$ (Figure~\ref{fig:bern-sim2}B). On the design margin, the balanced first-stage
allocation that is optimal under CR is also optimal under Bernoulli whenever the
menu is $\Gamma$-symmetric (Table~\ref{tab:bern-sim3}); a non-balanced design is
warranted only when the menu is $\Gamma$-asymmetric and one is committed to the
Bernoulli second stage (Table~\ref{tab:bern-sim2}).

\section{Lemmata and Proofs}
\begin{lemma}[\cite{janson2004large} Theorem 2.1]
Let random variables $X_i, i=1,\dots n$ be connected through a dependency graph $G=(V,E)$ such that for every $i,$ the random variable $X_i$ takes values in a real interval of length $c_i \geq 0.$ Then, for every $t>0,$
\begin{equation}
    \Pr\Bigg(\sum_{i \in V} X_i- \mathbbm{E} \Bigg[\sum_{i \in V} X_i\Bigg]\geq t\Bigg) \leq \exp\Bigg({\frac{-2t^2}{\chi_f(G)\|\mathbf{c}\|_2^2}}\Bigg)
\end{equation}
where for $\mathbf{c}=(c_i)_{i\in V}$,  $\|\mathbf{c}\|_2^2 := \sum_{i \in V} c_i^2$ and  $\chi_f(G)$ is the fractional chromatic number of $G.$ 
\end{lemma}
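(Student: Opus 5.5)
The plan is the standard Chernoff argument, with the sum decomposed along a fractional coloring. Write $S:=\sum_{i\in V}(X_i-\mathbbm{E}X_i)$. The first step is to fix an optimal fractional coloring: nonnegative weights $w_I$ on independent sets $I\in\mathcal{I}$ of $G$ with $\sum_I w_I=\chi_f(G)$ and $\sum_{I\ni v}w_I\ge 1$ for every $v$. Since subsets of independent sets are independent, I would shrink sets until every vertex is covered \emph{exactly}, i.e.\ $\sum_{I\ni v}w_I=1$, without increasing the total weight. This gives the decomposition
\[
S=\sum_{I} w_I S_I,\qquad S_I:=\sum_{i\in I}(X_i-\mathbbm{E}X_i).
\]
The key structural input is the defining property of a dependency graph: variables indexed by a set with no internal edges are mutually independent. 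So each $S_I$ is a sum of independent, centered variables, and the $i$-th summand lies in an interval of length $c_i$.

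For each $I$, Hoeffding's lemma applied summand by summand yields $\mathbbm{E}\exp(sS_I)\le\exp\bigl(s^2 C_I/8\bigr)$ for all $s\in\mathbbm{R}$, where $C_I:=\sum_{i\in I}c_i^2$. To combine the dependent pieces, I will not use independence across $I$ (there is none). Instead, for any probability weights $q_I>0$ with $\sum_I q_I=1$, write $e^{uS}=\prod_I\bigl(e^{uw_IS_I/q_I}\bigr)^{q_I}$ and apply the generalized H\"older inequality. This gives
\[
\mathbbm{E}e^{uS}\le\prod_I\Bigl(\mathbbm{E}e^{uw_IS_I/q_I}\Bigr)^{q_I}\le\exp\Bigl(\frac{u^2}{8}\sum_I\frac{w_I^2C_I}{q_I}\Bigr).
\]
Next I optimize over $q$ by choosing $q_I\propto w_I\sqrt{C_I}$, which makes the sum equal to $\bigl(\sum_I w_I\sqrt{C_I}\bigr)^2$. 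By Cauchy--Schwarz this is at most $\bigl(\sum_I w_I\bigr)\bigl(\sum_I w_IC_I\bigr)$. The first factor is $\chi_f(G)$. Because the cover is exact, the second factor is $\sum_i c_i^2\sum_{I\ni i}w_I=\|\mathbf c\|_2^2$. Sets with $C_I=0$ are degenerate and can simply be dropped.

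The final step is Markov/Chernoff. We have $\Pr(S\ge t)\le\exp\bigl(-ut+u^2\chi_f(G)\|\mathbf c\|_2^2/8\bigr)$, and choosing $u=4t/(\chi_f(G)\|\mathbf c\|_2^2)$ gives exactly $\exp\bigl(-2t^2/(\chi_f(G)\|\mathbf c\|_2^2)\bigr)$.

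I expect the main obstacle to be the combination step. A naive Jensen bound over the convex weights $w_I/\chi_f$ only controls the sum by $\max_I C_I$ times $\chi_f^2$, which is too weak. The point of the H\"older step, with the tuned exponents $q_I$, is to recover $\chi_f\|\mathbf c\|_2^2$ instead, and it relies crucially on the exact-cover normalization.
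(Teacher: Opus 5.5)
Your argument is correct, and it is essentially the proof in \cite{janson2004large}; the paper does not reproduce that proof, it only cites it. Both use the same steps: an exact proper fractional cover, Hoeffding's lemma inside each independent set, a convexity step across sets with weights tuned to $w_I\sqrt{C_I}$, Cauchy--Schwarz, and Chernoff. The one difference is that Janson combines the pieces by convexity of the exponential, $e^{uS}\le\sum_I q_I e^{uw_IS_I/q_I}$, where you use H\"older, and both give the same $(\sum_I w_I\sqrt{C_I})^2$.

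One caution: the mutual independence within independent sets must come from Janson's notion of dependency graph, in which vertex sets with no edges between them index independent families. It cannot come from the pairwise ``edge iff dependent'' description in Section~\ref{sec:multinomial}. For example, take $X_3=X_1X_2$ with $X_1,X_2$ independent fair signs: the pairwise notion gives the empty graph, yet $\Pr(X_1+X_2+X_3\ge 3)=1/4>e^{-3/2}$.
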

\begin{proof}
    Refer to \cite{janson2004large} for the proof.
\end{proof}

\paragraph{Proof of Theorem \ref{thm:bounded bounds}}
\begin{proof}
Recall from the risk computation in Section~\ref{sec:multinomial} that
\begin{align*}
R(\delta_{\mathrm{ES}},\theta)
=&
\sum_{k}\sum_{k<k'}
\Big[
U(\pi_k,\theta)\,
\mathbbm 1\!\big(U(\pi_k,\theta)\ge U(\pi_{k'},\theta)\big)
+
U(\pi_{k'},\theta)\,
\mathbbm 1\!\big(U(\pi_k,\theta)<U(\pi_{k'},\theta)\big)
\Big]
\\
&-
\underbrace{
\sum_{k}\sum_{k<k'}
\Big[
U(\pi_k,\theta)\Pr\!\big(\widehat U(\pi_k,\theta)\ge \widehat U(\pi_{k'},\theta)\big)
+
U(\pi_{k'},\theta)\Pr\!\big(\widehat U(\pi_k,\theta)< \widehat U(\pi_{k'},\theta)\big)
\Big]
}_{=:~r(\delta_{\mathrm{ES}},\theta)}.
\end{align*}

The first double sum is the oracle welfare and therefore does not depend on the sampling distribution of the estimators. Consequently, it suffices to derive a lower bound for
$r(\delta_{\mathrm{ES}},\theta)$. Since each probability is bounded by one,
\[
r(\delta_{\mathrm{ES}},\theta)
\le
\sum_{k}\sum_{k<k'}
\Big[
U(\pi_k,\theta)\,
\mathbbm 1\!\big(U(\pi_k,\theta)\ge U(\pi_{k'},\theta)\big)
+
U(\pi_{k'},\theta)\,
\mathbbm 1\!\big(U(\pi_k,\theta)<U(\pi_{k'},\theta)\big)
\Big],
\]
so any lower bound on $r(\delta_{\mathrm{ES}},\theta)$ immediately yields an upper bound on the risk.

Writing
\[
\widehat U(\pi_k,\theta)
=
\frac{1}{C_k}
\sum_{i=1}^{C}
\mathbbm 1\{S_i=\pi_k\}\widehat{\bar Y}_i,
\qquad
\widehat{\bar Y}_i
=
\frac{1}{n_i}\sum_{j=1}^{n_i}Y_{ij},
\]
define
\[
D_{kk'}
:=
\widehat U(\pi_k,\theta)
-
\widehat U(\pi_{k'},\theta).
\]
By unbiasedness of the cluster-mean estimator under i.i.d.\ cluster sampling and within-cluster complete randomization,
\[
\mathbbm E[D_{kk'}]
=
U(\pi_k,\theta)-U(\pi_{k'},\theta).
\]
Hence,
\[
\mathbbm E[D_{kk'}]
=
-\Delta_{kk'}
\quad\text{whenever}\quad
U(\pi_{k'},\theta)>U(\pi_k,\theta),
\]
and
\[
\mathbbm E[D_{kk'}]
=
\Delta_{kk'}
\quad\text{whenever}\quad
U(\pi_k,\theta)>U(\pi_{k'},\theta),
\]
where
\[
\Delta_{kk'}
:=
|U(\pi_k,\theta)-U(\pi_{k'},\theta)|.
\]

Consider a pair $(k,k')$ such that
$U(\pi_{k'},\theta)>U(\pi_k,\theta)$.
Then
\begin{align*}
&U(\pi_k,\theta)\Pr(D_{kk'}\ge 0)
+
U(\pi_{k'},\theta)\Pr(D_{kk'}<0)
\\
&=
U(\pi_{k'},\theta)
+
\bigl(U(\pi_k,\theta)-U(\pi_{k'},\theta)\bigr)
\Pr(D_{kk'}\ge 0)
\\
&=
U(\pi_{k'},\theta)
-
\Delta_{kk'}
\Pr(D_{kk'}\ge 0).
\end{align*}
Since
\[
\mathbbm E[D_{kk'}]
=
-\Delta_{kk'},
\]
we have
\[
\Pr(D_{kk'}\ge 0)
=
\Pr\!\left(
D_{kk'}-\mathbbm E[D_{kk'}]
\ge
\Delta_{kk'}
\right).
\]

%%%%%%%%%%%%%%%%%%%%%%%%%%
To bound $r(\delta_{\mathrm{ES}},\theta)$ from below, consider a fixed pair
$(k,k')$ with $k<k'$. Let
\[
D_{kk'}:=\widehat U(\pi_k,\theta)-\widehat U(\pi_{k'},\theta).
\]
There are two cases.

First suppose that
\[
U(\pi_{k'},\theta)>U(\pi_k,\theta).
\]
Then
\[
\mathbbm E[D_{kk'}]
=
U(\pi_k,\theta)-U(\pi_{k'},\theta)
=
-\Delta_{kk'},
\]
where
\[
\Delta_{kk'}:=|U(\pi_k,\theta)-U(\pi_{k'},\theta)|.
\]
The contribution of this pair to $r(\delta_{\mathrm{ES}},\theta)$ is
\begin{align*}
&U(\pi_k,\theta)\Pr(D_{kk'}\ge 0)
+
U(\pi_{k'},\theta)\Pr(D_{kk'}<0)
\\
&=
U(\pi_{k'},\theta)
+
\bigl(U(\pi_k,\theta)-U(\pi_{k'},\theta)\bigr)
\Pr(D_{kk'}\ge 0)
\\
&=
U(\pi_{k'},\theta)
-
\Delta_{kk'}\Pr(D_{kk'}\ge 0).
\end{align*}
Since
\[
\Pr(D_{kk'}\ge 0)
=
\Pr\!\left(
D_{kk'}-\mathbbm E[D_{kk'}]\ge \Delta_{kk'}
\right),
\]
a concentration bound for the centered contrast gives an upper bound on the
probability that the empirical rule selects the lower-welfare saturation.

Second suppose that
\[
U(\pi_k,\theta)>U(\pi_{k'},\theta).
\]
Then
\[
\mathbbm E[D_{kk'}]
=
U(\pi_k,\theta)-U(\pi_{k'},\theta)
=
\Delta_{kk'}.
\]
The contribution of this pair to $r(\delta_{\mathrm{ES}},\theta)$ is
\begin{align*}
&U(\pi_k,\theta)\Pr(D_{kk'}\ge 0)
+
U(\pi_{k'},\theta)\Pr(D_{kk'}<0)
\\
&=
U(\pi_k,\theta)
-
\bigl(U(\pi_k,\theta)-U(\pi_{k'},\theta)\bigr)
\Pr(D_{kk'}<0)
\\
&=
U(\pi_k,\theta)
-
\Delta_{kk'}\Pr(D_{kk'}<0).
\end{align*}
Moreover,
\[
\Pr(D_{kk'}<0)
=
\Pr\!\left(
D_{kk'}-\mathbbm E[D_{kk'}]<-\Delta_{kk'}
\right).
\]

Thus, for each pair $(k,k')$, the contribution to
$r(\delta_{\mathrm{ES}},\theta)$ is the larger welfare level minus the welfare
gap times the probability that the empirical comparison selects the lower
welfare level. Equivalently,
\[
r(\delta_{\mathrm{ES}},\theta)
=
\sum_k^K\sum_{k<k'}
\max\{U(\pi_k,\theta),U(\pi_{k'},\theta)\}
-
\sum_k^K\sum_{k<k'}
\Delta_{kk'}p_{kk'},
\]
where
\[
p_{kk'}
:=
\begin{cases}
\Pr\!\left(
D_{kk'}-\mathbbm E[D_{kk'}]\ge \Delta_{kk'}
\right),
&
\text{if } U(\pi_{k'},\theta)>U(\pi_k,\theta),
\\[1.5ex]
\Pr\!\left(
(-D_{kk'})-\mathbbm E[-D_{kk'}]\ge \Delta_{kk'}
\right),
&
\text{if } U(\pi_k,\theta)>U(\pi_{k'},\theta).
\end{cases}
\]
Consequently,
\[
R(\delta_{\mathrm{ES}},\theta)
=
\begin{cases}
\sum_k^K\sum_{k<k'}\Delta_{kk'}\Pr\!\left(
D_{kk'}-\mathbbm E[D_{kk'}]\ge \Delta_{kk'}
\right),
&
\text{if } U(\pi_{k'},\theta)>U(\pi_k,\theta),
\\[2ex]
\sum_k^K\sum_{k<k'} \Delta_{kk'}\Pr\!\left(
(-D_{kk'})-\mathbbm E[-D_{kk'}]\ge \Delta_{kk'}
\right),
&
\text{if } U(\pi_k,\theta)>U(\pi_{k'},\theta).
\end{cases}
\]
%The concentration argument below provides an upper bound on $p_{kk'}$ in each
%case.
%%%%%%%%%%%%%%%%%%%%%%%%%

The remainder of the proof derives upper bounds on $p_{kk'}$ under the three dependence structures considered in parts (i)--(iii) by applying the concentration inequality of \citet{janson2004large} to the centered contrast
$D_{kk'}-\mathbbm E[D_{kk'}]$
at threshold $\Delta_{kk'}$.

\medskip
\noindent\textit{Part (i): unit-level bound under bounded cluster sizes.}\;
At the unit level,
\begin{align*}
D_{kk'}
=
\sum_{r=1}^{C_k}\sum_{j=1}^{n_{\ell_r}}
\frac{Y_{\ell_r j}}{C_k n_{\ell_r}}
-
\sum_{r=1}^{C_{k'}}\sum_{j=1}^{n_{\ell'_r}}
\frac{Y_{\ell'_r j}}{C_{k'} n_{\ell'_r}},
\end{align*}
a sum of bounded random variables indexed by the units in clusters assigned to
$\pi_k$ or $\pi_{k'}$, with ranges
\[
\left[0,\frac{1}{C_k n_{\ell_r}}\right]
\qquad\text{and}\qquad
\left[-\frac{1}{C_{k'}n_{\ell'_r}},0\right].
\]

Recall that the pairwise ranking error probability is
\[
p_{kk'}
=
\begin{cases}
\Pr\!\left(
D_{kk'}-\mathbbm E[D_{kk'}]\ge \Delta_{kk'}
\right),
&
U(\pi_{k'},\theta)>U(\pi_k,\theta),
\\[1ex]
\Pr\!\left(
(-D_{kk'})-\mathbbm E[-D_{kk'}]\ge \Delta_{kk'}
\right),
&
U(\pi_k,\theta)>U(\pi_{k'},\theta),
\end{cases}
\]
where
\[
\Delta_{kk'}
=
|U(\pi_k,\theta)-U(\pi_{k'},\theta)|.
\]

In either case, the relevant centered contrast is a sum of bounded random
variables with the same ranges up to a sign change. Applying the concentration
inequality of \citet{janson2004large} therefore yields
\begin{align}
\label{eq:bdd20-1}
p_{kk'}
\le
\exp\!\Big(
-2\Delta_{kk'}^2\,
\chi_f(G_{kk'})^{-1}
(A_k+A_{k'})^{-1}
\Big),
\end{align}
where
\[
A_k
:=
\sum_{r=1}^{C_k}
\frac{1}{C_k^2 n_{\ell_r}}
\]
is the squared $\ell_2$-norm of the unit-level ranges associated with the
clusters assigned to $\pi_k$.

Under Assumption~\ref{assump:bounded-n},
$n_{\ell_r}\ge \underline n$, so
\[
A_k
=
\sum_{r=1}^{C_k}
\frac{1}{C_k^2 n_{\ell_r}}
\le
\frac{C_k}{C_k^2\underline n}
=
\frac{1}{C_k\underline n}.
\]
Hence
\[
A_k+A_{k'}
\le
\frac{1}{\underline n}
\big(C_k^{-1}+C_{k'}^{-1}\big).
\]

Since $x\mapsto \exp(-c/x)$ is increasing for $c>0$, substituting this upper
bound into \eqref{eq:bdd20-1} gives
\[
p_{kk'}
\le
\exp\!\left(
\frac{
-2\underline n\,\Delta_{kk'}^2
}{
\chi_f(G_{kk'})
\big(C_k^{-1}+C_{k'}^{-1}\big)
}
\right).
\]

Substituting this bound into the risk function derived above yields
\eqref{eq:bound-unit-bounded}. Since $\underline n$ is deterministic, the
bound holds unconditionally. 

\medskip
\noindent\textit{Part (ii): unit-level bound under equal cluster sizes.}\;
Specialize Part~(i) to $n_i=n_0$ for all $i$ (Assumption~\ref{assump:equal-n}). The unit-level representation becomes
\begin{align*}
D_{kk'}
=
\sum_{r=1}^{C_k}\sum_{j=1}^{n_0}
\frac{Y_{\ell_r j}}{C_k n_0}
-
\sum_{r=1}^{C_{k'}}\sum_{j=1}^{n_0}
\frac{Y_{\ell'_r j}}{C_{k'} n_0},
\end{align*}
a sum over the $n_0(C_k+C_{k'})$ units belonging to clusters assigned to
$\pi_k$ or $\pi_{k'}$, with deterministic ranges
\[
\left[0,\frac{1}{C_k n_0}\right]
\qquad\text{and}\qquad
\left[-\frac{1}{C_{k'}n_0},0\right].
\]

As in Part~(i), the pairwise ranking error probability is
\[
p_{kk'}
=
\begin{cases}
\Pr\!\left(
D_{kk'}-\mathbbm E[D_{kk'}]\ge \Delta_{kk'}
\right),
&
U(\pi_{k'},\theta)>U(\pi_k,\theta),
\\[1ex]
\Pr\!\left(
(-D_{kk'})-\mathbbm E[-D_{kk'}]\ge \Delta_{kk'}
\right),
&
U(\pi_k,\theta)>U(\pi_{k'},\theta),
\end{cases}
\]
where
\[
\Delta_{kk'}
=
|U(\pi_k,\theta)-U(\pi_{k'},\theta)|.
\]

In either case, the relevant centered contrast is a sum of bounded random
variables with identical ranges up to sign. Since each cluster contributes
exactly $n_0$ units, the squared $\ell_2$-norm of the ranges is
\[
A_k+A_{k'}
=
\sum_{r=1}^{C_k}\frac{n_0}{(C_k n_0)^2}
+
\sum_{r=1}^{C_{k'}}\frac{n_0}{(C_{k'} n_0)^2}
=
\frac{1}{n_0}
\big(C_k^{-1}+C_{k'}^{-1}\big).
\]

Applying the concentration inequality of \citet{janson2004large} yields
\begin{align}
\label{eq:eq20-1}
p_{kk'}
\le
\exp\!\left(
\frac{-2 n_0\Delta_{kk'}^2}
{\chi_f(G_{kk'})\,(C_k^{-1}+C_{k'}^{-1})}
\right).
\end{align}

Substituting this bound into the risk function
\[
R(\delta_{\mathrm{ES}},\theta)
=
\sum_k^K\sum_{k<k'}
\Delta_{kk'}\,p_{kk'}
\]
yields \eqref{eq:bound under 2srsd}. In contrast to Part~(i), the scale factor
\[
A_k+A_{k'}
=
\frac{1}{n_0}
\big(C_k^{-1}+C_{k'}^{-1}\big)
\]
is exact rather than bounded by an envelope. Thus, the only remaining quantity to evaluate in the exponent is the fractional chromatic number $\chi_f(G_{kk'})$, which is computed in Corollary~\ref{cor:unit-level-bounds}.

\medskip
\noindent\textit{Part (iii): cluster-level bound.}\;
Aggregating within clusters, the same contrast can be written as a cluster-indexed sum,
\begin{align*}
D_{kk'}
&=
\sum_{r=1}^{C_k}\frac{\widehat{\bar Y}_{\ell_r}}{C_k}
-
\sum_{r=1}^{C_{k'}}\frac{\widehat{\bar Y}_{\ell'_r}}{C_{k'}}
\\
&=
\sum_{r=1}^{C_k} Z_r
+
\sum_{r=1}^{C_{k'}} Z'_r,
\qquad
Z_r:=\frac{\widehat{\bar Y}_{\ell_r}}{C_k},
\qquad
Z'_r:=-\frac{\widehat{\bar Y}_{\ell'_r}}{C_{k'}}.
\end{align*}

Recall that the pairwise ranking error probability is
\[
p_{kk'}
=
\begin{cases}
\Pr\!\left(
D_{kk'}-\mathbbm E[D_{kk'}]\ge \Delta_{kk'}
\right),
&
U(\pi_{k'},\theta)>U(\pi_k,\theta),
\\[1ex]
\Pr\!\left(
(-D_{kk'})-\mathbbm E[-D_{kk'}]\ge \Delta_{kk'}
\right),
&
U(\pi_k,\theta)>U(\pi_{k'},\theta),
\end{cases}
\]
where
\[
\Delta_{kk'}
=
|U(\pi_k,\theta)-U(\pi_{k'},\theta)|.
\]

Since \(0\le Y_{\ell j}\le 1\), the cluster mean satisfies
\[
0\le \widehat{\bar Y}_{\ell}\le 1
\]
for every cluster \(\ell\), regardless of cluster size. Consequently,
\[
Z_r\in[0,C_k^{-1}],
\qquad
Z'_r\in[-C_{k'}^{-1},0],
\]
so the ranges have lengths \(1/C_k\) and \(1/C_{k'}\), respectively. Hence the squared \(\ell_2\)-norm of the ranges is

\[
\sum_{r=1}^{C_k}\left(\frac{1}{C_k}\right)^2
+
\sum_{r=1}^{C_{k'}}\left(\frac{1}{C_{k'}}\right)^2
=
C_k^{-1}+C_{k'}^{-1}.
\]

In either welfare ordering, the relevant centered contrast is a sum of bounded
cluster-level random variables with these ranges, possibly after multiplying
by \(-1\). Applying the concentration inequality of \citet{janson2004large}
therefore yields

\begin{align}
\label{eq:cl20-1}
p_{kk'}
\le
\exp\!\left(
\frac{-2\Delta_{kk'}^2}
{\chi_f(G^{\mathrm{cls}}_{kk'})\,
(C_k^{-1}+C_{k'}^{-1})}
\right).
\end{align}

Substituting this bound into the risk decomposition

\[
R(\delta_{\mathrm{ES}},\theta)
=
\sum_k^K\sum_{k<k'}
\Delta_{kk'}\,p_{kk'}
\]
yields \eqref{eq:bound-cond-cluster}. No restriction on cluster sizes is
required in this part. The fractional chromatic number
\(\chi_f(G^{\mathrm{cls}}_{kk'})\) is evaluated in
Corollary~\ref{cor:unit-level-bounds}. \qedsymbol
\end{proof}

\paragraph{Proof of Corollary~\ref{cor:unit-level-bounds}}
\begin{proof}
The fractional chromatic number of a complete graph on $m$ vertices equals $m$
\citep[Proposition~3.1.1]{scheinerman2011fractional}; see also Appendix~\ref{app:graph-coloring}. We apply this to the three
complete dependency graphs identified in the proof of Theorem~\ref{thm:bounded bounds}.

\emph{Part (i).} The unit-level graph $G_{kk'}$ is complete on the $\sum_{r=1}^{C_k}n_{\ell_r}+\sum_{r=1}^{C_{k'}}n_{\ell'_r}$
units of the two arms, so $\chi_f(G_{kk'})=\sum_{r=1}^{C_k}n_{\ell_r}+\sum_{r=1}^{C_{k'}}n_{\ell'_r}\le (C_k+C_{k'})\overline{n}$
under Assumption~\ref{assump:bounded-n}. Substituting this envelope into \eqref{eq:bound-unit-bounded} and using
$(C_k+C_{k'})(C_k^{-1}+C_{k'}^{-1})=(C_k+C_{k'})^2/(C_kC_{k'})$ gives \eqref{eq:bound-unit-bounded 2}.

\emph{Part (ii).} Under Assumption~\ref{assump:equal-n} the graph $G_{kk'}$ is complete on $(C_k+C_{k'})n_0$ units, so
$\chi_f(G_{kk'})=(C_k+C_{k'})n_0$. Substituting into \eqref{eq:bound under 2srsd}, the factor $n_0$ cancels:
\[
\frac{2 n_0\Delta_{kk'}^2}{(C_k+C_{k'})n_0\,(C_k^{-1}+C_{k'}^{-1})}
=\frac{2\Delta_{kk'}^2}{(C_k+C_{k'})(C_k^{-1}+C_{k'}^{-1})}
=2\Delta_{kk'}^2\,\frac{C_kC_{k'}}{(C_k+C_{k'})^2},
\]
which gives \eqref{eq:bound-cond-cluster-equal}.

\emph{Part (iii).} The cluster-level graph $G^{\mathrm{cls}}_{kk'}$ is complete on $C_k+C_{k'}$ clusters, so
$\chi_f(G^{\mathrm{cls}}_{kk'})=C_k+C_{k'}$. Substituting into \eqref{eq:bound-cond-cluster} and using the same identity
yields \eqref{eq:bound-cond-cluster 2}.

Comparing the three displays, \eqref{eq:bound-cond-cluster-equal} and \eqref{eq:bound-cond-cluster 2} are identical, so the
equal-size unit-level bound coincides with the cluster-level bound, while \eqref{eq:bound-unit-bounded 2} carries the
additional factor $\underline{n}/\overline{n}\le 1$ in the exponent and is therefore no tighter. $\qedsymbol$
\end{proof}

\paragraph{Proof of Theorem \ref{thm:first bounds 2}}
\begin{proof}

The proof proceeds by reducing the regret to pairwise ranking-error probabilities and then bounding those probabilities using
\citet{janson2004large}'s concentration inequality.

For each pair $k<k'$, define
\[
D_{kk'}
:=
\widetilde U(\pi_k,\theta)
-
\widetilde U(\pi_{k'},\theta).
\]
By Lemma~\ref{lem:bernoulli-id},
\[
\mathbbm E_\theta[\widetilde U(\pi_k,\theta)]
=
U(\pi_k,\theta),
\]
and therefore
\[
\mathbbm E_\theta[D_{kk'}]
=
U(\pi_k,\theta)-U(\pi_{k'},\theta).
\]

The pairwise contribution to the risk is the welfare gap multiplied by the
probability that the empirical rule selects the lower-welfare saturation.
Thus,
\[
R(\delta_{\mathrm{ES}},\theta)
=
\sum_{k=1}^{K}\sum_{k'>k}
\Delta_{kk'}\,p_{kk'},
\]
where
\[
p_{kk'}
=
\begin{cases}
\Pr\!\left(
D_{kk'}-\mathbbm E_\theta[D_{kk'}]\ge \Delta_{kk'}
\right),
&
\text{if } U(\pi_{k'},\theta)>U(\pi_k,\theta),
\\[2ex]
\Pr\!\left(
(-D_{kk'})-\mathbbm E_\theta[-D_{kk'}]\ge \Delta_{kk'}
\right),
&
\text{if } U(\pi_k,\theta)>U(\pi_{k'},\theta).
\end{cases}
\]
Hence, in either welfare ordering, the relevant error probability is an
upper-tail probability for a centered sum at threshold $\Delta_{kk'}$.

We now bound this probability. Using
\eqref{estimator:: mean potential outcome 2}, the contrast can be written as
\[
D_{kk'}
=
\sum_{i=1}^{C}\sum_{j=1}^{n_0}
\frac{\mathbbm 1\{S_i=\pi_k\}}{C p_k n_0}
\frac{\gamma(\pi_k,\mathbf Z_i)}
{\beta(\pi_k,\mathbf Z_i)}
Y_{ij}
-
\sum_{i=1}^{C}\sum_{j=1}^{n_0}
\frac{\mathbbm 1\{S_i=\pi_{k'}\}}{C p_{k'} n_0}
\frac{\gamma(\pi_{k'},\mathbf Z_i)}
{\beta(\pi_{k'},\mathbf Z_i)}
Y_{ij}.
\]
Since $0\le Y_{ij}\le 1$ and
\[
\frac{\gamma(\pi_k,\mathbf z_i)}
{\beta(\pi_k,\mathbf z_i)}
\le
\Gamma_k,
\]
each unit-level term associated with saturation $\pi_k$ has range bounded by
\[
\left[
0,
\frac{\Gamma_k}{C p_k n_0}
\right],
\]
and each unit-level term associated with saturation $\pi_{k'}$ has range bounded by
\[
\left[
-\frac{\Gamma_{k'}}{C p_{k'} n_0},
0
\right].
\]
Multiplying the contrast by $-1$ only reverses the signs of these intervals and
therefore leaves their lengths unchanged.

The squared $\ell_2$-norm of the unit-level ranges associated with arm $k$ is
\[
A_k^B
=
C n_0
\left(
\frac{\Gamma_k}{C p_k n_0}
\right)^2
=
\frac{\Gamma_k^2}{C p_k^2 n_0}.
\]
Similarly,
\[
A_{k'}^B
=
\frac{\Gamma_{k'}^2}{C p_{k'}^2 n_0}.
\]
Therefore, applying the concentration inequality of
\citet{janson2004large} to the relevant centered contrast, with dependency
graph $G_{kk'}$, yields
\[
p_{kk'}
\le
\exp\!\left(
\frac{
-2\Delta_{kk'}^2
}
{
\chi_f(G_{kk'})
(A_k^B+A_{k'}^B)
}
\right).
\]

Substituting this bound into the pairwise risk decomposition gives
\[
R(\delta_{\mathrm{ES}},\theta)
\le
\sum_{k=1}^{K}
\sum_{k'>k}
\exp\!\left(
\frac{
-2\Delta_{kk'}^2
}
{
\chi_f(G_{kk'})
(A_k^B+A_{k'}^B)
}
\right)
\Delta_{kk'}.
\]
Finally, $R(\delta_{\mathrm{ES}},\theta)\ge 0$ by definition of the regret
loss. This proves the result.

\end{proof}

\paragraph{Proof of Theorem \ref{thm:quasi-optimal design}}
\begin{proof}
Consider the objective function
\[
G(\boldsymbol{\alpha})
=
\sum_{1\le k<k'\le K}
\frac{\alpha_k+\alpha_{k'}}{\sqrt{\alpha_k\alpha_{k'}}},
\qquad
\boldsymbol{\alpha}\in\Delta_K,
\]
where $\Delta_K=\{\boldsymbol{\alpha}\in\mathbbm{R}^K_+:\sum_{k=1}^K\alpha_k=1\}$, obtained by dropping the positive constant
$\tfrac12 e^{-1/2}$ from the envelope bound \eqref{eq:ub-of-penalty}. Since $G\to+\infty$ as any $\alpha_k\to 0^+$, it
suffices to restrict attention to the open simplex
$\Delta_K^\circ:=\{\boldsymbol{\alpha}\in\Delta_K:\alpha_k>0,\,k=1,\dots,K\}$.

\medskip
\noindent\textit{Step 1: Each summand is bounded below by $2$.}
Write each summand as
\[
g(\alpha_k,\alpha_{k'}):=\frac{\alpha_k+\alpha_{k'}}{\sqrt{\alpha_k\alpha_{k'}}}
=\sqrt{\frac{\alpha_k}{\alpha_{k'}}}+\sqrt{\frac{\alpha_{k'}}{\alpha_k}}.
\]
By the arithmetic--geometric mean inequality, $\alpha_k+\alpha_{k'}\ge 2\sqrt{\alpha_k\alpha_{k'}}$, so
$g(\alpha_k,\alpha_{k'})\ge 2$, with equality if and only if $\alpha_k=\alpha_{k'}$.

\medskip
\noindent\textit{Step 2: The minimizer is balanced.}
Summing Step~1 over the $\binom{K}{2}$ pairs gives
\[
G(\boldsymbol{\alpha})\;\ge\;2\binom{K}{2}\;=\;K(K-1),
\]
with equality if and only if $\alpha_k=\alpha_{k'}$ for every pair $(k,k')$, i.e.\ $\alpha_1=\cdots=\alpha_K$. The simplex
constraint then forces $\alpha_k=1/K$ for every $k$. Hence the balanced allocation $\boldsymbol{\alpha}^\star=(1/K,\dots,1/K)$
is the unique minimizer of $G$ on $\Delta_K^\circ$, and the minimum value is
\[
G(\boldsymbol{\alpha}^\star)=K(K-1).
\]
Recalling the dropped constant, the corresponding value of the envelope bound \eqref{eq:ub-of-penalty} is
$\tfrac12 e^{-1/2}K(K-1)=\binom{K}{2}e^{-1/2}$, which is independent of the number of clusters $C$.
\end{proof}

\paragraph{Proof of Proposition \ref{prop::admissibility}}
\begin{proof}
We establish the result by mapping the limiting Gaussian
experiment into the framework of
\cite[][Theorem~4.1]{cohen2005decision} and verifying that all conditions are satisfied.

\medskip
\noindent\textit{Step 1: Mapping to the Cohen--Sackrowitz
framework.}
Define the $T$-variate random vector $Z = (\nabla_{\theta} g)\,\Delta$,
where $\Delta \sim N(h,\, I_0^{-1})$. Then $Z \sim N(\mu,\, \Sigma)$ with
$\mu = (\nabla_{\theta} g)\,h$ and
$\Sigma = (\nabla_{\theta} g)\,
I_0^{-1}\,(\nabla_{\theta} g)^{\top}$.
The hypotheses in the limiting experiment are
$H_t\colon \mu_t = (\nabla_{\theta} g_t)\,h = 0$
vs.\ $K_t\colon \mu_t > 0$ for $t = 1,\ldots,T$,
which coincides with the one-sided multiple endpoints
problem in \cite{cohen2005decision} with $k = T$.
Since $\mu = (\nabla_{\theta} g)\,h$ ranges over a
linear subspace of $\mathbbm{R}^T$ as $h$ varies over
$\mathbbm{R}^d$, and admissibility over $\mathbbm{R}^T$
implies admissibility over any subspace, the result of
Theorem~4.1 applies a fortiori.

\medskip
\noindent\textit{Step 2: Correspondence of loss functions.}
Setting $v_t = \mathbbm{1}((\nabla_{\theta}
g_t)\,h > 0)$ and $a_t = \delta^t$, the component loss
can be written as
\begin{align*}
  L^{H}_{t,\infty}(\delta^t, h)
  &= 
    a_t(1 - v_t)
    + q\,(1-a_t)\,v_t.
\end{align*}
This is exactly the loss function of \cite{cohen2005decision} with parameter $b = q$.

\medskip
\noindent\textit{Step 3: Verification of conditions and
application of Theorem~4.1.} By assumption, $\Sigma$ is intraclass with parameters
$(\sigma^2, \rho)$ satisfying both the positive-definiteness condition
$\rho \geq -1/(T-1)$ and $\rho \leq 1$. The ranking rule $\delta_{\kappa}(\Delta)$ rejects $H_t$ if and only if $(\nabla_{\theta} g_t)\,\Delta / \sigma > \kappa_t$, i.e., it is a single-step procedure in the sense of \cite{cohen2005decision}. The condition $\rho \geq -1/q$. Together, the binding constraint is $\rho \geq \max\{-1/(T-1),\, -1/q\}$. All hypotheses of Theorem~4.1 in \cite{cohen2005decision} are therefore satisfied, and we conclude that $\delta_{\kappa}(\Delta)$ is admissible in the limiting Gaussian model.
\end{proof}

\paragraph{Proof of Proposition \ref{prop::Bayes}}
\begin{proof}
As in the proof of Proposition~\ref{prop::admissibility},
define $Z = (\nabla_{\theta} g)\,\Delta$ so that
$Z \sim N(\mu, \Sigma)$ with
$\mu = (\nabla_{\theta} g)\,h$ and
$\Sigma = (\nabla_{\theta} g)\,
I_0^{-1}\,(\nabla_{\theta} g)^{\top}$.

\medskip
\noindent\textit{Step 1: Specialization to the independence
case.}
By assumption, $\Sigma = (\nabla_\theta g)\, I_0^{-1}\, (\nabla_\theta g)^\top = I_T$ is nonsingular, so $\nabla_\theta g$
has full row rank $T$. Consequently, $\mu = (\nabla_\theta g)\, h$ ranges over all of $\mathbbm{R}^T$ as $h$ varies over
$\mathbbm{R}^d$, and the subspace restriction that arises in the proof of Proposition~\ref{prop::admissibility} does not
bind here. The hypotheses are $H_t\colon \mu_t = 0$ vs.\ $K_t\colon \mu_t > 0$ for $t = 1,\ldots,T$, and the ranking rule
$\delta_{\kappa}(\Delta)$ reduces to a single-step procedure that rejects $H_t$ when $Z_t > \kappa_t$.

\medskip
\noindent\textit{Step 2: Loss correspondence.}
By the calculation in Step~2 of the proof of Proposition~\ref{prop::admissibility}, the additively separable hypothesis testing loss coincides  with the
\cite{cohen2005decision} loss.

\medskip
\noindent\textit{Step 3: Application of Theorem~4.3.}
Since $\Sigma = I_T$, Theorem~4.3 of
\cite{cohen2005decision} applies directly and establishes
that the single-step procedure is proper Bayes. That is,
there exists a prior distribution $\xi$ on
$\mu \in \mathbbm{R}^T$ such that
$\delta_{\kappa}(\Delta)$ minimizes the integrated risk
$\int R(\delta, \mu)\,\mathrm{d}\xi(\mu)$
over all decision rules $\delta$. Hence
$\delta_{\kappa}(\cdot)$ is proper Bayes in the limiting
Gaussian model.
\end{proof}

\paragraph{Proof of Theorem \ref{thm:slicing}}
\begin{proof}
(i) Since the risk function is additively separable, we have
\eqs{
  R_{\infty}(\tilde{\delta}(\Delta), h) - R_{\infty}({\delta}_c(\Delta), h)
    & = \sum_{t=1}^T \left( R_{t,\infty}(\tilde{\delta}_t(\Delta), h) - R_{t,\infty}({\delta}_{c_t}(\Delta), h) \right).
} Thus, it is sufficient to show that
\eq{
R_{t,\infty}(\tilde{\delta}_t(\Delta), h) - R_{t,\infty}({\delta}_{c_t}(\Delta), h) \ge 0~~\mbox{for all $t$.} \label{eq:main equation of the admissibility thm}
}

Recall that $\nabla_{\theta} g_t$ is the $t$-th row of the $(T \times d)$ Jacobian matrix $\nabla_{\theta} g$. Since $\nabla_{\theta} g_t \Delta \sim N(0, \nabla_{\theta} g_t I_0^{-1} \nabla_{\theta} g_t^\top)$ under $h=h_0$, we can compute $E_{h_0}[\delta_{c_t}(\Delta)] = 1 - \Phi\lt(c_t/\sqrt{\nabla_{\theta} g_t I_0^{-1} \nabla_{\theta} g_t^\top}\rt)$. For any given $\tilde{\delta}_t(\Delta)$, we can set $c_t$ such that $E_{h_0}[\delta_{c_t}(\Delta)]=E_{h_0}[\tilde{\delta}_{t}(\Delta)]$.

Take some $b_t>0$  and consider the simple hypotheses test between $H_{0t}:h=h_0$ and $H_{1t}: h=h_t(b_t,h_0)=h_0 +  (b_t\cdot I_0^{-1} (\nabla_{\theta} g_t)^{\top}/(\nabla_{\theta} g_t) I_0^{-1} (\nabla_{\theta} g_t)^{\top})$ based on $\Delta$. Note that $\nabla_{\theta} g_th_t=b_t>0.$ The Neyman-Pearson lemma implies that the most powerful test rejects $H_{0t}$ for large values of
\eqs{
  \log\frac{dN(h_1,I_0^{-1})}{dN(h_0,I_0^{-1})}(\Delta)=\frac{b_t}{\nabla_{\theta} g_tI_{0}^{-1}\nabla_{\theta} g_t'}\nabla_{\theta} g_t\Delta-\frac{1}{2}\frac{b_t^{2}}{\nabla_{\theta} g_tI_{0}^{-1}\nabla_{\theta} g_t'},
}which is equivalent to large values of $\nabla_{\theta} g_t\Delta$. Therefore, we have $E_{h}[{\delta}_{c_t}(\Delta)] \ge E_{h}[\tilde{\delta}_t(\Delta)]$, which holds for all $h \in \{h_t(b_t,h_0): b_t>0\}$. Similarly, we can show that $1-E_{h}[{\delta}_{c_t}(\Delta)] \ge 1-E_{h}[\tilde{\delta}_t(\Delta)]$ for all $h \in \{h_t(b_t,h_0): b_t<0\}$, which is equivalent to $E_{h}[{\delta}_{c_t}(\Delta)] \le E_{h}[\tilde{\delta}_t(\Delta)]$. Since $R_{t,\infty}(\tilde{\delta}_t(\Delta), h) - R_{t,\infty}({\delta}_{c_t}(\Delta), h)=(L_{t,\infty}(1,h)-L_{t,\infty}(0,h))(E_h[\tilde{\delta}_t(\Delta)]-E_h[{\delta}_{c_t}(\Delta)])$, condition \eqref{eq:right_loss}, we conclude that 
$R_{t,\infty}(\tilde{\delta}_t(\Delta), h) \geq R_{t,\infty}({\delta}_{c_t}(\Delta), h)$ for all $h\in S_t(h_0).$ This inequality holds for all $t=1, \dots T.$ 

Thus, the inequality holds for the subspace $\cap_{t=1}^TS_t(h_0)$ which equals $\{h_1(b,h_0): b\in \mathbbm{R}\}$ under Assumption \ref{assum: rank 1} as required.

\vspace{1cm}
(ii) Let $R^*:=\inf_{\boldsymbol{\delta} \in \mathcal{D}_{\infty}}\sup_h R_{\infty}(\boldsymbol{\delta} ,h)$  and $\boldsymbol{\delta} ^*= (\delta_1^*, \dots, \delta^*_T)$ be a solution so that
\\$\sup_h\sum_{t=1}^T R_{t,\infty}({\delta}_t^*, h) =R^*$. Then, by part (i), there exist $c^*=(c_1^*, \dots, c^*_T)$ such that 
\begin{align}\label{eqn:risk inequality 1}
    \sum_{t=1}^T R_{t,\infty}({\delta}_t^*(\Delta), h) \geq \sum_{t=1}^T R_{t,\infty}({\delta}_{c^*_t}(\Delta), h)\,\,  \text{for all}\,\, h \in \{h_1(b,0): b\in \mathbbm{R}\}.
\end{align}

Note that for each $t=1, \dots, T,$ $\nabla_{\theta} g_t\Delta\sim N(\lambda_tb,\lambda_t^2(\nabla_{\theta} g_1) I_0^{-1} (\nabla_{\theta} g_1)^{\top}) $ under $h=h_1(b, h_0).$ Hence, $\mathbbm{E}_{h_1(b, h_0)}[\delta_{c^*_t}(\Delta)]=\mathbbm{E}_{h_1(b, 0)}[\delta_{c^*_t}(\Delta)]$ for all $h_0$ with $(\nabla_{\theta} g) h_0 = 0$. Also, since $L_{t,\infty}(a_t,h)$ depends on $h$ only through $ (\nabla_{\theta} g_t) h$ for all $t$, and $\lambda_tb=\nabla_{\theta} g_t h_1(b, h_0)= \nabla_{\theta} g_th_1(b,0)$ for all $t,$ we have 
\begin{align}\label{eqn:risk inequality 2}
    \sum_{t=1}^T R_{t,\infty}({\delta}_{c_t^*}(\Delta), h_1(b, h_0)) = \sum_{t=1}^T R_{t,\infty}({\delta}_{c_t^*}(\Delta), h_1(b, 0)).
\end{align}

Then,
\eqs{
  R^* &=\sup_h\sum_{t=1}^T R_{t,\infty}({\delta}_t^*, h)\\
   & \ge  \sup_{b} \sum_{t=1}^T R_{t,\infty}(\delta_{t}^*, h_1(b,0)) \\
        & \ge \sup_{b} \sum_{t=1}^T R_{t,\infty}(\delta_{c_t^*}, h_1(b,0))  \\
        & = \sup_{h_0}\sup_{b} \sum_{t=1}^T R_{t,\infty}(\delta_{c_t^*}, h_1(b,h_0)) \\
        &= \sup_h\sum_{t=1}^T R_{t,\infty}(\delta_{c_t^*}, h) \\
        & \ge R^*.
}
The second line holds because the supremum is taken over a slice
$\{h_1(b,0):b\in\mathbbm{R}\}$ of the parameter space rather than over all $h$. The third line follows from
\eqref{eqn:risk inequality 1}. The fourth line follows by \eqref{eqn:risk inequality 2} together with the fact that
$(\nabla_{\theta} g) h_0 = 0$. The fifth line holds because $h_0$ and $b$ jointly parametrize the parameter space for
any directional vector. The final inequality is the definition of $R^*$.

Since we have shown that $\sup_h\sum_{t=1}^T R_{t,\infty}(\delta_{c_t^*}, h) = R^*$, we can compute $c^*$ by solving
\[
\inf_{(c_1,\ldots,c_T)} \sup_{b}\sum_{t=1}^T R_{t,\infty}(\delta_{c_t}, h_1(b,0)).
\]

(iii) Since $$\sup_{b}\sum_{t=1}^T R_{t,\infty}(\delta_{c_t}, h_1(b,0))\leq  \sum_{t=1}^T\sup_{b} R_{t,\infty}(\delta_{c_t}, h_1(b,0)),$$ from the proof of part (ii) and under the conservative objective function, we can compute $c^*$ by solving $$\sum_{t=1}^T  \inf_{c_t}\sup_{b} R_{t,\infty}(\delta_{c_t}, h_1(b,0)).$$
\hfill 
\end{proof}

\paragraph{Proof of Theorem \ref{thm:asymp_opt_parametric}:} We focus on the proof of part (iii). The proof is composed of multiple steps. The proofs of parts (i) and (ii) follow analogous steps.

\noindent\textbf{Step 1: }Let $\delta^R$ be a $T\times 1$ vector of decision rules whose $t$-th element is defined as $\delta^R_t:= \mathbbm{1}( (\nabla_\theta  g_t) \Delta / \sigma_{g_t} > 0)$. We show that $\delta^R$ is the solution to the conservative problem of the limit experiment.

From the results of Theorem \ref{thm:slicing}, we can find the minimax rule of the limit experiment by solving the cutoff point $c^*_t$ along a slice of $h_t(b_t,0)$ for $t=1,\ldots,T$. Recall that the asymptotic risk function is a linear combination of $R_{t,\infty}$. Thus, we will focus on the following optimization problem:
\eqs{
  \inf_{c_t} \sup_h  R_{t,\infty}(\delta_{t,c_t}^R, h).
}Let $\tilde{b}_t = (\nabla_\theta  g_t) \Delta / \sigma_{g_t}$ denote the standardized Gaussian statistic in the limit
experiment (distinct from the scalar slice parameter $b_t$ used in Theorem~\ref{thm:slicing}). Then, Lemma 5 in
\citet{hirano2009asymptotics} implies that the unique solution $c^*_t$ to the optimization problem satisfies
\eqs{
  \sup_{\tilde{b}_t \le 0} (-\tilde{b}_t \Phi(\tilde{b}_t - c^*_t)) = \sup_{\tilde{b}_t > 0} \tilde{b}_t \Phi(c^*_t - \tilde{b}_t),
}where $\Phi(\cdot)$ is the cdf of the standard normal distribution.
Since both sides have the symmetric structure, we can conclude that $c^*_t = 0$.

\noindent\textbf{Step 2: }For any sequence of rules $\delta_{n}$ and the matching rule $\delta$, we show that
\eqs{
  \lim_{n \to \infty} \sqrt{n} R(\delta_{n}, \theta_0 + h/\sqrt{n}) = R_{\infty} (\delta,h).
}
Recall that
\eqs{
  R_{\infty} (\delta,h) := \sum_{t=1}^T R_{t,\infty}(\delta^{t},h),
}where
\eqs{
  R_{t,\infty}(\delta^{t},h)  & := \int L_{t,\infty} (\delta^{t}(\Delta),h) dN(\Delta|h, I_0^{-1}), \\
  L_{t,\infty} (\delta^{t},h) & := (\nabla_\theta  g_t) h \left[ \mathbbm{1}\left((\nabla_\theta  g_t) h>0\right) - \delta^{t}  \right].
}Note that
\eqs{
  \lim_{n \to \infty} \sqrt{n}R_t(\delta_{t,n},\theta_0+h/\sqrt{n})
  & = \lim_{n \to \infty} \int \sqrt{n}L_{t}(\delta_{t,n}(\omega),\theta_0+h/\sqrt{n}) dQ^n_{\theta_0+h/\sqrt{n}} \\
  & = \int (\nabla_\theta  g_t) h \left[ \mathbbm{1}\left((\nabla_\theta  g_t) h>0\right) - \delta^{t}(\Delta)  \right] dN(\Delta|h, I_0^{-1}) \\
  & = R_{t,\infty}(\delta_{t},h).
}Then, the claim is established by the fact that both risk functions $R$ and $R_{\infty}$ are additively separable.

\noindent\textbf{Step 3: }We show that $\delta^R_n$ is matched by $\delta^R$ in the limit experiment.

We can utilize the additive separability property again. Thus, it is enough to show that $\delta^R_{t,n}$ is matched by $\delta^R_{t}$ in the limit experiment. Recall that
\eqs{
  \delta^R_{t,n} =  \mathbbm{1}\lt( \sqrt{n} \frac{ g_t(\hat{\tht}_{n}) }{\hat{\sigma}_{g_t}} > 0\rt)
  ~\mbox{ and }~
  \delta^R_t     =  \mathbbm{1}\lt( \frac{(\nabla_\theta  g_t) \Delta}{\sigma_{g_t}} > 0\rt).
}Let $S_n$ be a sequence of random variables such that $S_n \stackrel{\theta_0}{\rightsquigarrow} N(0,I_0)$. Since $\hat{\theta}_n$ is best regular and $P_{\theta}$ is differentiable in quadratic mean, we have
\eqs{
  & \sqrt{n}(\hat{\theta}_n-\theta_0) = I_0^{-1}S_n + o_{Q_{\theta_0}}(1), \\
  & \log \frac{dQ^n_{\tht_0+h_n/\sqrt{n}}}{dQ^n_{\tht_0}} = h^{\top}S_n - \frac{1}{2}h^{\top}I_0h + o_{Q_{\theta_0}}(1).
}for $h_n\to h$. Expanding $g_t(\hat{\tht})$ around $\tht_0$ and applying Slutsky's theorem and the delta method, we have
\eqs{
  \lt(\sqrt{n} \frac{g_t(\hat{\tht}_{n})}{\hat{\sigma}_{g_t}}, \log \frac{dQ^n_{\tht_0+h_n/\sqrt{n}}}{dQ^n_{\tht_0}} \rt)
  \stackrel{\theta_0}{\rightsquigarrow}
  N\lt(
  \begin{pmatrix}
  0 \\
  -\frac{1}{2}h^{\top}I_0 h
  \end{pmatrix},
  \begin{pmatrix}
  1 & \frac{(\nabla_\theta  g_t) h}{\sigma_{g_t}} \\
  \frac{(\nabla_\theta  g_t) h}{\sigma_{g_t}} & h^{\top}I_0 h
  \end{pmatrix}
  \rt).
}Applying Le Cam's third lemma, we conclude that
\eqs{
  \sqrt{n} \frac{g_t(\hat{\tht}_{n})}{\hat{\sigma}_{g_t}}
  \stackrel{h}{\rightsquigarrow}
  N\lt(
  \frac{(\nabla_\theta  g_t) h}{\sigma_{g_t}},
  1
  \rt),
}which establishes the claim.

Finally, the theorem is established by applying Lemma 4 in \citet{hirano2009asymptotics} of which requirements are shown in Steps 2--3 above.
\hfill $\square$

\end{document}